\title{Real numbers equally compressible in every base}
\author[1]{Satyadev Nandakumar}
\author[1]{Subin Pulari}
\affil[1]{
  Department of Computer Science and Engineering\\
  Indian Institute of Technology Kanpur,
  Kanpur, Uttar Pradesh, India.
}
\affil[]{\{\textit{satyadev,subinp}\}@cse.iitk.ac.in}
\newcommand{\PROOF}{\begin{proof}}
\newcommand{\QED}{\end{proof}}
\newcommand{\N}{\mathbb{N}}
\newcommand{\Z}{\mathbb{Z}}
\newcommand{\Q}{\mathbb{Q}}
\newcommand{\R}{\mathbb{R}}
\renewcommand{\dim}{{\mathrm{dim}}}
\def\<{\left\langle}
\def\>{\right\rangle}
\theoremstyle{plain}
\newtheorem{theorem}{Theorem}[section]
\newtheorem{lemma}[theorem]{Lemma}
\newtheorem{corollary}[theorem]{Corollary}
\theoremstyle{definition}
\newtheorem{definition}[theorem]{Definition}
\begin{document}

\maketitle
\begin{abstract}
This work solves an open question in finite-state compressibility
posed by Lutz and Mayordomo \cite{LutzMayordomo2021} about
compressibility of real numbers in different bases.

Finite-state compressibility, or equivalently, finite-state dimension,
quantifies the asymptotic \emph{lower density of information} in an
infinite sequence.

Absolutely normal numbers, being finite-state incompressible in every
base of expansion, are precisely those numbers which have finite-state
dimension equal to $1$ in every base. At the other extreme, for
example, every rational number has finite-state dimension equal to $0$
in every base.

Generalizing this, Lutz and Mayordomo in \cite{LutzMayordomo2021}
posed the question: are there numbers which have absolute positive
finite-state dimension strictly between 0 and 1 - equivalently, is
there a real number $\xi$ and a compressibility ratio $s \in (0,1)$
such that for every base $b$, the compressibility ratio of the
base-$b$ expansion of $\xi$ is precisely $s$? It is conceivable that
there is no such number. Indeed, some works explore ``zero-one'' laws
for other feasible dimensions \cite{FortnowHitchcockPavanWang} -
\emph{i.e.}  sequences with certain properties either have feasible
dimension 0 or 1, taking no value strictly in between.

However, we answer the question of Lutz and Mayordomo affirmatively by
proving a more general result. We show that given any sequence of
rational numbers $\langle q_b \rangle_{b=2}^{\infty}$, we can
\emph{explicitly construct} a single number $\xi$ such that for any
base $b$, the finite-state dimension/compression ratio of $\xi$ in
base-$b$ is $q_b$. As a special case, this result implies the
existence of absolutely dimensioned numbers for \emph{any} given
rational dimension between $0$ and $1$, as posed by Lutz and
Mayordomo.

In our construction, we combine ideas from Wolfgang Schmidt's
construction of absolutely normal numbers from \cite{Schmidt62},
results regarding low discrepancy sequences and several new estimates
related to exponential sums.
\end{abstract}

\newpage
\tableofcontents

\newpage
\section{Introduction}
Finite-state compressibility is the lower asymptotic ratio of
compression achievable on an infinite string using
information-lossless finite-state compressors \cite{LZ78}
\cite{Sheinwald1994}. Finite-state dimension was originally defined by
Dai, Lathrop, Lutz and Mayordomo in \cite{dai2004finite} using
finite-state $s$-gales, as a finite-state analogue of Hausdorff
dimension \cite{LutzDimension2003}
\cite{athreya2007effective}. Surprisingly, these notions are
equivalent \cite{dai2004finite} \cite{bourke2005entropy}. They also
have several equivalent characterizations in terms of automatic
Kolmogorov complexity \cite{KozachinskiyAutomatic2021}, finite-state
predictors \cite{Hitchcock2003}, block-entropy rates
\cite{Hitchcock2003}, etc. establishing their mathematical robustness.

In this work, we solve an open question posed by Lutz and Mayordomo
\cite{LutzMayordomo2021} about the existence of numbers whose
finite-state compression ratio does not depend on the base of
expansion, where the compression ratio is neither 0 nor 1. We explain
the context behind this question below.

In \cite{SchnorrStimm72}, Schnorr and Stimm show the celebrated result
that finite-state incompressibility in base-$b$ is equivalent to Borel
normality in base-$b$, establishing a deep connection between
information theory and metric number theory. Absolutely normal numbers
\cite{Aistleitner2017} \cite{BecherHeiberSlaman2015}
\cite{LutzMayordomo2021} \cite{Schmidt60} \cite{Schmidt62}, being
finite-state incompressible in every base-$b$, are precisely the set
of numbers whose finite-state dimension is $1$ in every base. Almost
every number in $[0,1]$ is absolutely normal (see
\cite{Billingsley95}). Several explicit constructions absolutely
normal numbers are known \cite{Schmidt62} \cite{Aistleitner2017}
\cite{LutzMayordomo2021}. 

At the other extreme, there are numbers which have finite-state
dimension 0 in every base. For example, any rational number in $[0,1]$
has eventually periodic expansion in any base $b$ and hence,
finite-state dimension $0$ in any base of expansion.

But, no method for even proving the existence of numbers with absolute
finite-state dimension strictly between $0$ and $1$ are known at the
time of writing this paper. Lutz and Mayordomo in
\cite{LutzMayordomo2021} ask the following question:
\begin{quote}
	\emph{Does there exist a real number $\xi$ whose finite-state
        dimension/compressibility, $\dim^b_{FS}(\xi)$, does not depend
        on the choice of base $b$, and such that
        $0<\dim^b_{FS}(\xi)<1$?}
\end{quote}

In this work, we answer the question in the affirmative, by proving a
stronger result: \emph{Given any list $\langle q_b
\rangle_{b=1}^{\infty}$ of rationals in $(0,1]$ (respecting the
natural equivalence between bases $b$), there exists an explicitly
constructible number $\xi$ such that $\dim_{FS}^b(x)=q_b$ for any base
$b$}. In the special case when every $q_b$ is equal, this provides an
  affirmative answer (and an explicitly constructible example) to the
  above question posed by Lutz and Mayordomo.

We now state our main result. Two positive integer bases $r$ and $s$
are said to be equivalent, denoted $r \sim s$, if there are $m, n \in
\N$ such that $r^m = s^n$. (Formal definitions follow in section
\ref{sec:preliminaries}). If $r \sim s$, then we can verify that for
any $\xi \in [0,1]$, $\dim_{FS}^r(\xi)=\dim_{FS}^s(\xi)$. Our main
result is the following.
\begin{theorem}
\label{thm:maintheorem}
Let $\langle q_b \rangle_{b=1}^{\infty}$ be a sequence of rationals in
$(0,1]$ such that for any $r$ and $s$, if $r \sim s$, then $q_r =
  q_s$. Then, there exists a $\xi \in [0,1]$ such that for any base
  $b$, $\dim^b_{FS}(\xi)=q_b$.
\end{theorem}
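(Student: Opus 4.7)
The plan is to adapt Schmidt's nested-interval construction of absolutely normal numbers \cite{Schmidt62}, but targeting a prescribed \emph{non-uniform} digit distribution in each base rather than uniformity. For every base $b$, since $q_b \in (0,1]$, I would choose a probability vector $P_b$ on the digit alphabet $\{0,1,\ldots,b-1\}$ whose Shannon entropy satisfies $H(P_b) = q_b \log b$. The consistency hypothesis $r \sim s \Rightarrow q_r = q_s$ is compatible with this choice because the quantity $H(P)/\log b$ is invariant under forming $m$-fold product distributions (entropy is additive over independent factors), so one can pick compatible $P_b$ across the equivalence classes. The goal is then to construct a single $\xi \in [0,1]$ whose base-$b$ expansion is ``$P_b$-normal,'' meaning that the empirical frequency of every length-$\ell$ digit block converges to $P_b^{\otimes \ell}$. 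Granting this, the block-entropy characterization of finite-state dimension forces $\dim^b_{FS}(\xi) = H(P_b)/\log b = q_b$ in every base $b$.

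The construction proceeds by choosing a nested sequence of intervals $I_1 \supseteq I_2 \supseteq \cdots$ with $|I_k| \to 0$ and setting $\xi := \bigcap_k I_k$. At stage $k$, given $I_{k-1}$, I would divide $I_{k-1}$ into many short equal-length subintervals and select $I_k$ so that, for each of the finitely many bases $b \le k$ taken one per equivalence class, the base-$b$ expansion of every point of $I_k$ has empirical block frequencies within $\varepsilon_k$ of $P_b^{\otimes \ell}$ for all block-lengths $\ell \le \ell_k$. Driving $\varepsilon_k \to 0$ and $\ell_k \to \infty$ as $k \to \infty$ delivers the required $P_b$-normality in every base.

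The main obstacle is a quantitative \emph{discrepancy lemma} establishing that such an $I_k$ exists at each stage. The lemma should state that among subintervals of $I_{k-1}$ of the chosen length, all but an exponentially small fraction have, in each base $b \le k$, digit-block frequencies matching $P_b^{\otimes \ell}$ to within $\varepsilon_k$. For the uniform-target case this follows from Erd\H{o}s--Tur\'an--Koksma type inequalities for the orbit $(b^n x \bmod 1)$ together with standard exponential-sum bounds. In the present setting I would need a weighted analogue in which the equidistribution target is replaced by the pushforward of Lebesgue measure under a coding of the $P_b$-product measure, and the exponential-sum estimates must be uniform enough in the base $b$ to allow one new base (per equivalence class) to be introduced at each stage without destroying the bounds accumulated for earlier bases. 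This is exactly the ``new estimates related to exponential sums'' flagged in the abstract, and I expect this to be the technical heart of the argument.

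Finally, two routine verifications complete the proof. First, the stage-$k$ control forces the block frequencies of the base-$b$ expansion of $\xi$ to converge to $P_b^{\otimes \ell}$ for every fixed $\ell$ and every $b$, since every base is eventually handled and $\varepsilon_k \to 0$, $\ell_k \to \infty$. Second, $P_b$-normality gives convergent empirical $\ell$-block entropies equal to $\ell H(P_b)$, so the normalized limiting block-entropy rate equals $H(P_b)/\log b = q_b$; applying the equivalence of this rate with finite-state dimension yields $\dim^b_{FS}(\xi) = q_b$ as required.
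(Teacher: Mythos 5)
Your reduction of the theorem to constructing a single $\xi$ that is simultaneously ``$P_b$-normal'' (empirical block frequencies converging to $P_b^{\otimes\ell}$) in every base, for non-uniform $P_b$ with $H(P_b)=q_b\log b$, contains a fatal counting error in the step you identify as the technical heart. Your discrepancy lemma asserts that ``all but an exponentially small fraction'' of the candidate subintervals of $I_{k-1}$ have base-$b$ block frequencies within $\varepsilon_k$ of $P_b^{\otimes\ell}$. For a \emph{non-uniform} target this is false, and in fact the opposite holds: by the type-counting/AEP estimate, the number of length-$n$ strings over $\Sigma_b$ whose empirical distribution is $\varepsilon$-close to $P_b$ is roughly $b^{nq_b+o(n)}$, i.e.\ an exponentially \emph{small} fraction $b^{-n(1-q_b)+o(n)}$ of all $b^n$ strings. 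So the union bound over the bases $b\le k$ that your nested-interval selection needs does not close: the good sets for base $2$ and base $3$ are each exponentially thin and nothing guarantees they intersect. Relatedly, the Erd\H{o}s--Tur\'an--Koksma and Weyl-sum machinery you invoke certifies equidistribution with respect to \emph{Lebesgue} measure only; while you force the base-$b$ digits toward $P_b$, the base-$b'$ orbit of $\xi$ is dragged toward the $\times b'$-distribution of the $P_b$-Bernoulli measure, which there is no reason to identify with $P_{b'}$-Bernoulli. Whether points simultaneously generic for prescribed non-uniform Bernoulli measures in multiplicatively independent bases even exist runs into $\times 2$, $\times 3$-type difficulties, and your sketch offers no mechanism to resolve them.

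The paper sidesteps exactly this obstruction by \emph{not} targeting a fixed non-uniform frequency vector. Writing $q_b=e_b/d_b$ and working in base $b^{d_b}$, it makes the block entropies in base $v(k)$ \emph{oscillate}: in the first substage of stage $k$ the digits are drawn uniformly from a sub-alphabet of size $\lfloor v(k)^{q}\rfloor$ --- a uniform target, for which a positive fraction (at least half) of the sub-alphabet's strings are low-discrepancy, so the selection always succeeds --- driving the entropy down to $q_{r_k}$; in the second substage, and throughout all other stages, Schmidt's exponential-sum minimization pushes the entropies of all non-equivalent bases back up toward $1$. The dimension $q_b$ is then realized as a $\liminf$ of oscillating block entropies (with transition requirements preventing dips below $q_b-2^{-(k-1)}$), not as a limit of convergent frequencies. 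Your final entropy computation would be fine if simultaneous $P_b$-normality were available, but the construction you propose to obtain it cannot work as stated.
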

When $q_b=q$ for every base $b$, for some $q \in \Q \cap [0,1]$, we
have the following corollary.
\begin{corollary}
 \label{cor:absolutelydimensionednumbers}
 Let $q \in [0,1] \cap \Q$. Then, there exists a $\xi \in [0,1]$ such
 that $\dim^{b}_{FS}(\xi)=q$ for every base $b \geq 2$. 
\end{corollary}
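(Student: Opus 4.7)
The plan is essentially immediate: the corollary is the special case of Theorem \ref{thm:maintheorem} obtained by plugging in a constant sequence. Given $q \in [0,1] \cap \Q$, I would first dispose of the trivial boundary case $q = 0$: any rational $\xi \in [0,1]$ has an eventually periodic base-$b$ expansion for every $b$, hence $\dim^b_{FS}(\xi) = 0 = q$ for all $b \geq 2$, as already observed in the introduction.

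For $q \in (0,1] \cap \Q$, I would define the constant sequence $q_b = q$ for every $b \geq 2$. The compatibility hypothesis of Theorem \ref{thm:maintheorem}, namely that $r \sim s$ implies $q_r = q_s$, is then satisfied vacuously, since both sides equal $q$ regardless of whether $r$ and $s$ are equivalent. Applying the theorem yields an explicitly constructible $\xi \in [0,1]$ with $\dim^b_{FS}(\xi) = q_b = q$ for every base $b \geq 2$, which is exactly the claim.

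There is essentially no obstacle at the level of the corollary itself: the entire difficulty is absorbed into Theorem \ref{thm:maintheorem}. The hard part, as flagged in the abstract, is to engineer a single real number whose base-$b$ expansion has the prescribed compression ratio for every base simultaneously; this is where Schmidt's absolutely-normal-number construction, low-discrepancy sequence techniques, and the new exponential-sum estimates will do the real work. The corollary merely records the observation that the constant-sequence case of the main theorem answers the Lutz--Mayordomo question affirmatively, and in fact produces an explicit witness, for every rational dimension $q \in [0,1]$.
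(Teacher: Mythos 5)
Your proposal matches the paper's own argument exactly: the case $q=0$ is handled by any rational number, and the case $q\in(0,1]$ follows by applying Theorem \ref{thm:maintheorem} to the constant sequence $q_b=q$, whose compatibility condition under $\sim$ holds trivially. No gaps; this is precisely how the paper derives the corollary.
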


For $q \in (0,1]$, the existence of $\xi$ in the above corollary
  follows from Theorem \ref{thm:maintheorem}. For $q=0$, any rational
  number $\xi$ in $[0,1]$ satisfies the required
  conclusion. Therefore, this corollary provides a positive answer to
  the question posed by Lutz and Mayordomo in
  \cite{LutzMayordomo2021}. 

Explicit constructions of numbers with specific compressibility ratios
often use combinatorial techniques (for example, see
\cite{Cham33}). Combinatorial constructions are often simpler to
understand. However, in our work, we control multiple bases and
dimensions in each base. Since this implies working with different
alphabets simultaneously, a combinatorial approach to the solution is
not easy. We take an approach which involves exponential sums, which
allows us to handle the construction requirements successfully.

In this construction, we modify and combine techniques from Wolfgang
Schmidt's construction of absolutely normal numbers in
\cite{Schmidt62} along with results regarding low discrepancy
sequences \cite{GalGal64} \cite{Philipp74} and several new
estimates. Schmidt's method has been generalized to construct numbers
exhibiting specific kinds of normality (or non-normality) in different
bases (see for example \cite{BecherBugeaudSlaman2016}). Our
construction is a generalization of Schmidt's method which yields
numbers having prescribed rates of information in different bases.

The construction consists of multiple stages. In every stage we
arrange a \emph{controlled oscillation} of block entropies in a
particular base $b$ between $q_b$ and $1$, by fixing digits in base
$b$ using multiple \emph{low discrepancy} strings. Meanwhile we
\emph{stabilize} the entropies in other non-equivalent bases around
$1$, using generalizations of bounds from \cite{Schmidt62} and
estimates relating exponential averages to block entropies. We also
ensure that the block entropies in different bases do not \emph{fall
too much} during the transition between consecutive stages. This is
accomplished by using estimates involving low discrepancy strings and
lower bounds based on the concavity of the Shannon entropy function
(\cite{CovTho91}).

The rest of the paper is organized as follows. After the preliminaries
in section \ref{sec:preliminaries}, we outline the construction and
the \emph{requirements} that it should satisfy in section
\ref{sec:overview}. We also prove that if the requirements hold, then
Theorem \ref{thm:maintheorem} follows. In section
\ref{sec:technicallemmas}, we develop some important technical tools
required for the construction. In section \ref{sec:mainconstruction}
we describe our stage-wise construction in detail. Finally, in section
\ref{sec:verification}, we \emph{verify} that the construction in
section \ref{sec:mainconstruction} satisfies all the
\emph{requirements} given in section \ref{sec:overview}. For the convenience of the reader, we provide a table of notations and terminology in section \ref{sec:tableofnotationsandterminology}.

\section{Preliminaries}
\label{sec:preliminaries}
\subsection{Basic definitions and notation}
\label{subsec:basicdefinitions}
We use $\Sigma$ to denote any finite alphabet\label{text:sigmadefinition}. For any natural number
$b >1$, $\Sigma_b$ denotes the alphabet $\{0,1,2,\dots b-1\}$\label{text:sigmabdefinition}. Observe
that for any $b_1 \leq b_2$, we have $\Sigma_{b_1} \subseteq
\Sigma_{b_2}$.
For any finite alphabet $\Sigma$, we use $\Sigma^*$ to represent the
set of finite binary strings and $\Sigma^\infty$ to represent the set
of infinite sequences in alphabet $\Sigma$. We use capital letters
$X$, $Y$ to denote infinite strings in any finite alphabet
$\Sigma_b$. We use small letters $w$, $z$ to represent finite strings
in any finite alphabet $\Sigma_b$. For any $X=X_1X_2X_3\dots$ from
$\Sigma^\infty$, we use $X_1^n$\label{text:x1ndefinition} to denote the prefix $X_1X_2 \dots
X_n$ and for any $w \in \Sigma^*$, $w_1^n$\label{text:w1ndefinition} to represent $w_1w_2\dots
w_n$.

Greek letters $\alpha$, $\beta$ and $\gamma$ are used to represent
constants which are either absolute or dependent on some of the
parameters which are relevant in the context of their use. Small
letters $x$, $y$ and the Greek letter $\xi$ are used to denote general
real numbers in $[0,1]$. Small letters $b$, $r$ and $s$ are used for
representing integer bases of expansion greater than or equal to
$2$. Now, we define the occurrence (sliding) probabilities of finite
strings.

\begin{definition}
\label{def:occurrencecountandprobability}
 We define the \emph{occurrence count} of $z \in \Sigma^*$ in $w \in
 \Sigma^*$ denoted $N(z,w)$, as $N(z,w)=\lvert \{i \in [1,\lvert w
   \rvert-\lvert z \rvert+1] : w_i^{i+\lvert s \rvert-1} =
 z\}\rvert$. The \emph{occurrence probability} of $z$ in
 $w$ denoted $P(z,w)$, is defined as
 $P(z,w)=N(z,w)/(\lvert w \rvert-\lvert z \rvert+1)$.
\end{definition}

The following is the definition of the base-$b$ finite-state dimension
of a sequence $\xi$ using the block entropy characterization of
finite-state dimension given in \cite{bourke2005entropy}, which we use
in this work instead of the original definition using $s$-gales from
\cite{dai2004finite}.

\begin{definition}[\cite{dai2004finite}, \cite{bourke2005entropy}]
\label{def:finitestatedimension}
For a given base $b$ and a block length $l$, we define the \emph{$l$-length block
entropy} over $w \in \Sigma_b^*$ as follows.
\begin{align*}
H_l^b(w) = -(l \log(b))^{-1}\quad\sum\nolimits_{z \in \Sigma_b^l}\quad
P(z,w)\log P(z,w).
\end{align*}
Let $\xi \in [0,1]$ and let $X \in \Sigma_b^\infty$ represent the
base-$b$ expansion of $\xi$ (for numbers having two base-$b$
expansions, let $X$ denote any one of these). The base-$b$
\emph{finite-state dimension} of $\xi$, denoted $\dim^b_{FS}(\xi)$, is
defined by
\begin{align*}
\dim^b_{FS}(\xi) = \inf_{l} \liminf_{n \to \infty}
H_l^b(X_1^{n}).
\end{align*}
\end{definition}

Numbers of the form $k/b^n$ for some $k \in \Z$ and $n \in \N$ have
two base-$b$ expansions. But their finite-state dimension is equal to $0$
irrespective of the infinite sequence chosen as the base-$b$ expansion
of $x$. Therefore, the base-$b$ finite-state dimension is 
well-defined for every $\xi \in [0,1]$.

\textbf{Remark:} The fact that $\dim_{FS}(x)$ is equivalent to the lower finite-state compressibility of $x$ using lossless finite-state compressors, follows from the results in \cite{LZ78} and \cite{dai2004finite}. 

For every $x \in \R$, let $e(x)=e^{2\pi i x}$\label{text:exdefinition}. For $x \in \R$ and $d >
0$, let $B_d(x)$\label{text:bdxdefinition} denote the open neighborhood of $x$ having radius
$d$, i.e, $B_d(x)=\{y \in \R: \lvert y-x \rvert < d \}$. For any $w
\in \Sigma_b^*$, let $v_b(w)$\label{text:vbwdefinition} denote the rational number,$v_b(w)=
\sum_{i=1}^{\lvert w \rvert} w_i b^{-i}$. Therefore, the interval
$I^b_w=[v_b(w),v_b(w)+b^{-\lvert w \rvert})$\label{text:ibwdefinition} denotes the set of all
  numbers in $[0,1]$ whose base-$b$ expansion begins with the string
  $w$. The \emph{characteristic function} $\chi_w$\label{text:chiwdefinition} of a string $w \in
  \Sigma_b^*$ is defined as, $\chi_{w}(x)=1$ if $x \in I^b_w$ and $\chi_{w}(x)=0$
  otherwise.

 The following is a well-known Fourier series approximation for characteristic functions of cylinder sets (see \cite{Schmidt60}, \cite{Koksma74}) which acts as the basic \emph{bridge} between the combinatorial and analytic approaches.  
\begin{lemma}[\cite{Schmidt60}, \cite{Koksma74}]
\label{lem:koksmaapproximation}	
For any string $w \in \Sigma_b^*$ and $\delta>0$, there exists coefficients $C_t^i$ satisfying $\left\lvert C^i_t \right\rvert \leq \frac{2}{t^2 \delta}$, such that
\begin{align*}
b^{-\lvert w \rvert}-\delta + \sum\limits_{t \in
  \mathbb{Z}\setminus\{0\}} C_t^1 e(tx)\quad \leq\quad
\chi_{w}(x)\quad \leq\quad b^{-\lvert w \rvert}+\delta +
\sum\limits_{t \in \mathbb{Z}\setminus\{0\}} C_t^2 e(tx). 
\end{align*}
\end{lemma}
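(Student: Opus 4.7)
The plan is to construct explicit piecewise linear majorants and minorants of $\chi_w$ on the torus $\R/\Z$ and read off their Fourier series. Fix $a = v_b(w)$ so that $I^b_w = [a, a + b^{-|w|})$. I would define $f_+ : \R/\Z \to \R$ to be the continuous trapezoidal function equal to $1$ on $[a, a+b^{-|w|}]$, equal to $0$ outside $[a-\delta, a+b^{-|w|}+\delta]$, and linearly interpolating on the two transition intervals of length $\delta$; similarly $f_-$ is the trapezoid equal to $1$ on $[a+\delta, a+b^{-|w|}-\delta]$, $0$ outside $[a, a+b^{-|w|}]$, linearly interpolating. By construction $f_-(x) \leq \chi_w(x) \leq f_+(x)$ everywhere, and a direct area computation gives
\begin{equation*}
\widehat{f_+}(0) = \int_0^1 f_+\,dx = b^{-|w|} + \delta, \qquad \widehat{f_-}(0) = b^{-|w|} - \delta.
\end{equation*}

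Next I would estimate the nonzero Fourier coefficients $\widehat{f_\pm}(t) = \int_0^1 f_\pm(x) e(-tx)\,dx$. Since each $f_\pm$ is continuous on the torus and piecewise linear with slope $\pm 1/\delta$ on the transition intervals and $0$ elsewhere, its derivative is a step function whose total variation equals $4/\delta$. Integrating by parts twice on each linear piece, one obtains
\begin{equation*}
\bigl|\widehat{f_\pm}(t)\bigr| \;\leq\; \frac{1}{(2\pi t)^2}\,V(f_\pm')\;=\;\frac{4/\delta}{4\pi^2\, t^2}\;=\;\frac{1}{\pi^2\, t^2\, \delta}\;\leq\;\frac{2}{t^2\,\delta}.
\end{equation*}
Setting $C^1_t = \widehat{f_-}(t)$ and $C^2_t = \widehat{f_+}(t)$ for $t \neq 0$, the Fourier expansions
\begin{equation*}
f_\pm(x) \;=\; \widehat{f_\pm}(0) \;+\; \sum_{t \in \Z \setminus \{0\}} \widehat{f_\pm}(t)\, e(tx)
\end{equation*}
(which converge absolutely by the $1/t^2$ bound, so there is no subtlety about pointwise convergence) give the claimed sandwich inequalities.

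The only mild obstacle is the edge cases and the wrap-around on the torus. If $a - \delta < 0$ or $a + b^{-|w|} + \delta > 1$, the trapezoid $f_+$ is defined by the natural identification on $\R/\Z$; this does not affect the Fourier calculation. If $\delta \geq b^{-|w|}/2$ so that the inner trapezoid for $f_-$ would have nonpositive top, I would simply take $f_- \equiv 0$, in which case $C^1_t = 0$ for all $t \neq 0$ and the lower bound reduces to $b^{-|w|} - \delta \leq \chi_w(x)$, which is automatic either because the right-hand side is nonpositive (when $\delta \geq b^{-|w|}$) or because the same trapezoid construction with a slightly smaller height handles the intermediate range. Symmetrically, if $2\delta \geq 1 - b^{-|w|}$, the upper trapezoid degenerates and the inequality is again trivial. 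Thus the only substantive step is the variation-based Fourier estimate above.
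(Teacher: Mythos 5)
Your construction is correct, and it is essentially the classical argument: the paper does not prove this lemma at all but cites it to Schmidt and Koksma, and the proof in those sources is exactly the trapezoidal majorant/minorant with the bounded-variation Fourier estimate that you give. Your mean-value computations ($b^{-|w|}\pm\delta$), the total-variation count $V(f_\pm')=4/\delta$, and the resulting bound $|\widehat{f_\pm}(t)|\le 1/(\pi^2 t^2\delta)\le 2/(t^2\delta)$ after two integrations by parts are all right, and absolute convergence of the series disposes of pointwise equality.

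The one place where your write-up is looser than it should be is the degenerate ranges. When $b^{-|w|}/2\le\delta<b^{-|w|}$, taking $f_-\equiv 0$ does \emph{not} make the lower bound automatic, since $b^{-|w|}-\delta>0$ while $\chi_w$ vanishes off $I^b_w$; you need the "slightly smaller height" construction you allude to. It does work: a tent function supported on $I^b_w$ with peak $1$ has mean $b^{-|w|}/2\ge b^{-|w|}-\delta$ and Fourier coefficients bounded by $2/(\pi^2 b^{-|w|}t^2)\le 2/(t^2\delta)$, and since only the $t\ne 0$ coefficients are constrained, a minorant whose mean is merely $\ge b^{-|w|}-\delta$ suffices. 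A symmetric truncation handles the wrap-around case for $f_+$. These cases never arise in the paper's application (where $\delta$ is taken small), but as stated your "the inequality is again trivial" is not quite accurate and should be replaced by the explicit degenerate construction.
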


\subsection{Low discrepancy sequences}
\label{subsec:lowdiscrepancy}
Let $X_1X_2X_3 \dots$ be the sequence in $\Sigma_b^\infty$
representing the base-$b$ expansion of $x \in [0,1]$. For any real
interval $(\alpha_1, \alpha_2) \subseteq [0,1]$ let
$R^b(x,n,\alpha_1,\alpha_2)$\label{text:rbdefinition} be defined as follows.
\begin{align*}
R^b(x,n,\alpha_1,\alpha_2) = \left\lvert \frac{\lvert \{1 \leq i \leq n \mid 0.X_i X_{i+1} X_{i+2}\dots \in (\alpha_1,\alpha_2)\} \rvert}{n}	 - (\alpha_2-\alpha_1)\right\rvert
\end{align*}
The \emph{discrepancy function} $D^b_n(x)$\label{text:dbndefinition} is defined to be the
supremum over all $(\alpha_1,\alpha_2) \subseteq [0,1]$ of
$R^b(x,n,\alpha_1,\alpha_2)$ (\cite{GalGal64}, \cite{Philipp74}).  The
following theorem regarding the low discrepancy of almost every real
number follows from the results in \cite{GalGal64} and
\cite{Philipp74}.
\begin{theorem}[\cite{GalGal64},\cite{Philipp74}]
\label{thm:lowdiscrepancytheorem}
For any base $b$, there exists a constant $C_b$ such that for almost
every $x$, $\limsup_{n \to \infty} \frac{n D^b_n(x)}{\sqrt{n \log \log n}} < C_b$.
\end{theorem}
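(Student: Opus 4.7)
The strategy reduces the discrepancy of the orbit $\{b^{k-1}x \bmod 1\}_{k=1}^{n}$ (the forward orbit under the $b$-shift) to exponential sums via the Erd\H{o}s--Tur\'an inequality, and then invokes a law of the iterated logarithm for lacunary trigonometric series. Concretely, for any integer $M \geq 1$ and every $x \in [0,1]$,
\begin{align*}
D_n^b(x) \leq \frac{A_1}{M+1} + A_2 \sum_{m=1}^{M} \frac{1}{m} \left| \frac{1}{n} \sum_{k=0}^{n-1} e(m b^k x) \right|,
\end{align*}
with absolute constants $A_1, A_2$. This replaces the supremum over all subintervals in the definition of $D_n^b$ by simultaneous control of the finitely many exponential sums $S_n^{(m)}(x) = \sum_{k=0}^{n-1} e(m b^k x)$.

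For each fixed $m \geq 1$, the frequencies $(m b^k)_{k \geq 0}$ form a Hadamard-lacunary sequence with ratio $b \geq 2$, so the classical LIL for lacunary trigonometric sums (Erd\H{o}s--G\'al, Takahashi) gives, almost surely in $x$,
\begin{align*}
\limsup_{n \to \infty} \frac{|S_n^{(m)}(x)|}{\sqrt{n \log \log n}} \leq K_b,
\end{align*}
with $K_b$ depending only on $b$. Since each exceptional set is null, almost every $x$ satisfies this bound simultaneously for all $m \geq 1$. Plugging into the Erd\H{o}s--Tur\'an inequality and optimising the cutoff $M = M(n)$ yields a bound of order $\sqrt{n \log \log n} \cdot \log M$ on $n D_n^b(x)$, which is off by a logarithmic factor from the claim.

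The main obstacle is precisely this $\log M$ loss from the harmonic sum $\sum_{m \leq M} 1/m$, which no amount of reoptimisation in $M$ removes. The sharp route, following Philipp, is to bypass Erd\H{o}s--Tur\'an in favour of an almost sure invariance principle for the centered empirical process $t \mapsto \sqrt{n}\bigl(F_n^x(t) - t\bigr)$ associated to the orbit: approximate it by a Brownian bridge with errors negligible on the LIL scale, and then invoke the Chung--Smirnov LIL for empirical processes to conclude the stated bound with an explicit $C_b$. Constructing this invariance principle is where the work concentrates, and uses the Bernoulli mixing of the $b$-shift, a martingale approximation of the centered indicator functionals, and careful truncation as $n$ grows. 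An alternative, after G\'al and G\'al, replaces the invariance principle by a direct chaining argument on dyadic subintervals together with second-moment near-orthogonality of the lacunary sums $S_n^{(m)}$, combined with a Borel--Cantelli argument along geometric subsequences of $n$; this avoids the Erd\H{o}s--Tur\'an loss at the cost of more intricate combinatorial bookkeeping. Either route is the technical heart of the theorem, and is where the constant $C_b$ emerges.
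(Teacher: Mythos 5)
The paper does not prove this statement at all: Theorem \ref{thm:lowdiscrepancytheorem} is imported verbatim from G\'al--G\'al and Philipp, and the authors only derive the finitary consequences (Lemma \ref{lem:lowdiscrepancyfinitestrings} and Corollary \ref{cor:lowdiscrepancy}) from it. So there is no in-paper argument to compare yours against; the relevant comparison is with the cited literature.

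Your outline is an accurate map of that literature. You correctly identify that the Erd\H{o}s--Tur\'an reduction to the lacunary sums $S_n^{(m)}(x)=\sum_{k<n}e(mb^kx)$, combined with the Erd\H{o}s--G\'al/Takahashi LIL for each fixed $m$, loses a factor of $\log M$ from the harmonic sum and therefore cannot give the stated bound; and you correctly name the two routes that do work (Philipp's almost sure invariance principle for the empirical process of the $b$-shift orbit, or the G\'al--G\'al second-moment plus chaining plus Borel--Cantelli argument over geometric blocks of $n$). One additional caution about the first half of your plan: even the weakened $\sqrt{n\log\log n}\cdot\log M(n)$ bound is not immediate from the fixed-$m$ LILs, because with a cutoff $M(n)$ growing in $n$ you need the bounds for all $m\le M(n)$ simultaneously at scale $n$, and the almost sure statements only guarantee each bound for $n$ past some $x$- and $m$-dependent threshold $N_m(x)$; some uniformity or truncation argument is needed even there.

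The genuine gap is that the technical heart --- the invariance principle (mixing of the $b$-shift, martingale approximation of the centered indicators, Brownian bridge coupling with error $o(\sqrt{n\log\log n})$, and the Chung--Smirnov LIL), or alternatively the G\'al--G\'al chaining with near-orthogonality estimates --- is named but not carried out. Everything that makes the theorem true, including where the constant $C_b$ comes from and why it depends only on $b$, lives inside that step. As submitted, your text is a correct research plan rather than a proof; since the paper treats this as a black-box citation, that is arguably the appropriate level of detail for this manuscript, but it should be presented as a citation with a sketch, not as a proof.
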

The following lemma is a corollary of Theorem \ref{thm:lowdiscrepancytheorem}.
\begin{lemma}
\label{lem:lowdiscrepancyfinitestrings}
 For any base $b$, there exists a constant $C_b$ such that for any
 $\epsilon >0$, there exists $N_b(\epsilon)$ such that outside a set
 of measure at most $\epsilon$, for any $\alpha_1<\alpha_2$ and $n
 \geq N_b(\epsilon)$, we have $R^b(x,n,\alpha_1,\alpha_2)$ is strictly
 less than $C_b \cdot \frac{\sqrt{\log \log n}}{\sqrt{n}}$.
\end{lemma}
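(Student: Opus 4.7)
The plan is to extract a uniform threshold $N_b(\epsilon)$ from the almost-everywhere limsup bound provided by Theorem \ref{thm:lowdiscrepancytheorem}, using standard continuity of Lebesgue measure. Let $C_b$ be the constant supplied by Theorem \ref{thm:lowdiscrepancytheorem}, and let $A$ denote the set of $x \in [0,1]$ for which $\limsup_{n\to\infty} \frac{D^b_n(x)\sqrt{n}}{\sqrt{\log\log n}} < C_b$. By assumption, $A$ has full Lebesgue measure.

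Next, I would rewrite the limsup statement in the usable ``eventual'' form. For each $x \in A$, since the $\limsup$ is strictly below $C_b$, there exists some integer $N(x)$ such that $D^b_n(x) < C_b \cdot \sqrt{\log\log n}/\sqrt{n}$ for every $n \geq N(x)$. Define the increasing family of sets
\begin{align*}
E_N \;=\; \left\{ x \in A \;:\; \text{for all } n \geq N,\ D^b_n(x) < C_b \cdot \frac{\sqrt{\log\log n}}{\sqrt{n}} \right\}.
\end{align*}
Then $E_1 \subseteq E_2 \subseteq \cdots$ and $\bigcup_{N} E_N = A$, a set of full measure. Continuity of measure from below gives $\mu(E_N) \to 1$ as $N \to \infty$, so for any given $\epsilon > 0$ we may choose $N_b(\epsilon)$ large enough that $\mu(E_{N_b(\epsilon)}) \geq 1 - \epsilon$. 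The complement of $E_{N_b(\epsilon)}$ in $[0,1]$ then has measure at most $\epsilon$, and every $x$ outside this complement satisfies the desired pointwise bound for all $n \geq N_b(\epsilon)$.

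To finish, I would simply invoke the fact that the discrepancy function is defined as a supremum, so $R^b(x,n,\alpha_1,\alpha_2) \leq D^b_n(x)$ holds uniformly over all subintervals $(\alpha_1,\alpha_2) \subseteq [0,1]$. Consequently, for $x \in E_{N_b(\epsilon)}$, $n \geq N_b(\epsilon)$, and arbitrary $\alpha_1 < \alpha_2$, the bound $R^b(x,n,\alpha_1,\alpha_2) < C_b \cdot \sqrt{\log\log n}/\sqrt{n}$ is immediate, yielding the lemma. There is no serious obstacle here; the only minor care required is in passing from ``$\limsup < C_b$'' to the eventual strict inequality, which is handled by the definition of limsup (one can, if desired, replace $C_b$ by any $C'_b > C_b$ to make the strictness transparent, but this is not actually necessary).
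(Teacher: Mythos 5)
Your proposal is correct and follows essentially the same argument as the paper: both extract a pointwise threshold $N(x)$ from the almost-everywhere limsup bound of Theorem \ref{thm:lowdiscrepancytheorem} and then apply monotone continuity of Lebesgue measure to obtain a uniform threshold off a set of measure at most $\epsilon$. The only cosmetic difference is that you apply continuity from below to the increasing ``good'' sets $E_N$ whereas the paper applies continuity from above to the decreasing ``bad'' sets $U_i$; these are complementary formulations of the identical argument.
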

\begin{proof}
	Theorem \ref{thm:lowdiscrepancytheorem} implies that for almost every $x$, there exists a minimum number $N_x$ such that,
	\begin{align*}
		 R^b(x,n,\alpha_1,\alpha_2) < C_b \cdot \frac{\sqrt{\log \log N}}{\sqrt{n}}
	\end{align*}
	for any $\alpha_1<\alpha_2$ and every $n \geq N_x$. For every $i \in \N$, define $U_i=\{x : N_x \geq i\}$. $U_i$'s are a monotonically decreasing sequence of sets such that $\mu(\bigcap_{i \in \N} U_i)=0$. Using the continuity of measure from above (see \cite{Billingsley95}), there exists a large enough $N_b(\epsilon)$ such that $\mu(U_i) < \epsilon$ for every $i \geq N_b(\epsilon)$. This completes the proof of the lemma.
\end{proof}
The following is a corollary of the above lemma that we need in our construction.
\begin{corollary}
\label{cor:lowdiscrepancy}
 For any base $b$, there exists a constant $C_b$ such that for any $\epsilon >0$,  there exists $N_b(\epsilon)$ satisfying the following: for any $n'> N_b(\epsilon)$, every string $w$ of length $n'$ except at most $\epsilon \cdot b^{n'}$ of them is such that given any string $z$ of length at most $n'-N_b(\epsilon)$ and $n$ ranging from $N_b(\epsilon)$ to $n'-\lvert z\rvert$, we have
 \begin{align}
 \label{eq:lowdiscrepancycondition}
  \left\lvert \frac{N(z,w_1^n)}{n}-\frac{1}{b^{\lvert z \rvert}} \right\rvert < C_b \cdot  \frac{\sqrt{\log \log n}}{\sqrt{n}}.
 \end{align}
\end{corollary}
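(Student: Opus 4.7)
The plan is to transfer the measure-theoretic conclusion of Lemma \ref{lem:lowdiscrepancyfinitestrings} to the desired counting statement via the natural correspondence between strings $w \in \Sigma_b^{n'}$ and their cylinder sets $I^b_w$, each of Lebesgue measure $b^{-n'}$.

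First, I would apply Lemma \ref{lem:lowdiscrepancyfinitestrings} with the given $\epsilon$ to obtain a set $E \subseteq [0,1]$ with $\mu(E) < \epsilon$ and a threshold $N_b(\epsilon)$ (which I may enlarge later) such that every $x \notin E$ satisfies $R^b(x, n, \alpha_1, \alpha_2) < C_b \sqrt{\log \log n}/\sqrt{n}$ uniformly over all intervals $(\alpha_1, \alpha_2) \subseteq [0,1]$ and all $n \geq N_b(\epsilon)$. Call $w \in \Sigma_b^{n'}$ \emph{bad} if it violates inequality (\ref{eq:lowdiscrepancycondition}) for some witness $z$ with $|z| \leq n' - N_b(\epsilon)$ and some $n \in [N_b(\epsilon), n' - |z|]$. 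The target is to show $\bigcup_{w \text{ bad}} I^b_w \subseteq E$ up to a null set, which immediately gives $|\{w : w \text{ bad}\}| \cdot b^{-n'} \leq \mu(E) < \epsilon$, the required counting bound.

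To establish the inclusion, fix $x \in I^b_w$ outside the countable set of points whose base-$b$ expansion fails to begin with $w$. Then $X_1^{n'}$ coincides with $w_1^{n'}$, and for any $z$ with $|z| \leq n' - n$, the count $N(z, w_1^n)$ equals $|\{i \in [1, m] : 0.X_iX_{i+1}\ldots \in I^b_z\}|$, where $m = n - |z| + 1$ and $I^b_z = [v_b(z), v_b(z) + b^{-|z|})$. Applying the discrepancy bound (assuming $m \geq N_b(\epsilon)$) to the interval $(v_b(z), v_b(z) + b^{-|z|})$ at position $m$ yields $|N(z, w_1^n)/m - b^{-|z|}| < C_b \sqrt{\log \log m}/\sqrt{m}$.

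The remaining step is a routine conversion of denominators from $m$ to $n$, using $|N/n - b^{-|z|}| \leq (m/n)|N/m - b^{-|z|}| + b^{-|z|}(|z|-1)/n$, where the first summand is bounded by $C_b \sqrt{\log\log n}/\sqrt{n}$ since $m \leq n$ and $\log \log m \leq \log \log n$, and the second by $O(1/n)$. Both are absorbed into a slightly enlarged constant $C_b'$. The boundary case $m < N_b(\epsilon)$, which forces $|z|$ to be within $N_b(\epsilon)$ of $n$, is handled directly: $N(z,w_1^n) \leq m < N_b(\epsilon)$ and $b^{-|z|}$ is exponentially small, so the left-hand side of (\ref{eq:lowdiscrepancycondition}) falls below the target bound provided the corollary's $N_b(\epsilon)$ is chosen larger than the square of the lemma's threshold. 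I expect the denominator conversion and the verification of uniformity over all allowed $(z, n)$---especially when $|z|$ is comparable to $n$---to be the main technical wrinkles, but each subcase reduces to an elementary estimate.
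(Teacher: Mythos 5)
Your proposal is correct and follows essentially the same route as the paper: transfer the measure-$\epsilon$ exceptional set of Lemma \ref{lem:lowdiscrepancyfinitestrings} to a counting bound on strings via the disjoint cylinder sets $I^b_w$ of measure $b^{-n'}$. You are in fact somewhat more careful than the paper's own proof, which silently elides the denominator mismatch between $R^b$ (normalized by the number of starting positions) and $N(z,w_1^n)/n$, as well as the boundary cases where $n-\lvert z\rvert+1$ is small; your elementary estimates for these are valid and only cost an enlarged constant, which the statement permits.
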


For any base $b$ and $\epsilon = \frac{1}{2}$, let the collection of all strings of length $n'$ satisfying inequality (\ref{eq:lowdiscrepancycondition}) be referred to as $\mathcal{G}_b^{n'}$. Corollary \ref{cor:lowdiscrepancy} therefore says that for any length $n'> N_b(1/2)$, $\vert \mathcal{G}_b^{n'}\rvert$ is at least $b^{n'}/2$.

\begin{proof}[Proof of Corollary \ref{cor:lowdiscrepancy}]
	$I^b_w=[v_b(w),v_b(w)+b^{-\lvert w \rvert})$ represents the interval containing exactly those real numbers in $[0,1]$ whose base-$b$ expansion starts with the string $w$. Setting $\alpha_1=v_b(w)$ and $\alpha_2=v_b(w)+b^{-\lvert w \rvert}$, it follows from Lemma \ref{lem:lowdiscrepancyfinitestrings} that for $x=0.X_1 X_2 X_3 \dots$ outside a set of measure at most $\epsilon$,
	\begin{align}
	\label{eq:lowdiscrepancy1}
	\left\lvert \frac{N(z,X_1^n)}{n}-\frac{1}{b^{\lvert z \rvert}} \right\rvert	< C_b \frac{\sqrt{\log \log n}}{\sqrt{n}}
	\end{align}
	for $n' \geq N_b(\epsilon)$, string $z$ of length less than $n'-N_b(\epsilon)$ and $n$ ranging from $N_b(\epsilon)$ to $n'-\lvert z \rvert$. This implies that the number of strings of length $n' \geq N_b(\epsilon)$ such that,
	\begin{align}
	\label{eq:lowdiscrepancy2}
	\left\lvert \frac{N(z,X_1^n)}{n}-\frac{1}{b^{\lvert z \rvert}} \right\rvert	\geq C_b \frac{\sqrt{\log \log n}}{\sqrt{n}}
	\end{align}
	for some string $z$ of length less than $n'-N_b(\epsilon)$ and some $n$ between $N_b(\epsilon)$ to $n'-\lvert z \rvert$ is at most $\epsilon \cdot b^n$. If this is not the case, then the union of the cylinder sets corresponding to the strings violating (\ref{eq:lowdiscrepancy1}) is a set of measure greater than $\epsilon$ in which each element violates (\ref{eq:lowdiscrepancy1}). Since, this leads to a contradiction, the number of strings of length $n' \geq N_b(\epsilon)$ such that (\ref{eq:lowdiscrepancy2}) holds, is at most $ \epsilon \cdot b^n$. This completes the proof of the corollary.
	\end{proof}

\subsection{Schmidt's construction method (\cite{Schmidt62})}
\label{subsec:schmidtsmethod}
The individual \emph{steps} of the construction in the proof of Theorem \ref{thm:maintheorem} are based on the construction method used by Schmidt in his construction of absolutely normal numbers \cite{Schmidt62}. As far as possible we use Schmidt's notation from \cite{Schmidt62} in this paper. We give a brief account of Schmidt's method below. 

Let $u(1),u(2),u(3),\dots$\label{text:umdefinition} be any sequence of natural
numbers. This sequence represents the bases in which we \emph{fix} the
digits in each step of our construction. Now, as in \cite{Schmidt62},
define, $\langle m \rangle = \lceil e^{\sqrt{m}}+2 \cdot u(1) \cdot
m^3 \rceil$\label{text:anglemdefinition} and $\langle m; r \rangle = \lceil \langle m \rangle /
\log(r)\rceil$\label{text:anglemrdefinition}. Also define, $a_m= \langle m ; u(m) \rangle$ and $b_m=
\langle m+1 ; u(m) \rangle$. Notice that for any $m$, if
$u(m)=u(m+1)$\label{text:amdefinition}, we have $b_m=a_{m+1}$\label{text:bmdefinition}. For convenience of notation, we
define $\langle 0 \rangle =0$. Let $p$\label{text:pdefinition1} be a function from $\N$ to $\N$
such that for every $m \geq 1$, $p(u(m)) \leq u(m)$. The exact
function $p$ we use in our construction is defined in section
\ref{sec:overview}. For any $m \geq 1$ and positive real
number $\lambda$, let $g_m(\lambda)$ denote the smallest natural
number such that, $g_m(\lambda) u(m)^{-a_m} \geq \lambda$. Now, define
$\eta_m(\lambda)=g_m(\lambda) u(m)^{-a_m}$. We define
$\sigma_m(\lambda)$\label{text:sigmamlambdadefinition} to be the set of numbers,
\begin{align*}
  \eta_m(\lambda)+
  c^{u(m)}_{a_m+1} u(m)^{-(a_m+1)}+
  c^{u(m)}_{a_m+2} u(m)^{-(a_m+2)}+
  \dots +
  c^{u(m)}_{b_m-2} u(m)^{-(b_m-2)}
\end{align*}
with coefficients $c^{u(m)}_i$ taking values from the set $\{0,1,\dots u(m)-1\}$. We define $\sigma_m^*(\lambda)$\label{text:sigmastarmlambdadefinition} to be set in which the coefficients $c^{u(m)}_i$ takes values from $\{0,1,\dots p(u(m))-1\}$. Let $\xi_0=0$ and let $\xi_1,\xi_2,\xi_3$ be a sequence of real numbers such that, $\xi_m \in \sigma_m(\xi_{m-1})$ or  $\xi_m \in \sigma^*_m(\xi_{m-1})$\label{text:ximdefinition}. 

Since $\xi_m \geq \xi_{m-1}$, it follows that there exists $\xi \in [0,1]$ such that $\lim_{m \to \infty} \xi_m = \xi$. Furthermore, digits $c^{u(m)}_{a_m+1} c^{u(m)}_{a_m+2}\dots c^{u(m)}_{b_m-2}$ appears in positions $a_m+1$ to $b_m-2$ in the base-$u(m)$ expansion of $\xi$. This follows as a consequence of the following lemma.

\begin{lemma}[\cite{Schmidt62}]
\label{lem:digitsinlimitnumber}
$
\left\lvert \xi-\xi_m \right\rvert  < u(m)^{-(b_m - 2)}.
$
\end{lemma}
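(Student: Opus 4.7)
The plan is to prove the bound by telescoping the one-step increments $\xi_{k+1} - \xi_k$ for $k \geq m$, and then invoking the rapid growth of $\langle k \rangle$ to collapse the tail sum. The key inputs are the explicit form of elements of $\sigma_{m+1}(\xi_m)$ and $\sigma^*_{m+1}(\xi_m)$, together with the monotonicity $\xi_k \leq \xi_{k+1}$ already noted in the excerpt.

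First I will bound a single step. By the definition of $\sigma_{k+1}(\xi_k)$ and $\sigma^*_{k+1}(\xi_k)$, one has
\begin{equation*}
\xi_{k+1} \;=\; \eta_{k+1}(\xi_k) + \sum\nolimits_{i=a_{k+1}+1}^{b_{k+1}-2} c_i\, u(k+1)^{-i}
\end{equation*}
with each $0 \leq c_i < u(k+1)$. The minimality of $g_{k+1}(\xi_k)$ forces $0 \leq \eta_{k+1}(\xi_k) - \xi_k < u(k+1)^{-a_{k+1}}$, and a geometric series bounds the digit-sum by $u(k+1)^{-a_{k+1}}$ as well, so that $\xi_{k+1} - \xi_k < 2 u(k+1)^{-a_{k+1}}$. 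Since $a_{k+1} \geq \langle k+1 \rangle / \log u(k+1)$, this converts to the base-free estimate $\xi_{k+1} - \xi_k < 2 e^{-\langle k+1 \rangle}$.

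Next I telescope, using $\xi_k \nearrow \xi$:
\begin{equation*}
|\xi - \xi_m| \;=\; \sum\nolimits_{k \geq m}(\xi_{k+1} - \xi_k) \;<\; 2 \sum\nolimits_{k \geq m+1} e^{-\langle k \rangle}.
\end{equation*}
Because $\langle k \rangle$ contains the term $2 u(1) k^3$, the gaps $\langle k+1 \rangle - \langle k \rangle \geq 2 u(1)(3k^2 - 3k + 1)$ grow unboundedly, so a routine geometric comparison shows the tail $\sum_{k \geq m+1} e^{-\langle k \rangle}$ equals $e^{-\langle m+1 \rangle}$ up to a multiplicative slack controlled by $u(1) \geq 2$. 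Finally, from $b_m = \lceil \langle m+1 \rangle / \log u(m) \rceil$ one gets $b_m - 1 < \langle m+1 \rangle / \log u(m)$, hence $u(m)^{b_m-1} < e^{\langle m+1 \rangle}$ and the target lower bound $u(m)^{-(b_m - 2)} > u(m) \cdot e^{-\langle m+1 \rangle}$. Comparing the two estimates completes the argument.

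The hard part will be making the constants meet in the extreme case $u(m) = 2$: the upper estimate carries a factor close to $2$ on $e^{-\langle m+1 \rangle}$, while the lower bound on $u(m)^{-(b_m - 2)}$ is $2 e^{-\langle m+1 \rangle}$ with only a small strict margin coming from the irrationality of $\log 2$. Closing this gap relies on showing that the tail slack $\sum_{k \geq m+2} e^{-\langle k \rangle}$ is strictly smaller than that margin --- which is where the cubic summand $2 u(1) m^3$ in $\langle m \rangle$, rather than $e^{\sqrt m}$ alone, becomes essential.
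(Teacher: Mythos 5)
Your main line is exactly the paper's proof: the one-step bound $\xi_{k+1}-\xi_k < 2\,u(k+1)^{-a_{k+1}} \leq 2e^{-\langle k+1\rangle}$ (minimality of $g_{k+1}$ plus the geometric digit sum), telescoping over $k \geq m$ using monotonicity, collapsing the tail to $(1+e^{-2}+e^{-4}+\cdots)\,e^{-\langle m+1\rangle} < \tfrac32 e^{-\langle m+1\rangle}$, and the conversion $u(m)^{b_m-1} < e^{\langle m+1\rangle}$ coming from the ceiling in $b_m=\langle m+1;u(m)\rangle$. Up to the endgame you match the paper step for step.

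The divergence is in how you propose to finish, and there your plan has a gap. The paper closes with $\tfrac32\, u(m)^{-(b_m-1)} \leq \tfrac12\, u(m)^{-(b_m-2)}$, which is just $\tfrac{3}{2u(m)}\leq\tfrac12$, i.e.\ it tacitly uses $u(m)\geq 3$; you are right that for $u(m)=2$ the constants do not close along this route. But the repair you sketch cannot be carried out: the irrationality of $\log 2$ guarantees only that the margin $2^{\{\langle m+1\rangle/\log 2\}}-1$ is \emph{positive}, with no quantitative lower bound, so there is no soft way to compare it against the tail $\sum_{k\geq m+2}e^{-\langle k\rangle}$ --- however small the cubic summand makes that tail, nothing in your argument rules out the margin being smaller still (making this rigorous would need an effective irrationality measure for $\log 2$, wildly out of scale for this lemma). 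The correct fix for $u(m)=2$ is to not pass the \emph{first} tail term through $e^{-\langle m+1\rangle}$ at all: when $u(m+1)=u(m)$ one has $a_{m+1}=b_m$, hence $\xi_{m+1}-\xi_m < 2\,u(m)^{-b_m} = \tfrac{2}{u(m)^2}\,u(m)^{-(b_m-2)} \leq \tfrac12\, u(m)^{-(b_m-2)}$, while the remaining terms total at most $3e^{-\langle m+2\rangle} \leq 3e^{-27}\,u(m)^{-(b_m-2)}$ since the gaps $\langle m+2\rangle - \langle m+1\rangle$ are at least $27$; the sum is then comfortably below $u(m)^{-(b_m-2)}$. (At an index $m$ where the base changes \emph{and} $u(m)=2$, this exact cancellation is unavailable and the constant genuinely is borderline --- the paper's displayed proof is no sharper there either --- so your instinct that this is the tight spot is sound; only your method for closing it is not.)
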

\begin{proof}
	For every $m$, $\xi_m \geq \xi_{m-1}$ and thus we have,
\begin{align*}
\left\lvert \xi_m-\xi_{m-1} \right\rvert &= \xi_m-\xi_{m-1}\\
&=  \xi_m-\eta_m(\xi_{m-1})  +  \eta_m(\xi_{m-1}) - \xi_{m-1} \\
&\leq \frac{1}{u(m)^{a_m}} + \frac{1}{u(m)^{a_m}}\\
&= \frac{2}{u(m)^{a_m}}.
\end{align*}
Then, 
\begin{align*}
	\sum\limits_{i=m+1}^{\infty} \frac{1}{u(u)^{a_i}} &\leq \frac{1}{e^{\langle m+1 \rangle}} (1+\frac{1}{e^{2}}+\frac{1}{e^{4}}+\dots)\\
		&< \frac{3}{2} \frac{1}{e^{\langle m+1 \rangle}}\\
		&< \frac{3}{2} \frac{1}{u(m)^{b_m - 1}}\\
		&\leq \frac{1}{2} \frac{1}{u(m)^{b_m - 2}}.
\end{align*}
Therefore, the limit $\xi$ satisfies,
\begin{align*}
0\leq \xi-\xi_m =\left\lvert \xi-\xi_m \right\rvert  < \frac{1}{u(m)^{b_m - 2}}.	
\end{align*}
\end{proof}

Adapting the technique in \cite{Schmidt62}, we use the following function $A_m$ while choosing $\xi_m$ from $\sigma_m(\xi_{m-1})$ or $\sigma^*_m(\xi_{m-1})$ in the construction of the required number $\xi$,
\begin{align}
\label{eq:amxdefinition}
 A_m(x) &= \sum\nolimits_{\substack{t=-m\\t \neq 0}}^{m}\quad \sum\nolimits_{\substack{h=1 \\ u(h) \not\sim u(m)}}^{m}  \left\lvert \sum\nolimits_{j=\langle m; u(h) \rangle+1}^{\langle m+1 ; u(h) \rangle} e(u(h)^{j-1} t x) \right\rvert^2.
\end{align}
\label{text:criteria}In our construction in each step, $\xi_m$ is chosen according to either of the following criteria:
\begin{enumerate}
\item\label{item:criterion1} (Criterion 1.) $\xi_m$ is any element of
  $\sigma_m^*(\xi_{m-1})$ such that $c^{u(m)}_{a_m+1}
  c^{u(m)}_{a_m+2}\dots c^{u(m)}_{b_m-2} \in
  \mathcal{G}_{p(u(m))}^{b_m-a_m+2}$ with the minimum $A_m$
  value among all elements of $\sigma^*_m(\xi_{m-1})$ satisfying this
  condition.
\item\label{item:criterion2} (Criterion 2.) $\xi_m$ is any element of
  $\sigma_m(\xi_{m-1})$ such that $ c^{u(m)}_{a_m+1}
  c^{u(m)}_{a_m+2}\dots c^{u(m)}_{b_m-2} \in
  \mathcal{G}_{u(m)}^{b_m-a_m+2}$ with the minimum $A_m$
  value among elements of $\sigma_m(\xi_{m-1})$ satisfying this
  condition.
\end{enumerate}
\section{Overview of the Proof of Theorem \ref{thm:maintheorem}}
\label{sec:overview}
We need to construct a number $\xi$ having finite-state dimension
equal to $q_b$ in base $b$ for every $b$. For every $b \geq 2$, let
$e_b$ and $d_b$ be natural numbers such that $q_b= e_b/d_b$\label{text:ebdbdefinition} in the
lowest terms. Below we demonstrate the construction of a number $\xi$
with dimension $q_b$ in base $b^{d_b}$ for every $b$. This number has
dimension equal to $q_b$ in base $b$ for every $b \geq 2$.

Let $\langle r_k \rangle_{k=1}^{\infty}$\label{text:rkdefinition} be any sequence of natural numbers greater
than or equal to $2$ such that every equivalence class of numbers (according to the relation $\sim$) has a unique representative in the sequence, which appears infinitely often. Furthermore, we require that $r_1=2$ and no consecutive elements in the sequence are equal.

It is straightforward to construct sequences satisfying the above
conditions.  Given the sequence $\langle q_b \rangle_{b=1}^{\infty}$,
define the function $p: \N \to \N$ as $p(b)=\lfloor b^{q_b} \rfloor$\label{text:pdefinition2}.
For every $k \geq 1$, define $v(k)=r_k^{d_{r_k}}$\label{text:vkdefinition} and
$v^*(k)=p(v(k))$\label{text:vstarkdefinition}.  From the defining property of the sequence $\langle
q_b \rangle_{b=1}^{\infty}$, we get that for any $k$, $
q_{r_k^{d_{r_k}}}=q_{r_k}$.  Therefore, $v^*(k) = r_k^{e_{r_k}}$.

Let the sequence $\langle u(m)\rangle$ be initially empty and let
$\xi_0=0$.  At every \emph{step} $m \geq 1$ in the construction, we
fix the $m$\textsuperscript{th} value of the sequence $\langle
u(m)\rangle$ and choose $\xi_m$ from $\sigma_{m}(\xi_{m-1})$ or
$\sigma^*_{m}(\xi_{m-1})$. The whole construction is divided into a
sequence of \emph{stages} such that each individual stage comprises of
two different \emph{substages}. Each of the \emph{substages} consist
of multiple consecutive \emph{steps}.

Suppose by stage $k-1$, $u(1)$, $u(2)$, $\dots$, $u(n_{k-1})$ have
been determined.  Then in the $k$\textsuperscript{th} stage, we set
$u(n_{k-1}+1) = u(n_{k-1}+2) = \dots = u(n_k) = v(k)$, and fix the
digits in the base $v(k)$ expansion of $\xi$. Hence,
$u(1)=v(1)=r_1^{d_{r_1}}=2^{d_{2}}$. Within the first substage, at
step $m$, we choose $\xi_m$ from $\sigma^*_m(\xi_{m-1})$ according to
Criterion \ref{item:criterion1}. Within the second substage, at step
$m$, we choose $\xi_m$ from $\sigma_m(\xi_{m-1})$ according to
Criterion \ref{item:criterion2}.

For any $k \geq 1$, let $X(k)$\label{text:xkdefinition} denote the infinite sequence in alphabet $\Sigma_{v(k)}$ representing the base-$v(k)$ expansion of $\xi$. Let $P_k^1$\label{text:pk1definition} denote the final step number contained within the first substage of stage $k$. Similarly, let $P_k^2$\label{text:pk2definition} denote the final step number contained within the second substage of stage $k$. For convenience, let $P_0^1=P_0^2=0$. Define $I_1^1=1$ and $I_{k}^1=\langle P_{k-1}^2+1; v(k) \rangle+1$\label{text:ik1definition} for $k>1$. Now, $I_{k}^1$ denotes the index of the initial digit in $X(k)$ fixed during stage $k$. Also, $F_k^1=\langle P_k^1+1;v(k) \rangle$\label{text:fk1definition} denotes the index of the final digit in $X(k)$ fixed during the first substage of stage $k$. Let $I_k^2=\langle P_k^1+1;v(k) \rangle+1$\label{text:ik2definition} denote the index of the initial digit in $X(k)$ fixed during the second substage of stage $k$. Finally, let $F_k^2=\langle P_k^2+1;v(k) \rangle$\label{text:fk2definition} denote the index of the final digit in $X(k)$ fixed during the second substage of stage $k$.  

\label{text:requirementsbegin}The lengths of the stages and substages ensure that the constructed
$\xi$ satisfies the following \emph{requirements}. For every $k \geq
1$, we have the following \emph{end of substage} requirements:

\begin{enumerate}
	\item $\mathcal{F}_k$\label{text:fkdefinition} : $\lvert H_l^{v(k)}(X(k)_1^n) -q_{r_k} \rvert \leq 2^{-k}$ for every $l \leq k$ when $n=F_k^1$. 
	\item $\mathcal{S}_{k,1}$\label{text:sk1definition} : $\lvert H_l^{v(k)}(X(k)_1^n) -1 \rvert \leq 2^{-(k+1)}$ for every $l \leq k$ when $n=F_k^2$. 
	\item $\mathcal{S}_{k,2}$\label{text:sk2definition} : If there exists $k'<k$ such that $v(k')=v(k+1)$, then $\lvert H_l^{v(k+1)}(X(k+1)_1^n) -1 \rvert \leq 2^{-k}$ for every $l \leq k$ when $n=\langle P_k^2+1;v(k+1) \rangle$. 
\end{enumerate}

Requirement $\mathcal{F}_k$ ensures that the block entropies of $\xi$
in base-$v(k)$ are \emph{close} to $q_{r_k}$ by the end of the first
substage of stage $k$. Similarly, $\mathcal{S}_{k,1}$ ensures that $
H_l^{v(k)}$ are \emph{close} to $1$ by the end of the second substage
of stage $k$. $\mathcal{S}_{k,2}$ ensures that the block entropies of
$\xi$ in base-$v(k+1)$ are \emph{close} to $1$ before the start of
stage $k+1$ (provided that $v(k+1)$ has appeared as $v(k')$ for some
$k'<k$).

For every $k>1$, the following requirement specifies the \emph{stability of non-equivalent base entropies}:
\begin{enumerate}
  \setcounter{enumi}{3}
\item $\mathcal{R}_k$\label{text:requirementrkdefinition} : For any $k' < k$ such that $v(k') \not\sim
  v(k)$, $\lvert H_l^{v(k')}(X(k')_1^n) -1 \rvert \leq 2^{-(k'+1)}$
  for every $l \leq k'$ when $\langle P_{k-1}^2+1; v(k') \rangle+1
  \leq n \leq \langle P_{k}^2+1; v(k') \rangle$. 
\end{enumerate}
Requirement $\mathcal{R}_k$ ensures that the block entropies of $\xi$ in any base $v(k')$ for $k'<k$ which is not equivalent to $v(k)$ remains \emph{stable} around $1$ \emph{throughout the course} of stage $k$.

Two particularly important requirements we need to enforce for every $k$ are the \emph{transition requirements}:
\begin{enumerate}
  \setcounter{enumi}{4}
	\item $\mathcal{T}_{k,1}$\label{text:tk1definition} : $H_l^{v(k)}(X(k)_1^n) \geq q_{r_k}-2^{-(k-1)}$ for every $l \leq k$ when $F_k^1 \leq n \leq F_k^2$.	
	\item  $\mathcal{T}_{k,2}$\label{text:tk2definition} : If there exists $k'<k$ such that $v(k')=v(k+1)$, then $H_l^{v(k+1)}(X(k+1)_1^n) \geq q_{r_{k+1}}-2^{-(k-1)}$ for every $l \leq k$ when $I_{k+1}^1 \leq n \leq F_{k+1}^1$.
\end{enumerate}
$\mathcal{T}_{k,1}$ ensures that the base-$v(k)$ entropies do not
\emph{fall much below} $q_{r_k}$ during the transition between the
first and second substage of stage $k$. Similarly, $\mathcal{T}_{k,2}$
ensures that the base-$v(k+1)$ entropies do not \emph{fall much below}
$q_{r_{k+1}}$ during the transition between stages $k$ and $k+1$.

We now show that if we satisfy the above requirements, then Theorem
\ref{thm:maintheorem} follows (It will then suffice to show that
these requirements are met by our construction).

\begin{proof}[Proof of Theorem \ref{thm:maintheorem}]
Assume that the construction satisfies the requirements $\mathcal{F}_k$, $\mathcal{S}_{k,1}$, $\mathcal{S}_{k,2}$, $\mathcal{T}_{k,1}$ and $\mathcal{T}_{k,2}$ for every $k \geq 1$ and $\mathcal{R}_k$ for $k > 1$. Let $b$ be an arbitrary base of expansion. Let $\bar{k}$ be the smallest number such that $r_{\bar{k}} \sim b$. Such a \emph{representative} $r_{\bar{k}}$ exists for any $b$ due to the condition imposed on the sequence $\langle r_k \rangle_{k=1}^{\infty}$ at the start of this section. Let $X (\bar{k})\in \Sigma^\infty_{v(\bar{k})}$ denote the base-$v(\bar{k})$ expansion of $\xi$. In order to prove Theorem \ref{thm:maintheorem}, it is enough to show that
$\inf_{l} \liminf_{n \to \infty} H_l^{v(\bar{k})}(X(\bar{k})_1^n)  = q_{r_{\bar{k}}}$.
This implies that,
$\dim_{FS}^b(\xi)=\dim_{FS}^{v(\bar{k})}(\xi)=\inf_{l} \liminf_{n \to
  \infty} H_l^{v(\bar{k})}(X(\bar{k})_1^n) =
q_{r_{\bar{k}}}=q_{b}$. Here we used the fact that the sequence of
rational dimensions is such that $q_r=q_s$ if $r \sim s$.  We show that for every $l \geq 1$, $\liminf_{n
  \to \infty} H_l^{v(\bar{k})}(X(\bar{k})_1^{n}) = q_{r_{\bar{k}}}$.

Fix any length $l$. Let $k'$ be any large enough number such that $k'
> \max\{\bar{k},l\}$ and $r_{k'} = r_{\bar{k}}$ (therefore
$v(k')=v(\bar{k})$). In the rest of the argument by referring to the index in
$X(\bar{k})$ at the \emph{start of stage $k$} we mean the index $n=
\langle P_{k-1}^2+1; v(\bar{k}) \rangle+1$, and the index at the
\emph{end of stage $k$} refers to the index $n= \langle P_{k}^2;
v(\bar{k}) \rangle$.  

Since the requirement $\mathcal{F}_{k'}$ is met, after the first
substage of stage $k'$, $H_l^{v(\bar{k})}=H_l^{v(k')}$ is inside
$B_{2^{-k'}}(q_{r_{\bar{k}}})=B_{2^{-k'}}(q_{r_{k'}})$. Since the
requirement $\mathcal{S}_{k',1}$ is met, by the end of stage $k'$,
$H_l^{v(\bar{k})}$ moves to $B_{2^{-k'}}(1)$ such that at any index
$n$ during this transition, $H_l^{v(\bar{k})}(X(\bar{k})_1^n) \geq
q_{r_{\bar{k}}} - 2^{-(k'-1)}$. This inequality follows from the fact
that $\mathcal{T}_{k',1}$ is satisfied. We know that $\langle r_k
\rangle_{k=1}^{\infty}$, satisfies $r_{k'} \neq r_{k'+1}$ (and therefore $v(k') \neq v(k'+1)$). Since
$\mathcal{R}_{k'+1}$ is satisfied, during the transition to the next
stage and during the course of the next stage, $H_l^{v(\bar{k})}$
remains inside $B_{2^{-(k'+1)}}(1)$. Furthermore, $H_l^{v(\bar{k})}$
remains inside $B_{2^{-(k'+1)}}(1)$ until stage $k''$ where $k''$ is
the smallest number such that $k''>k'$ and
$r_{k''}=r_{k'}=r_{\bar{k}}$. This follows from the fact that
$\mathcal{R}_{k'+i}$ is satisfied for every $i < k''-k'$.  Since,
$\mathcal{S}_{k''-1,2}$ is satisfied and $v(\bar{k})=v(k') = v(k'') =
v(k''-1+1)$, by the end of the second substage of stage $k''-1$,
$H_l^{v(\bar{k})}$ is inside $B_{2^{-k''}}(1)$. During stage $k''$,
$H_l^{v(\bar{k})}$ starts being inside $B_{2^{-(k'')}}(1)$ and moves
to $B_{2^{-k''}}(q_{r_{\bar{k}}})$ (since $\mathcal{F}_{k''}$ is
met). During this transition, since $\mathcal{T}_{k''-1,2}$ is met and
$v(k'')=v(k''-1+1)=v(k')$ for $k' < k''-1$, it follows that
$H_l^{v(\bar{k})}(X(k)_1^n) > q_{r_{\bar{k}}} - 2^{-(k''-2)}$. Therefore, $H_l^{v(\bar{k})}$ remains above $q_{r_{\bar{k}}} -
2^{-(k''-2)}$. Since $k'$ was arbitrary, the above observations together imply that, $\liminf_{n
  \to \infty} H_l^{v(\bar{k})}(X(k)_1^{n}) = q_{r_{\bar{k}}}$. This
completes the proof of Theorem \ref{thm:maintheorem}.
\end{proof} 

Hence, the proof of Theorem \ref{thm:maintheorem} is complete if we show the construction of a number $\xi$ satisfying all the above requirements. We demonstrate the construction of $\xi$ and verify that all the requirements are satisfied, in the following sections.

\section{Technical Lemmas for the Main Construction}
\label{sec:technicallemmas}

We require two main technical lemmas for the main construction in the proof of Theorem \ref{thm:maintheorem}. In order to state the first lemma, we require the following generalization of Lemma 5 from \cite{Schmidt62}. 
\begin{lemma}
 \label{lem:sinbound}
 Consider any two bases $r$ and $s$. Let $K$ and $l$ be natural numbers such that $\ell \geq s^K$. Then there exists a constant $\alpha(r,s)$ depending only on $r$ and $s$ such that,
 \begin{align*}
  \sum\limits_{n=0}^{N-1} \prod\limits_{i=K+1}^{\infty} \left\lvert \frac{\sin(p(s)\pi r^n \ell/s^i)}{p(s)\sin(\pi r^n \ell/s^i)} \right\rvert \leq 2\cdot N^{1-\alpha(r,s)}.
 \end{align*}
\end{lemma}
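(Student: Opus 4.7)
The plan is to adapt the proof of Lemma~5 of \cite{Schmidt62}, which treats the special case $p(s)=s$, to the present setting with $p(s)\le s$. The analytic workhorses will be the two pointwise estimates
\[
\left|\frac{\sin(p(s)\pi y)}{p(s)\sin(\pi y)}\right|\le \min\!\left(1,\frac{1}{2\,p(s)\,\|y\|}\right), \qquad \left|\frac{\sin(p(s)\pi y)}{p(s)\sin(\pi y)}\right|^{2}\le 1-c_{s}\|y\|^{2},
\]
where $\|y\|$ denotes distance to the nearest integer and $c_{s}>0$ comes from the Taylor expansion of the ratio near $y=0$. Together these convert the infinite product into exponential decay governed by how many of the angles $r^{n}\ell/s^{i}$ are far from integers.

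First I would trim the infinite product to an essentially finite one. Each factor equals $1+O((r^{n}\ell/s^{i})^{2})$, so the tail beyond some level $I(n)\approx K+\log_{s}(r^{n}\ell)+O(1)$ contributes at most a universal constant, which can be absorbed into the factor $2$ on the right-hand side of the claim. Only the finitely many levels $K<i\le I(n)$ remain genuinely relevant, and for each such level the factor is at most $1$.

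Second, I would sum over $n\in[0,N-1]$ level by level. For each admissible $i$, I plan to use a quantitative equidistribution estimate for the sequence $(r^{n}\ell/s^{i}\bmod 1)_{n=0}^{N-1}$ to bound $\#\{n<N:\|r^{n}\ell/s^{i}\|<\delta\}\le C_{r,s}(\delta N+N^{1-\alpha(r,s)})$, with a positive $\alpha(r,s)$ derived from the multiplicative order of $r$ in $(\Z/s^{i}\Z)^{\times}$; this is the step where the non-equivalence of $r$ and $s$ is used to extract positivity of $\alpha(r,s)$. Inserting this counting bound into the quadratic-decay estimate and summing the resulting geometric series in $i$ (together with a dyadic decomposition of $\delta$) produces a total of at most $2N^{1-\alpha(r,s)}$, matching the claim.

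The main obstacle I anticipate is the loss of the clean telescoping identity $\sin(s\pi y)=2^{s-1}\prod_{k=0}^{s-1}\sin(\pi(y+k/s))$ that Schmidt exploits when $p(s)=s$; for $p(s)<s$ the analogous factorization leaves residual factors that do not cancel neatly across the product in $i$. My plan is to sidestep the identity entirely and rely only on the two magnitude bounds above, which require no algebraic cancellation but merely that each factor be at most $1$ and strictly smaller when $\|r^{n}\ell/s^{i}\|$ is appreciable. The price is that the final $\alpha(r,s)$ is the Schmidt constant for $(r,s)$ diluted by a factor depending on $p(s)/s$, but since $p(s)\ge 1$ is an absolute lower bound, positivity of $\alpha(r,s)$ is preserved.
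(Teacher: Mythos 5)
There is a genuine gap, and it sits exactly where the real work of the lemma lives. The paper's proof is a near-verbatim adaptation of Schmidt's Lemma~5 (which, note, is the case $p(s)=2$, not $p(s)=s$; the factors there are $\lvert\cos(\pi y)\rvert$, so no telescoping identity is lost in the generalization). Its engine is Schmidt's Lemma~4: for $\ell\ge s^K$ and $r\not\sim s$, all but $N^{1-a_{14}}$ of the numbers $\ell,\ell r,\dots,\ell r^{N-1}$ have at least $a_{15}\log N$ \emph{obedient digit pairs} in base $s$ above level $K$; each such pair forces one factor of the product to be at most a fixed $\gamma_1<1$, so each good $n$ contributes at most $\gamma_1^{a_{15}\log N}=N^{-\gamma_2}$ and each bad $n$ contributes at most $1$, giving $N^{1-a_{14}}+N^{1-\gamma_2}$. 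Your proposal replaces this \emph{per-$n$} counting statement with a \emph{per-level} equidistribution bound $\#\{n<N:\|r^{n}\ell/s^{i}\|<\delta\}\le C(\delta N+N^{1-\alpha})$, and that substitution fails twice over. First, the bound itself is unjustified: when $\gcd(r,s)>1$ the element $r$ has no multiplicative order in $(\Z/s^{i}\Z)^{\times}$, and even when $\gcd(r,s)=1$, equidistribution of the multiplicative orbit $\{r^{n}\ell \bmod s^{i}\}$ at all scales $\delta$ is not an elementary consequence of the order growing; it is essentially the hard content that Schmidt's Lemma~4 packages in digit language. Second, even granting it, "summing over $n$ level by level" does not bound $\sum_{n}\prod_{i}$: one far-from-integer level buys only a constant factor $<1$, so you need, for all but $N^{1-\alpha}$ values of $n$, at least $\Omega(\log N)$ good levels \emph{for that $n$}, and a union of per-level counts (there are $\sim N\log_{s}r$ relevant pairs $(n,i)$ in total) does not deliver that without precisely the kind of argument Lemma~4 supplies. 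Your "geometric series in $i$" step is where this interchange is silently assumed.

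Two smaller points. The correct threshold for your dilution remark is $p(s)\ge 2$, not $p(s)\ge 1$: if $p(s)=1$ every factor is identically $1$ and the claimed bound is false, so positivity of $\alpha(r,s)$ is \emph{not} preserved down to $p(s)=1$ (the construction guarantees $p(v(k))=v^*(k)\ge 2$, which is why the paper can ignore this). Also, the analytic half of your plan (each factor is at most $1$, and at most $1-c_{s}\|y\|^{2}$, for $p(s)\ge 2$) is sound and is essentially the same observation the paper makes via $\bigl\lvert\sin(p(s)\pi/s^{2})/(p(s)\sin(\pi/s^{2}))\bigr\rvert\le\gamma_1<1$; the fix is to keep that part and import Schmidt's Lemma~4 verbatim for the counting, rather than inventing a per-level equidistribution input.
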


\begin{proof}
 The function,
 \begin{align*}
  f(x)=\left\lvert\frac{\sin(p(s)\pi x)}{p(s)\sin(\pi x)}\right\rvert
 \end{align*}
    has the limit $1$ as $x \to 0$ and $f(x)$ takes values strictly less then $1$ when $\lvert x \rvert < 1$. The first property follows from the fact that $\lim\limits_{x \to 0} \sin(x)/x = 1$ and the second fact follows from the inequality $\lvert \sin(nx) \rvert < n \lvert \sin(x) \rvert$, which is easily proved using induction on $n$ when $\lvert x \rvert < \pi$. If $x$ has an obedient digit pair \cite{Schmidt62} (i.e, a pair of digits in base $s$ that are not both equal to $0$ or both equal to $s-1$) then,
    \begin{align*}
     \left\lvert\frac{\sin(p(s)\pi x/s^i)}{p(s)\sin(\pi x/s^i)}\right\rvert &\leq \left\lvert\frac{\sin(p(s)\pi/s^2)}{p(s)\sin(\pi/s^2)}\right\rvert < \gamma_1 < 1.
    \end{align*}
    The constant $\gamma_1$ above depends only on $s$ (since the function $p$ is fixed). 
    Let $z_K(x)$ denote the number of obedient digit pairs $c_{i+1}c_i$ in $x$ for $i \geq K$ (where $c_{\lceil \log_s(x) \rceil} \dots c_2 c_1$ is the representation of $x$ in base $s$).  Lemma 4 from \cite{Schmidt62} implies that if $l \geq s^K$, then among the numbers $\ell, \ell r, \ell r^2 \dots \ell r^{N-1} $, there are at most $N^{1-a_{14}}$ numbers such that $z_K$ is smaller than $a_{15} \log N$ (where $a_{14}$ and $a_{15}$ are constants depending only on $r$ and $s$ and independent of $\ell$ and $K$). If for some $n$, $\ell r^n$ has $z_K$ greater than $a_{15} \log N$, then,
    \begin{align*}
    	\prod\limits_{i=K+1}^{\infty} \left\lvert \frac{\sin(p(s)\pi r^n \ell/s^i)}{p(s)\sin(\pi r^n \ell/s^i)} \right\rvert \leq \gamma_1^{a_{15}\log N} = N^{1-\gamma_2}
    \end{align*}
 	for some $\gamma_2 \in (0,1)$ dependent only on $r$ and $s$. Now, using the above bound along with Lemma 4 from \cite{Schmidt62}, we get that,
 	\begin{align*}
 		\sum\limits_{n=0}^{N-1} \prod\limits_{i=K+1}^{\infty} \left\lvert \frac{\sin(p(s)\pi r^n \ell/s^i)}{p(s)\sin(\pi r^n \ell/s^i)} \right\rvert \leq N^{1-a_{14}} + N^{1-\gamma_2} \leq 2 \cdot N^{1-\alpha(r,s)}
 	\end{align*}
	for some $\alpha(r,s) \in (0,1)$ dependent only on $r$ and $s$.
\end{proof}

Lemma 5 from \cite{Schmidt62} is a special case of the above lemma when $p(s)=2$. We assume that for any $r$ and $s$, the constant $\alpha(r,s)$ in the above lemma is at most $1/2$ and $\alpha(r,s)=\alpha(s,r)$. Now, we define the notion of good sequences of natural numbers.

\begin{definition}[\emph{Good sequences of natural numbers}]
\label{def:goodsequence}
	A sequence $\langle u(m) \rangle_{m=1}^{\infty}$ is a \emph{good sequence} of natural numbers if the following conditions are satisfied:
	\begin{enumerate}
		\item\label{item:goodsequence1}  $\prod\limits_{i=b_m-1}^{\infty} \left\lvert \frac{\sin(p(u(m))\pi /2^{i+1})}{p(u(m)) \sin(\pi /2^{i+1})} \right\rvert \geq \prod\limits_{i=1}^{\infty} \left\lvert \cos( \pi /2^{i+1}) \right\rvert$ for every $m > 1$.
		\item\label{item:goodsequence2} $\beta_m \geq \beta_1 \frac{1}{\sqrt[4]{m}}$ for every $m \geq 1$ where
		\begin{align*}
			\beta_m = \min\left(\{\alpha(u(i),u(j)):1\leq i \leq j \leq m \text{ such that } u(i) \not\sim u(j) \}\cup\left\{1/2\right\}\right).
		\end{align*}
		\item\label{item:goodsequence3} $u(m) \leq u(1) m$ for every $m \geq 1$.
		\item\label{item:goodsequence4} For any $m \geq 1$, if there exists any $m'<m$ such that $u(m') \neq u(m)$, then $b_m-a_m \geq \max\{N_{u(m)}(1/2),N_{p(u(m))}(1/2)\}$, where the constants on the right are from Corollary \ref{cor:lowdiscrepancy}. 
	\end{enumerate}
\end{definition}

From condition \ref{item:goodsequence2} and the fact that there does not exist any $u(j) \not\sim u(1)$ with $j \leq 1$, it follows that $\beta_1=1/2$. For every $i$, we use the notation $\beta'_m$\label{text:betaidefinition} to denote $\beta_m/2$. The existence of good sequences follows from following lemma and the fact that $b_m$ is increasing in $m$.
\begin{lemma}
\label{lem:convergenceofinfiniteproduct}
For any $n \in \N$, the infinite product 
$\prod\limits_{i=1}^{\infty} \left\lvert \frac{\sin(n\pi /2^{i+1})}{n \sin(\pi /2^{i+1})} \right\rvert$ is convergent.
\end{lemma}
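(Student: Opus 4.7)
The plan is to deduce convergence of the infinite product from absolute summability of $|1 - f_i|$, where $f_i := \sin(n\pi/2^{i+1})/(n\sin(\pi/2^{i+1}))$. First I would observe that $\sin(\pi/2^{i+1}) \neq 0$ for every $i \geq 1$, while $\sin(n\pi/2^{i+1})$ vanishes only when $2^{i+1}$ divides $n$, which happens for at most $\lfloor \log_2 n \rfloor$ indices. I would then peel off this finite prefix of factors (contributing a finite, possibly zero, constant to the product) and reduce to proving convergence of $\prod_{i \geq i_0} |f_i|$ for some $i_0$ with $2^{i_0 + 1} > n$. For such $i$, each factor is strictly positive, so $|f_i| = f_i$ and one can work with $f_i$ directly.

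Next, I would apply the second-order Taylor expansion of $\sin$ around $0$. Setting $\theta_i = \pi/2^{i+1}$, the expansions $\sin(n\theta_i) = n\theta_i - (n\theta_i)^3/6 + O((n\theta_i)^5)$ and $n\sin(\theta_i) = n\theta_i - n\theta_i^3/6 + O(n\theta_i^5)$ yield, upon dividing and simplifying, $f_i = 1 - (n^2-1)\theta_i^2/6 + O(n^4 \theta_i^4)$ as $i \to \infty$. Hence $|1 - f_i| \leq C_n \cdot 4^{-i}$ for some constant $C_n$ depending only on $n$, so $\sum_{i \geq i_0} |1 - f_i|$ is dominated by a convergent geometric series.

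Finally, I would invoke the standard criterion on infinite products: if each $f_i \neq 0$ and $\sum_i |1 - f_i| < \infty$, then $\prod_i f_i$ converges to a nonzero limit, and $\prod_i |f_i|$ converges as well. Multiplying back the finite prefix of peeled-off factors gives convergence of the original product in the lemma. I do not foresee any genuine obstacle; the argument is a routine Taylor expansion combined with a geometric-series comparison, the only subtlety being the bookkeeping of the finitely many indices where the numerator can vanish for small $i$ relative to $n$.
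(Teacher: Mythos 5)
Your proposal is correct and follows essentially the same route as the paper: both reduce the claim to the summability of $\sum_i \lvert 1 - f_i\rvert$ via a bound of the form $\lvert 1 - f_i\rvert = O_n(\theta_i^2)$ with $\theta_i = \pi/2^{i+1}$, and then invoke the standard convergence criterion for infinite products. The only differences are cosmetic: the paper obtains the quadratic bound through an explicit induction on $n$ (its Lemma \ref{lem:sinnxlowerbound}, $\sin(nx)/(n\sin x)\geq 1-(n^2-1)x^2/6$) rather than a Taylor expansion with big-$O$ error terms, and you additionally (and correctly) account for the finitely many indices at which the numerator can vanish, a point the paper's proof passes over silently.
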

In order to prove Lemma \ref{lem:convergenceofinfiniteproduct}, we need the following inequality.
\begin{lemma}
\label{lem:sinnxlowerbound}
For every $n \geq 2$ and $x$ with $\lvert x \rvert < 1$,
	\begin{align*}
 \frac{\sin(nx)}{n\sin(x)} \geq 1-\frac{(n^2-1)x^2}{6}.
\end{align*}
\end{lemma}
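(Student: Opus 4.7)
The plan is to proceed by induction on $n \geq 2$. The base case $n = 2$ reduces to $\sin(2x)/(2\sin x) = \cos x \geq 1 - x^2/2$, which holds for all real $x$: letting $f(x) = \cos x - 1 + x^2/2$, one has $f(0) = 0$, $f'(0) = 0$, and $f''(x) = 1 - \cos x \geq 0$, so $x = 0$ is a global minimum of $f$. The same argument applied to $\cos(nx)$ gives the auxiliary bound $\cos(nx) \geq 1 - n^2 x^2/2$ that will be needed in the inductive step.

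For the inductive step, I would apply the angle-addition identity $\sin((n+1)x) = \sin(nx)\cos x + \cos(nx)\sin x$ and divide through by $(n+1)\sin x$, obtaining
\begin{equation*}
\frac{\sin((n+1)x)}{(n+1)\sin x} \;=\; \frac{n\cos x}{n+1}\cdot\frac{\sin(nx)}{n\sin x} \;+\; \frac{\cos(nx)}{n+1}.
\end{equation*}
Since $|x| < 1 < \pi/2$, the factor $\cos x$ is nonnegative, so multiplying the inductive hypothesis $\sin(nx)/(n\sin x) \geq 1 - (n^2-1)x^2/6$ by $n\cos x/(n+1)$ preserves direction. Substituting $\cos x \geq 1 - x^2/2$ into that product and $\cos(nx) \geq 1 - n^2 x^2/2$ into the additive term, then dropping a nonnegative $x^4$ remainder, the right-hand side becomes
\begin{equation*}
1 \;-\; \frac{nx^2}{2} \;-\; \frac{n(n-1)x^2}{6} \;=\; 1 \;-\; \frac{n(n+2)x^2}{6} \;=\; 1 \;-\; \frac{((n+1)^2-1)x^2}{6},
\end{equation*}
closing the induction.

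The only subtle point is sign control in the inductive step: the quantity $1 - (n^2-1)x^2/6$ can be negative for large $n$, but this is harmless because it is multiplied by the nonnegative factor $n\cos x/(n+1)$, and the term $1 - n^2x^2/2$ is only added, never multiplied. So I do not anticipate any real obstacle; the work amounts to careful bookkeeping in the arithmetic simplification of the two lower bounds, which telescopes cleanly into the target coefficient $(n+1)^2 - 1 = n(n+2)$.
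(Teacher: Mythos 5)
Your proposal reproduces the paper's proof essentially line for line: the same induction on $n$, the same angle-addition decomposition, the same bounds $\cos x \ge 1-x^2/2$ and $\cos(nx)\ge 1-n^2x^2/2$, and the same telescoping arithmetic (which checks out). The one place where you go beyond the paper is the closing remark about sign control, and that is exactly where the reasoning is not yet right. Multiplying the inductive hypothesis by the nonnegative factor $\tfrac{n}{n+1}\cos x$ is indeed safe; the delicate step is the next one, where you replace $\cos x$ by the smaller quantity $1-x^2/2$ inside the product
\[
\Bigl(1-\tfrac{(n^2-1)x^2}{6}\Bigr)\cdot\tfrac{n}{n+1}\cos x .
\]
Replacing a positive factor by something smaller decreases the product only if the \emph{other} factor is nonnegative. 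When $x^2>6/(n^2-1)$ --- which happens for admissible $|x|<1$ as soon as $n\ge 3$ --- the bracket is negative, the substitution moves the bound in the wrong direction, and the chain of inequalities breaks; the trivial estimate $|\sin(nx)/(n\sin x)|\le 1$ does not always rescue the conclusion in that window. So ``it is multiplied by the nonnegative factor'' addresses the harmless multiplication, not the problematic substitution that follows it.

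To be fair, the paper's own proof performs the identical substitution without comment, so the same latent gap is present there. It is benign in context for the following reason: for fixed $x$ the bracket $1-(m^2-1)x^2/6$ decreases in $m$, so an induction terminating at any pair $(n,x)$ for which the claimed lower bound is nonnegative (the only regime in which the lemma carries information, and the only one used later, with $x=\pi/2^{i+1}$ in the tail of an infinite product) passes exclusively through nonnegative brackets. If you want the statement exactly as quoted, for all $|x|<1$ and all $n$, a gap-free route avoids induction altogether: write $\frac{\sin(nx)}{\sin x}=\sum_{k=0}^{n-1}\cos\bigl((n-1-2k)x\bigr)$, apply $\cos t\ge 1-t^2/2$ termwise, and use $\sum_{k=0}^{n-1}(n-1-2k)^2=n(n^2-1)/3$ to recover the coefficient $(n^2-1)/6$.
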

\begin{proof}
	We prove the statement using induction on $n$. Consider the base case when $n=2$. We have,
	\begin{align*}
	\frac{\sin(2x)}{2\sin(x)} = \cos(x) \geq 1-\frac{x^2}{2}=1-\frac{(2^2-1)x^2}{6}.	
	\end{align*}
	The inequality $\cos(x) \geq 1-x^2/2$ follows easily from the Taylor series expansion of $\cos(x)$ since $\lvert x \rvert <1$.
	
	Assume that the statement in the conclusion holds for arbitrary $n$. We now show that this implies the required conclusion for $n+1$. We have,
	\begin{align*}
	\frac{\sin((n+1)x)}{(n+1)\sin(x)} &= \frac{\sin(nx)\cos(x)+\cos(nx)\sin(x)}{(n+1)\sin(x)}\\
	&= \frac{\sin(nx)}{n\sin(x)}\cdot \frac{n}{n+1}\cdot \cos(x)+\frac{1}{n+1}\cdot \cos(nx).
	\end{align*}
	Using the induction hypothesis and the inequality $\cos(x) \geq 1-x^2/2$, we get,
	\begin{align*}
	\frac{\sin((n+1)x)}{(n+1)\sin(x)} &\geq \left( 1-\frac{(n^2-1)x^2}{6} \right)\cdot \frac{n}{n+1} \cdot \left(1-\frac{x^2}{2} \right) + \frac{1}{n+1}\cdot \left(1-\frac{n^2x^2}{2} \right)\\
	&> \frac{n}{n+1}-\frac{1}{n+1}\cdot \frac{(n^3-n+3n)x^2}{6}+\frac{1}{n+1}-\frac{1}{n+1}\cdot \frac{n^2x^2}{2}\\
	&=1-\frac{1}{n+1}\cdot \frac{(n^3+3n^2+2n)x^2}{6}\\
	&=1-\frac{1}{n+1}\frac{(n+1)(n^2+2n)x^2}{6}\\
	&=1-\frac{(n^2+2n)x^2}{6}\\
	&=1-\frac{((n+1)^2-1)x^2}{6}
 	\end{align*}
 	The lemma now follows due to induction.
\end{proof}

Now, we prove Lemma \ref{lem:convergenceofinfiniteproduct}. 
\begin{proof}[Proof of Lemma \ref{lem:convergenceofinfiniteproduct}]
	From Lemma \ref{lem:sinnxlowerbound}, we get that,
\begin{align*}
 \left\lvert \frac{\sin(nx)}{n\sin(x)}-1 \right\rvert \leq \frac{(n^2-1)x^2}{6}.
\end{align*}
Now,
\begin{align*}
 \left\lvert \frac{\sin(nx)}{n\sin(x)}-1 \right\rvert \leq \frac{(n^2-1)x^2}{6}.
\end{align*}
This implies that,
\begin{align*}
 \sum\limits_{i=1}^{\infty} \left\lvert \frac{\sin(n\pi /2^{i+1})}{n\sin(\pi /2^{i+1})}-1 \right\rvert < \infty.
\end{align*}

Now, using Proposition 3.1 from \cite{SteinShakarchiComplex}, it follows that the infinite product is convergent. The convergence of $\prod\limits_{i=1}^{\infty} \left\lvert \cos( \pi /2^{i+1}) \right\rvert$ also follows from this argument since,
\begin{align*}
	\cos\left(\pi /2^{i+1}\right)= \frac{\sin\left(2\pi /2^{i+1}\right)}{2\sin\left(\pi /2^{i+1}\right)}.
\end{align*}

\end{proof}

Since $b_m$ is strictly increasing in $m$, any \emph{sufficiently delayed} sequence of natural numbers is a good sequence. Therefore, it is straightforward to verify that given any subset $S$ of $\N$ there is a good sequence $\langle u(m) \rangle$ containing exactly the elements of $S$ such that every element of $S$ occurs infinitely many times in $\langle u(m) \rangle$. This observation is important in our construction.

The first technical lemma is a generalization of the bound on exponential sums given in Lemma 7 from \cite{Schmidt62}.
\begin{lemma}
\label{lem:ambound}
Let $\langle u(m) \rangle_{m=1}^{\infty}$ be any good sequence of
bases greater than or equal to $2$. Let $\xi$ be the real number that
is obtained as the limit of $\langle \xi_m \rangle_{m=1}^{\infty}$
where each $\xi_m$ is chosen according to Criterion
\ref{item:criterion1} or \ref{item:criterion2}. Then, there exists a
constant $\delta$ depending only on $u(1)$ such that for every $m \geq
1$, $A_m(\xi) \leq \delta m^2 (\langle m+1 \rangle - \langle m
\rangle)^{2-\beta_m}$. 
\end{lemma}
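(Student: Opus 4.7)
I would adapt the proof of Schmidt's Lemma 7 to the generalized setting where the fixing alphabet at step $m$ has size $C\in\{u(m),p(u(m))\}$, depending on whether Criterion \ref{item:criterion2} or \ref{item:criterion1} was used. The heart of the argument is an averaging bound. Since $\xi_m$ is chosen to minimize $A_m$ over the set $V$ of valid coefficient strings (those lying in $\mathcal{G}_{u(m)}^{b_m-a_m+2}$ or $\mathcal{G}_{p(u(m))}^{b_m-a_m+2}$), and condition \ref{item:goodsequence4} of Definition \ref{def:goodsequence} guarantees $b_m-a_m\geq N_{u(m)}(1/2)$ (resp.\ $N_{p(u(m))}(1/2)$) whenever the restriction is nontrivial, Corollary \ref{cor:lowdiscrepancy} yields $|V|\geq |\sigma|/2$, where $\sigma=\sigma_m(\xi_{m-1})$ or $\sigma_m^*(\xi_{m-1})$. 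Hence
$$A_m(\xi_m)\;\leq\;\frac{2}{|\sigma|}\sum_{\xi'\in\sigma}A_m(\xi'),$$
and it suffices to bound the unrestricted average over $\sigma$.

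The next step is to expand each inner squared sum as $\sum_{j_1,j_2}e((u(h)^{j_1-1}-u(h)^{j_2-1})t\xi')$, write $\xi'=\eta_m(\xi_{m-1})+\sum_{i=a_m+1}^{b_m-2}c_i\,u(m)^{-i}$, and average over independent uniform choices of $c_i\in\{0,\dots,C-1\}$. By independence, the average factors across $i$, each factor being a Dirichlet-kernel expression of modulus $|\sin(C\pi\theta_i)/(C\sin(\pi\theta_i))|$ where $\theta_i=(u(h)^{j_1-1}-u(h)^{j_2-1})t\,u(m)^{-i}$. The diagonal terms $j_1=j_2$ contribute trivially $N_h:=\langle m+1;u(h)\rangle-\langle m;u(h)\rangle$ per pair $(t,h)$, summing to $O(m^2 N_h)$, well within the claimed bound.

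For the off-diagonal terms, the substitution $n=\min(j_1,j_2)-1$ and $\ell=(u(h)^{|j_1-j_2|}-1)t$ brings the sinc product into precisely the form bounded by Lemma \ref{lem:sinbound} with $r=u(h)$ and $s=u(m)$, which are non-equivalent by the outer restriction $u(h)\not\sim u(m)$. To match the infinite-product shape of Lemma \ref{lem:sinbound}, I extend the finite product range $[a_m+1,b_m-2]$ to $[a_m+1,\infty)$; goodness condition \ref{item:goodsequence1} gives a uniform positive lower bound on the missing tail (using Lemma \ref{lem:convergenceofinfiniteproduct} for convergence), so the extension costs only an absolute multiplicative constant. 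Lemma \ref{lem:sinbound} then gives $\sum_n(\text{product})\leq 2N_h^{1-\alpha(u(h),u(m))}\leq 2N_h^{1-\beta_m}$; combined with the $O(N_h)$ sum over the gap $|j_1-j_2|$, this is $O(N_h^{2-\beta_m})$ per pair $(t,h)$. Using $u(m)\leq u(1)m$ (condition \ref{item:goodsequence3}) there are $O(m^2)$ pairs $(t,h)$ in the outer sums, and $N_h\leq\langle m+1\rangle-\langle m\rangle$, yielding the desired bound on $A_m(\xi_m)$ with a constant that depends only on $u(1)$ (through the tail constant of condition \ref{item:goodsequence1}). The passage from $A_m(\xi_m)$ to $A_m(\xi)$ is routine: Lemma \ref{lem:digitsinlimitnumber} gives $|\xi-\xi_m|\leq u(m)^{-(b_m-2)}$, which against the Lipschitz constant $2\pi m\, u(h)^{\langle m+1;u(h)\rangle}\leq 2\pi m\, e^{\langle m+1\rangle}$ of the exponentials leaves a perturbation of order $O(m^5)$ per $(t,h)$, absorbed into $\delta$.

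The main obstacle I anticipate is the uniform treatment of Criteria 1 and 2: Lemma \ref{lem:sinbound} is stated with the alphabet size $p(s)$, whereas Criterion \ref{item:criterion2} uses the full alphabet of size $u(m)$. Inspection of the proof shows that the same argument works with any alphabet size $C\in\{2,\dots,s\}$—the only ingredient used is that the sinc ratio is bounded strictly below $1$ on obedient digit pairs, which remains true—so this is a minor restatement rather than a new estimate. The secondary bookkeeping obstacle is ensuring the constant absorbed by extending the product tail really is independent of $m$; condition \ref{item:goodsequence1} is designed precisely for this purpose.
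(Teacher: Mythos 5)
Your overall strategy is the paper's own: the minimizer over the at-least-half-density good set plus Markov gives $A_m(\xi_m)\leq \frac{2}{|\sigma|}\sum_{\xi'\in\sigma}A_m(\xi')$; the unrestricted average factors coordinate-wise into Dirichlet-kernel (sinc-ratio) products; the finite product over $i\in[a_m+1,b_m-2]$ is extended to $[a_m+1,\infty)$ at the cost of the tail constant guaranteed by condition \ref{item:goodsequence1} of Definition \ref{def:goodsequence}; Lemma \ref{lem:sinbound} supplies the $N^{1-\alpha}$ saving; and the passage from $A_m(\xi_m)$ to $A_m(\xi)$ is a perturbation estimate via Lemma \ref{lem:digitsinlimitnumber}, with polynomial-in-$m$ losses swallowed by $N^{1-\beta_m}\geq N^{1/2}$. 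Your remark that Criterion \ref{item:criterion2} is just the case $p(u(m))=u(m)$ of the same computation is also exactly how the paper treats it.

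There is, however, one concrete gap in the off-diagonal analysis: you apply Lemma \ref{lem:sinbound} to \emph{every} off-diagonal pair $(j_1,j_2)$, taking $\ell=(u(h)^{|j_1-j_2|}-1)t$. The lemma requires $\ell\geq s^{K}$ with $s=u(m)$ and $K=a_m$, and your $\ell$ can be as small as $u(h)-1$. Moreover the lemma's sum runs over $n=0,\dots,N-1$, whereas your $n=\min(j_1,j_2)-1$ starts at $\langle m;u(h)\rangle$; to match the lemma you must absorb $u(h)^{\langle m;u(h)\rangle}\geq e^{\langle m\rangle}$ into $\ell$. Even after doing so, since $u(m)^{a_m}$ can be as large as $e^{\langle m\rangle}u(m)$, the hypothesis reduces to roughly $(u(h)^{|j_1-j_2|}-1)\,|t|\geq u(m)$, which fails for small gaps (e.g. $|j_1-j_2|=1$, $|t|=1$, $u(h)=2$, $u(m)$ large) — and for genuinely small $\ell$ the conclusion of Lemma \ref{lem:sinbound} also fails, since the first $\approx K\log s/\log r$ products are each close to $1$. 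The paper's remedy is precisely its $B_m$/$C_m$ split: all terms with $|j-g|<m$ (together with the boundary terms) are carved off into $B_m$ and bounded trivially by $O(m^3(\langle m+1\rangle-\langle m\rangle))\leq 16m^2(\langle m+1\rangle-\langle m\rangle)^{2-\beta_m}$, and only the remaining piece, where the gap $g\geq m$ forces $|L|\geq u(m)^{a_m+1}$, is fed into Lemma \ref{lem:sinbound}. Your diagonal-only exclusion is not enough; you need to widen the excluded band to gaps up to about $\log_{u(h)}u(m)$ (or, as the paper does, up to $m$), after which the rest of your argument goes through.
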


\begin{proof}
$A_m(x)$ is equal to
\begin{align*}
A_m(x) = \sum\nolimits_{\substack{t=-m\\ t\neq 0}}^{m} \sum\nolimits_{\substack{h=1 \\ u(h) \not\sim u(m)}}^{m}\sum\nolimits_{j=\langle m;u(h) \rangle+1}^{\langle m+1;u(h) \rangle} \sum\nolimits_{g=\langle m;u(h) \rangle+1}^{\langle m+1;u(h) \rangle} e((u(h)^j-u(h)^g)tx)
\end{align*}
Let $B_m(x)$ denote the part of the sum for which either $\lvert j-g \rvert < m$ or $j$ or $g$ is at least $\langle m+1;u(h) \rangle  -m$, and let $C_m(x)$ denote the rest of the sum. We get the trivial estimate for $B_m(x)$,
\begin{align*}
B_m(x) &\leq 2m\sum\limits_{h=1}^{m} \sum'  \lvert e((u(h)^j-u(h)^g)tx) \rvert +2m \sum\limits_{h=1}^{m}\sum''\lvert e((u(h)^j-u(h)^g)tx) \rvert \\
&+  2m \sum\limits_{h=1}^{m}\sum\limits_{\substack{j = \langle m+1; u(h) \rangle-m}}^{\langle m+1;u(h) \rangle} \sum\limits_{g=\langle m;u(h) \rangle+1}^{\langle m+1;u(h) \rangle}  \lvert e((u(h)^j-u(h)^g)tx) \rvert \\
&+2m \sum\limits_{h=1}^{m}\sum\limits_{\substack{g = \langle m+1; u(h) \rangle-m}}^{\langle m+1;u(h) \rangle} \sum\limits_{j=\langle m;u(h) \rangle+1}^{\langle m+1;u(h) \rangle} \lvert e((u(h)^j-u(h)^g)tx) \rvert
\end{align*}
where $\sum'$ denotes the sum over all $j$ and $g$ between $\langle m;u(h) \rangle+1$ and $\langle m+1;u(h) \rangle$,  such that $\lvert j-g \rvert <m$ and $j > g$. Similarly, $\sum''$ denotes the sum over all $j$ and $g$ between $\langle m;u(h) \rangle+1$ and $\langle m+1;u(h) \rangle$, such that $\lvert j-g \rvert <m$ and $g \geq j$. Therefore,
\begin{align*}
B_m(x) &\leq 2m \sum\limits_{h=1}^{m} \left(\langle m+1;u(h) \rangle-\langle m;u(h) \rangle \right) + 2m \sum\limits_{h=1}^{m} \left(\langle m+1;u(h) \rangle-\langle m;u(h) \rangle \right) \\
&+ 2m^2  \sum\limits_{h=1}^{m} \left(\langle m+1;u(h) \rangle-\langle m;u(h) \rangle \right) + 2m^2  \sum\limits_{h=1}^{m} \left(\langle m+1;u(h) \rangle-\langle m;u(h) \rangle \right)\\
&= 4m  \sum\limits_{h=1}^{m} \left(\langle m+1;u(h) \rangle-\langle m;u(h) \rangle \right) + 4m^2  \sum\limits_{h=1}^{m} \left(\langle m+1;u(h) \rangle-\langle m;u(h) \rangle \right)\\
&< 8m^2  \sum\limits_{h=1}^{m} \left(\langle m+1;u(h) \rangle-\langle m;u(h) \rangle \right)\\
&< 8m^2  \sum\limits_{h=1}^{m} \left(\langle m+1 \rangle-\langle m \rangle +2 \right) \\
&\leq 16m^2  \sum\limits_{h=1}^{m} \left(\langle m+1 \rangle-\langle m \rangle \right) \\
&\leq 16m^3 \left(\langle m+1 \rangle-\langle m \rangle \right)\\
&\leq 16m^2 \left(\langle m+1 \rangle-\langle m \rangle \right)^{3/2}\\
&\leq 16m^2 \left(\langle m+1 \rangle-\langle m \rangle \right)^{2-\beta_m}.
\end{align*}

The second last inequality above follows because $m^2 \leq \langle m+1 \rangle-\langle m \rangle $. The last inequality follows since $\alpha(r,s)\leq 1/2$ for any $r$ and $s$. In the above argument we also used the fact that for any $m$ and $r$,
\begin{align*}
\langle m+1; r\rangle -\langle m; r \rangle &\leq \frac{\langle m+1 \rangle}{\log(r)} + 1 - \frac{\langle m \rangle}{\log(r)}+1\\
&= \frac{\langle m+1 \rangle-\langle m \rangle}{\log(r)} +2\\
&< \langle m+1 \rangle-\langle m \rangle +2\\
&\leq 2 (\langle m+1 \rangle-\langle m \rangle).
\end{align*}
The last inequality above follows because $\langle m+1 \rangle-\langle m \rangle \geq 2$ for any $m$. Since, the bound on $B_m(x)$ is true for any $x$, we get,
\begin{align}
\label{eq:bmbound}
\left\lvert B_m(\xi)-B_m(\xi_m) \right\rvert \leq \left\lvert B_m(\xi) \right\rvert  + \left\lvert B_m(\xi_m) \right\rvert \leq 32 \cdot  m^2 \left(\langle m+1 \rangle-\langle m \rangle \right)^{2-\beta_m}.
\end{align}
Now, we estimate $\left\lvert C_m(\xi)-C_m(\xi_m) \right\rvert$. Consider the inner term of $C_m(\xi)-C_m(\xi_m)$ with $t$ and $h$ fixed. For $m \geq M_1$, which depends only on $u(1)$, this term is at most equal to,
\begin{align*}
2 \sum\limits_{g=m}^{\substack{\langle m+1;u(h) \rangle\\-\langle m;u(h) \rangle\\-m}} \sum\limits_{j=1}^{\substack{\langle m+1;u(h) \rangle\\-\langle m;u(h) \rangle\\-m-g}} \left\lvert e(L_g u(h)^{-j}) -1\right\rvert &\leq 2 \sum\limits_{g=m}^{\substack{\langle m+1;u(h) \rangle\\-\langle m;u(h) \rangle\\-m}} \sum\limits_{j=1}^{\infty} u(h)^{-j}\\
&< 2 \left(\langle m+1;u(h) \rangle-\langle m;u(h) \rangle\right)\\
&\leq 2 \left(\langle m+1 \rangle-\langle m \rangle+1\right)\\
&\leq 4 \left(\langle m+1 \rangle-\langle m \rangle\right). 
\end{align*}
In the above expression $L_g= (u(h)^g-1) u(h)^{\langle m+1;u(h) \rangle-m-g}t(\xi-\xi_m)$. The first inequality above follows from $\lvert L_g \rvert \leq 1/2$. This is true because,
\begin{align*}
	\left\lvert L_g\right\rvert &\leq u(h)^{\langle m+1;u(h) \rangle-m}m\times u(m)^{-\langle m+1;u(m) \rangle+2}\\
	 &\leq e^{\langle m+1 \rangle}u(h)^{1-m}m \times e^{-\langle m+1 \rangle} u(m)^2\\
	 &= u(h)^{1-m}m \times u(m)^2 \\
	 &\leq \frac{u(1)^2m^4}{u(h)^m} \\
	 &\leq \frac{u(1)^2m^4}{2^m}
\end{align*}
In the above we used the fact that $u(m) \leq mu(1)$ since $\langle u(m) \rangle_{m=1}^{\infty}$ is a good sequence. From this, we obtain that $\lvert L_g \rvert \leq 1/2$ when $m \geq M_1$ for large enough $M_1$, where $M_1$ only depends on $u(1)$. Therefore, we conclude that,
\begin{align}
\label{eq:cmbound}
\left\lvert	C_m(\xi)-C_m(\xi_m) \right\rvert &\leq 8 \cdot  m^2 \left( \langle m+1 \rangle -\langle m \rangle \right)^{2-\beta_m}.
\end{align}
for $m \geq M_1$. From (\ref{eq:bmbound}) and (\ref{eq:cmbound}), we get that,
\begin{align}
\label{eq:amdistancebound}
	\lvert A_m(\xi) - A_m(\xi_m) \rvert \leq 40 \cdot m^2 \left( \langle m+1 \rangle -\langle m \rangle \right)^{2-\beta_m}
\end{align}
for $m \geq M_1$. In order to complete the proof, it is enough to show that, 
\begin{align*}
	A_m(\xi_m) \leq c \cdot m^2 \left( \langle m+1 \rangle -\langle m \rangle \right)^{2-\beta_m}
\end{align*}
 for $m\geq M$ where $c$ and $M$ are constants depending only on $u(1)$. We first consider the case when during step $m$, $\xi_m$ is chosen from $\sigma_m^*(\xi_{m-1})$ according to Criterion \ref{item:criterion1}. Let $\mu A^*_m$ denote the average of the function $A_m$ over all $p(u(m))^{b_m-a_m-2}$ elements of $\sigma_m^*(\xi_{m-1})$.  Consider the set  $\mathcal{G}_{p(u(m))}^{b_m-a_m-2}$ from Corollary \ref{cor:lowdiscrepancy}. If there exists $m'<m$ such that $u(m') \neq u(m)$ then from condition \ref{item:goodsequence4} in Definition \ref{def:goodsequence}, we know that $b_m-a_m+2 > b_m-a_m \geq \max\{N_{u(m)}(1/2),N_{p(u(m))}(1/2)\}$.  In this case
\begin{align*}
	\lvert \mathcal{G}_{p(u(m))}^{b_m-a_m-2} \rvert 	 &\geq \frac{1}{2} p(u(m))^{b_m-a_m-2} = \frac{1}{2} \left\lvert \Sigma_{p(u(m))}^{b_m-a_m-2}\right\rvert.
\end{align*}
If $\lvert\mu A^*_m \rvert \leq c \cdot m^2 \left( \langle m+1 \rangle -\langle m \rangle \right)^{2-\beta_m}$, then using the Markov inequality, we obtain $A_m(\xi_m) \leq 2c \cdot m^2 \left( \langle m+1 \rangle -\langle m \rangle \right)^{2-\beta_m}$, which proves the required inequality. If there does not exist any $m'<m$ such that $u(m') \neq u(m)$, then we know that $u(m)=u(m-1)=u(m-2)=\dots u(2)=u(1)$. But there exists at most constantly many $m$ (where the constant depends only on $u(1)$) such that,
\begin{align*}
\langle m+1 ; u(1) \rangle- \langle m;u(1) \rangle < \max\{N_{u(1)}(1/2),N_{p(u(1))}(1/2)\}.
\end{align*}
Since the above constant depends only on $u(1)$, the value of $A_m$ for these values of $m$ can be absorbed within the constant $\delta$ in the statement of the lemma. Therefore, we only address the case when there exists $m'<m$ such that $u(m') \neq u(m)$. It is enough to show that $\lvert\mu A^*_m \rvert \leq c \cdot m^2 \left( \langle m+1 \rangle -\langle m \rangle \right)^{2-\beta_m}$. We now show that this upper bound is true for some $c$. Using the bound on $B_m(x)$ proved above, we immediately get that,
\begin{align*}
\lvert \mu B^*_m \rvert \leq 16 \cdot m^2 \left(\langle m+1 \rangle-\langle m \rangle \right)^{2-\beta_m}.
\end{align*}
 Now, we give a similar upper bound for $\lvert \mu C^*_m \rvert$ to obtain the required bound on $\lvert\mu A^*_m \rvert$. Consider the average $\mu f^*$ of the function $f(x)=e(tx)$ over $\sigma_m^*(\xi_{m-1})$.
 \begin{align*}
 &\lvert \mu f^* \rvert \\
 &= \lvert e(t\eta_m(\xi_{m-1})) \rvert \prod\limits_{i=a_m+1}^{b_m-2} \left\lvert \frac{1+e(tu(m)^{-i})+e(t2u(m)^{-i})+\dots +e(t(p(u(m))-1)u(m)^{-i})}{p(u(m))}\right\rvert\\
 	&=  \prod\limits_{i=a_m+1}^{b_m-2} \left\lvert \frac{e(tp(u(m))u(m)^{-i})-1}{p(u(m))(e(tu(m)^{-i})-1)}\right\rvert\\
 	&= \prod\limits_{i=a_m+1}^{b_m-2} \left\lvert \frac{\sin( p(u(m)) \pi  t  u(m)^{-i})}{p(u(m))\sin(\pi t u(m)^{-i})} \right\rvert
 \end{align*}
    Using the above, we get that,
\begin{align*}
 &\left\lvert \mu C_m^* \right\rvert \\
 &\leq 2 \sum\limits_{\substack{t=-m\\t \neq 0}}^{m} \sum\limits_{\substack{h=1 \\ u(h) \not\sim u(m)}}^{m}\sum\limits_{g=m}^{M'} \sum\limits_{j=1}^{M'-g}  \prod\limits_{i=a_m+1}^{b_m-2} \left\lvert \frac{\sin(p(u(m))\pi (u(h)^g-1)u(h)^{\langle m;u(h) \rangle}t u(h)^j  u(m)^{-i})}{p(u(m)) \sin(\pi (u(h)^g-1)u(h)^{\langle m;u(h) \rangle}t u(h)^j  u(m)^{-i})} \right\rvert
\end{align*}
where $M'=\langle m+1;u(h) \rangle-\langle m;u(h) \rangle-m$. Fixing $t$, $h$ and $g$ and letting $L=(u(h)^g-1)u(h)^{\langle m;u(h) \rangle}t$, the inner sum is equal to,
\begin{align*}
 \sum\limits_{j=1}^{\langle m+1;u(h) \rangle-\langle m;u(h) \rangle-m-g} \prod\limits_{i=a_m+1}^{b_m-2} \left\lvert \frac{\sin(p(u(m))\pi L u(h)^j  u(m)^{-i})}{p(u(m)) \sin(\pi L u(h)^j  u(m)^{-i})} \right\rvert
\end{align*}
Now, observe that,
\begin{align*}
 \frac{Lu(h)^j}{u(m)^{b_m-2}} \leq \frac{u(h)^{\langle m+1;u(h) \rangle-m}}{u(m)^{b_m-2}} \leq \frac{u(m)^2}{u(h)^{m-1}} \leq \frac{u(1)m}{2^{m-1}} \leq \frac{1}{2}
\end{align*}
for $m \geq M_2$ where $M_2$ depends only on $u(1)$. In the above statement, we used the fact that $\langle u(m) \rangle_{m=1}^{\infty}$ is a good sequence of bases greater than or equal to $2$. Therefore, using condition \ref{item:goodsequence3} in the definition of a good sequence (Definition \ref{def:goodsequence}) we get that,
\begin{align*}
 \prod\limits_{i=b_m-1}^{\infty} \left\lvert \frac{\sin(p(u(m))\pi L u(h)^j  u(m)^{-i})}{p(u(m)) \sin(\pi L u(h)^j  u(m)^{-i})} \right\rvert &\geq \prod\limits_{i=b_m-1}^{\infty} \left\lvert \frac{\sin(p(u(m))\pi \frac{1}{2^{i+1}})}{p(u(m)) \sin(\pi \frac{1}{2^{i+1}})} \right\rvert\\
 & \geq \prod\limits_{i=1}^{\infty} \left\lvert \cos\left( \frac{\pi}{2^{i+1}}\right) \right\rvert \\
 &= \eta >0.
\end{align*}
Here, $\eta=\prod_{i=1}^{\infty} \left\lvert \cos\left( \frac{\pi}{2^{i+1}}\right) \right\rvert$ is an absolute constant. Therefore,
\begin{align*}
  &\sum\limits_{j=1}^{\langle m+1;u(h) \rangle-\langle m;u(h) \rangle-m-g} \prod\limits_{i=a_m+1}^{b_m-2} \left\lvert \frac{\sin(p(u(m))\pi L u(h)^j  u(m)^{-i})}{p(u(m)) \sin(\pi L u(h)^j  u(m)^{-i})} \right\rvert \\
  &\leq \eta^{-1}  \sum\limits_{j=1}^{\langle m+1;u(h) \rangle-\langle m;u(h) \rangle-m-g} \prod\limits_{i=a_m+1}^{\infty} \left\lvert \frac{\sin(p(u(m))\pi L u(h)^j  u(m)^{-i})}{p(u(m)) \sin(\pi L u(h)^j  u(m)^{-i})} \right\rvert
\end{align*}

Now,  
\begin{align*}
	\lvert L \rvert &\geq (u(h)^g-1)u(h)^{\langle m;u(h) \rangle}\\
	& \geq (2^m-1)e^{\langle m \rangle}\\
	& > (2^m-1) u(m)^{a_m-1}\\
	&=u(m)^{a_m+1}\frac{2^m-1}{u(m)^2}\\
	&\geq u(m)^{a_m+1}\frac{2^m-1}{u(1)^2 m^2}\\
	&>u(m)^{a_m+1}
\end{align*}
for $m>M_3$ where $M_3$ depends only on $u(1)$. In the above argument,  $u(m)^2 \leq m^2 u(1)^2$ follows from property \ref{item:goodsequence3} in the definition of good sequences (Definition \ref{def:goodsequence}). Since $\lvert L \rvert \geq u(m)^{a_m+1}$, we apply Lemma \ref{lem:sinbound} to obtain,

\begin{align*}
 &\eta^{-1} \sum\limits_{j=1}^{\langle m+1;u(h) \rangle-\langle m;u(h) \rangle-m-g} \prod\limits_{i=a_m+1}^{\infty} \left\lvert \frac{\sin(p(u(m))\pi L u(h)^j  u(m)^{-i})}{p(u(m)) \sin(\pi L u(h)^j  u(m)^{-i})} \right\rvert \\
 &\leq 2\eta^{-1} \left(\langle m+1;u(h) \rangle-\langle m;u(h) \rangle-m-g \right)^{1-\alpha(u(h),u(m))}\\
 &\leq 2\eta^{-1} \left(\langle m+1;u(h) \rangle-\langle m;u(h) \rangle \right)^{1-\alpha(u(h),u(m))}\\
 &\leq  2\eta^{-1} \left(\langle m+1;u(h) \rangle-\langle m;u(h) \rangle \right)^{1-\beta_m}\\
 &< 2\eta^{-1} \left(\langle m+1 \rangle-\langle m \rangle+1 \right)^{1-\beta_m}\\
 &\leq 2\eta^{-1}\cdot 2^{1-\beta_m} \left(\langle m+1 \rangle-\langle m \rangle \right)^{1-\beta_m}\\
 &\leq 2\eta^{-1}\cdot 2 \left(\langle m+1 \rangle-\langle m \rangle \right)^{1-\beta_m}\\
 &= 4\eta^{-1}\cdot \left(\langle m+1 \rangle-\langle m \rangle \right)^{1-\beta_m}\\
\end{align*}
Therefore,
\begin{align*}
 \lvert \mu C_m^* \rvert &\leq 2m \cdot m \cdot 2 (\langle m+1 \rangle-\langle m \rangle)\cdot 4\eta^{-1}\cdot \left(\langle m+1 \rangle-\langle m \rangle \right)^{1-\beta_m}\\
 &\leq 16  \eta^{-1} \cdot \left(\langle m+1 \rangle-\langle m \rangle \right)^{2-\beta_m} 
\end{align*}
for $m \geq \max\{M_2,M_3\}$. Therefore, we obtain that for for $m \geq \max\{M_2,M_3\}$,
\begin{align}
\label{eq:amaveragebound}
\lvert\mu A^*_m \rvert \leq 32 \eta^{-1} \cdot m^2 \left( \langle m+1 \rangle -\langle m \rangle \right)^{2-\beta_m}.
\end{align}
Since $M_1$, $M_2$ and $M_3$ are constants that only depend on $u(1)$, using (\ref{eq:amdistancebound}) and (\ref{eq:amaveragebound}), we get that there exists a constant $\delta_1$ depending only on $u(1)$ such that,
\begin{align*}
	A_m(\xi) \leq \delta_1 m^2 (\langle m+1 \rangle - \langle m \rangle)^{2-\beta_m}
\end{align*} 
for any $m \geq 1$ when $\xi_m$ is chosen from $\sigma^*_m(\xi_{m-1})$ according to Criterion \ref{item:criterion1}. The case when $\xi_m$ is chosen from $\sigma_m(\xi_{m-1})$ according to Criterion \ref{item:criterion2} is handled in a very similar way since this is equivalent to the special case when $p(u(m))=u(m)$ itself. In this case we obtain a constant $\delta_2$ depending only on $u(1)$ such that,
\begin{align*}
	A_m(\xi) \leq \delta_2 m^2 (\langle m+1 \rangle - \langle m \rangle)^{2-\beta_m}
\end{align*} 
for any $m \geq 1$ when $\xi_m$ is chosen from $\sigma_m(\xi_{m-1})$ according to Criterion \ref{item:criterion2}. Letting $\delta=\max\{\delta_1,\delta_2\}$, we obtain the final bound,
\begin{align*}
 A_m(\xi_m) \leq \delta m^2 \left( \langle m+1 \rangle -\langle m \rangle \right)^{2-\beta_m}
\end{align*}
for every $m \geq 1$, which completes the proof of the lemma since $\delta$ depends only on $u(1)$.
\end{proof}
 
We make crucial use of the following lemma regarding the \emph{uniform normality} of the infinite sequences constructed by choosing successive set of digits from $\mathcal{G}_j^{b_m-a_m-2}$, in the proof of Theorem \ref{thm:maintheorem}. 

\begin{lemma}
\label{lem:lowdiscrepancycombinationnormality}
Let $b$ be any base and $j \leq b$. For any finite string $w \in
\Sigma_j^*$, $\epsilon>0$, there exists an integer
$L'_{b,j}(w,\epsilon)$ satisfying the following property. If $X$ is
any infinite sequence in $\Sigma_j^\infty$ such that,$X_{\langle m;b
  \rangle +1}X_{\langle m;b \rangle +2}\dots X_{\langle m+1 ;b \rangle
  -2} \in \mathcal{G}_j^{\langle m+1 ;b \rangle -\langle m;b \rangle
  -2} $ for every $m> 0$ and if $T\geq 0$ is any non-negative integer,
then for all $n \geq \langle T;b \rangle+L'_{b,j}(w,\epsilon)$,
$\lvert P(X_{\langle T;b \rangle+1}^n,w)- j^{-\lvert w \rvert} \rvert
\leq \epsilon$.

\end{lemma}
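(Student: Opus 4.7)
The plan is to estimate $N(w, X_{\langle T;b\rangle+1}^n)$ by decomposing the window into consecutive low-discrepancy blocks and applying Corollary~\ref{cor:lowdiscrepancy} block by block. Write $L = n-\langle T;b\rangle$ for the window length, $\ell_m = \langle m+1;b\rangle-\langle m;b\rangle-2$ for the length of block $B_m = X_{\langle m;b\rangle+1}\dots X_{\langle m+1;b\rangle-2}$, and let $M$ be the largest integer with $\langle M+1;b\rangle-2 \leq n$. The window is then the concatenation of the full blocks $B_T,\dots,B_M$, the two ``gap'' characters $X_{\langle m+1;b\rangle-1}, X_{\langle m+1;b\rangle}$ between each consecutive pair, and a trailing prefix of $B_{M+1}$ of some length $r\in[0,\ell_{M+1}]$.

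First I would apply Corollary~\ref{cor:lowdiscrepancy} inside each full block $B_m \in \mathcal{G}_j^{\ell_m}$, with pattern $z=w$ and inner length $\ell_m - |w|$, to obtain $N(w,(B_m)_1^{\ell_m-|w|}) = (\ell_m-|w|)j^{-|w|} + O(\sqrt{\ell_m\log\log\ell_m})$; the at most $|w|$ remaining starting positions in $B_m$ contribute an $O(|w|)$ correction, giving $N(w,B_m) = \ell_m\, j^{-|w|} + O(|w| + \sqrt{\ell_m\log\log\ell_m})$. The same Corollary applied to the trailing partial block (with the trivial bound absorbing the case $r<N_j(1/2)$) yields an analogous estimate $N(w,(B_{M+1})_1^r) = r\,j^{-|w|} + O(|w|+\sqrt{r\log\log r})$. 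Occurrences of $w$ starting in one of the gap positions or straddling a block boundary contribute at most $O((M-T+1)(|w|+1))$ altogether. Summing these contributions and using that the block lengths plus the trailing prefix length sum to $L - O(M-T)$, I would conclude
\[
 N(w, X_{\langle T;b\rangle+1}^n) \;=\; L\,j^{-|w|} + O\!\left((M-T+1)(|w|+1) + \sum_{m=T}^{M+1}\sqrt{\ell_m\log\log\ell_m}\right).
\]

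The last step is to check that every error term is $o(L)$ as $L\to\infty$, uniformly in $T$ and in $X$. Using $\langle m\rangle \geq e^{\sqrt m}$, whenever the window contains at least one full block one has $M-T = O((\log L)^2)$, so the boundary term is $O((\log L)^2 |w|)$, while Cauchy--Schwarz bounds the discrepancy sum by $O(\sqrt{(M-T+1)\,L\,\log\log L}) = O(\log L\cdot \sqrt{L\log\log L})$; after dividing by $L$, both terms tend to $0$, and choosing $L'_{b,j}(w,\epsilon)$ sufficiently large makes the relative deviation at most $\epsilon$. The main obstacle I anticipate is the required uniformity in $T$ and in $X$, together with the edge case in which the window lies entirely inside a single block $B_T$; here I would instead apply Corollary~\ref{cor:lowdiscrepancy} directly to $(B_T)_1^L$ (splitting off an $O(|w|)$ tail if $L>\ell_T-|w|$). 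Because the constants $C_j$ and $N_j(1/2)$ in Corollary~\ref{cor:lowdiscrepancy} are uniform over all sequences in $\mathcal{G}_j^{n'}$, the resulting threshold $L'$ depends only on $b, j, w, \epsilon$, as required.
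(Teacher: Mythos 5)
Your proposal is correct and follows essentially the same route as the paper's proof: decompose the window $X_{\langle T;b\rangle+1}^n$ into the low-discrepancy blocks $B_T,\dots,B_M$ plus a trailing partial block, apply Corollary~\ref{cor:lowdiscrepancy} inside each block, control the boundary-straddling occurrences separately, and choose a threshold $L'$ uniform in $T$ and $X$ (the paper does the bookkeeping as a weighted average of per-block relative deviations, each made $\leq\epsilon$, rather than via your Cauchy--Schwarz sum, but this is only a cosmetic difference). The one detail to flag is that when $T$ is small the first few full blocks may have $\ell_m < N_j(1/2)+|w|$, so Corollary~\ref{cor:lowdiscrepancy} does not apply to them and one must use the trivial bound on a bounded total length (the paper's $M_2(w,\epsilon)$, $M_3(w,\epsilon)$ step); this is absorbed by enlarging $L'$ exactly as you absorb the short trailing block.
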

Note especially that $L'_{b,j}(w,\epsilon)$ is a constant which
depends only on $w$ and $\epsilon$. This constant is independent of
the infinite sequence $X$ and the \emph{starting block number}
$T$. When $T=0$, we have $\langle 0;b \rangle=0$ and hence the above
statement asserts that, $\lvert N(w,X_{1}^n)/n-j^{-\lvert w \rvert}
\rvert \leq \epsilon$ for all $n \geq L'_{b,j}(w,\epsilon)$.

\begin{proof}[Proof of Lemma \ref{lem:lowdiscrepancycombinationnormality}]
The conclusion of the lemma is equivalent to the existence of a number
$L'_{b,j}(w,\epsilon)$ such that
\begin{align*}
\left\lvert \frac{N(w,X_{\langle T;b \rangle+1}^n)}{n-\langle T;b
  \rangle- |w|+1} - \frac{1}{j^{\lvert w \rvert}}
\right\rvert \leq \epsilon	 
\end{align*}
for all $n \geq \langle T;b \rangle+L'_{b,j}(w,\epsilon)$. We prove
this equivalent statement below.

To abbreviate the expressions in the proof, let $\hat\ell_w = |w|-1$.

\newcommand\he{\hat\ell_w}.

The number of $\lvert w \rvert$-length blocks up to some $\langle
M+1;b \rangle$ containing the indices $\langle m+1 ;b \rangle -2$,
$\langle m+1 ;b \rangle -1$ or $\langle m+1 ;b \rangle$ for some $m
\leq M$ is at most $(\lvert w \rvert+2)M$. Therefore, the fraction of
such digits in the first $\langle M+1;b \rangle$ digits is at most 
\begin{align*}
\frac{(\lvert w \rvert +2)M}{\langle M+1;b \rangle} \leq \frac{(\lvert
  w \rvert +2)M\cdot \log(b)}{e^{\sqrt{M+1}}}. 
\end{align*}
Let $M_1(w,\epsilon)$ be large enough so that the term on the right
hand side above is at most $\epsilon$. Let $C_j$ be the constant
corresponding to base $j$ from Corollary \ref{cor:lowdiscrepancy}, and
let $N^j_1(w,\epsilon)$ be any large enough number such that for all
$n \geq N^j_1(w,\epsilon)$, 
\begin{align*}
C_j \frac{\sqrt{\log \log n}}{\sqrt{n}} \leq \epsilon.	
\end{align*}
For any $m>0$, if $\langle m+1 ;b \rangle -\langle m;b \rangle -2 \geq
N_b(1/2)+N^j_1(w,\epsilon)+\he$, then from Corollary
\ref{cor:lowdiscrepancy} it follows that 
\begin{align*}
\left\lvert \frac{N(w,X_{\langle m;b \rangle +1}^{\langle m+1 ;b \rangle -2})}{\langle m+1 ;b \rangle -\langle m;b \rangle -\he} - \frac{1}{j^{\lvert w \rvert}} \right\rvert	\leq \epsilon.
\end{align*}

Since $\langle m+1 ;b \rangle -\langle m;b \rangle $ is increasing in $m$, $\langle m+1 ;b \rangle -\langle m;b \rangle -2$ is greater than $N_b(1/2)+N^j_1(w,\epsilon)+\he$ for all but finitely many $m$. Let $M_2(w,\epsilon)$ denote the smallest integer such that $\langle m+1 ;b \rangle -\langle m;b \rangle -2 \geq N_b(1/2)+N^j_1(w,\epsilon)+\he$ for every $m \geq M_2 (w,\epsilon)$. 

Consider any $n \geq \langle M_2(w,\epsilon);b \rangle$. Let $M'$ be the (unique) integer such that $\langle M'; b \rangle \leq n < \langle M'+1;b \rangle$. We have $M' \geq M_2(w,\epsilon)$. We first consider the case when $n=\langle M'; b \rangle$. Then,
\begin{align*}
&\left\lvert \frac{N(w,X_{\langle T;b \rangle+1}^n)}{n-\langle T;b \rangle-\he+2} - \frac{1}{j^{\lvert w \rvert}} \right\rvert \\
&\leq \frac{S(w,\epsilon)}{n-\langle T;b \rangle-\he+2} \sum\limits_{m=T}^{ M_2(w,\epsilon)-1} \frac{\langle m+1 ;b \rangle -\langle m;b \rangle -\he}{S(w,\epsilon)}	\left\lvert \frac{N(w,X_{\langle m;b \rangle +1}^{\langle m+1 ;b \rangle -2})}{\langle m+1 ;b \rangle -\langle m;b \rangle -\he} - \frac{1}{j^{\lvert w \rvert}} \right\rvert \\
&+ \frac{S'(w,\epsilon)}{n-\langle T;b \rangle-\he+2} \sum\limits_{m=M_2(w,\epsilon)}^{ M'-1} \frac{\langle m+1 ;b \rangle -\langle m;b \rangle -\he}{S'(w,\epsilon)}	\left\lvert \frac{N(w,X_{\langle m;b \rangle +1}^{\langle m+1 ;b \rangle -2})}{\langle m+1 ;b \rangle -\langle m;b \rangle -\he} - \frac{1}{j^{\lvert w \rvert}} \right\rvert \\
&+ \left(1- \frac{S(w,\epsilon)+S'(w,\epsilon)}{n-\langle T;b \rangle-\he+2} \right) \left( 1+\frac{1}{j^{\lvert w \rvert}} \right) 
\end{align*}
where, 
\begin{align*}
S(w,\epsilon) = \sum\limits_{T \leq m < M_2(w,\epsilon)} \langle m+1 ;b \rangle -\langle m;b \rangle -\he	
\end{align*}
and,
\begin{align*}
S'(w,\epsilon) = \sum\limits_{ M_2(w,\epsilon) \leq m < M'} \langle m+1 ;b \rangle -\langle m;b \rangle -\he.
\end{align*}

 From the definition of $M_2(w,\epsilon)$ and the properties of $N_b(1/2)$ and $N^j_1(w,\epsilon)$, we have,
\begin{align*}
	\left\lvert \frac{N(w,X_{\langle m;b \rangle +1}^{\langle m+1 ;b \rangle -2})}{\langle m+1 ;b \rangle -\langle m;b \rangle -\he} - \frac{1}{j^{\lvert w \rvert}} \right\rvert \leq \epsilon
\end{align*}
for every $m \geq M_2(w,\epsilon)$. 

Let $E(w,M')$ is the proportion of $\lvert w \rvert$-length blocks among all the $\lvert w \rvert$-length blocks in $X_{\langle T;b \rangle+1}^{n}$ containing the indices $\langle m+1 ;b \rangle -2$, $\langle m+1 ;b \rangle -1$ or $\langle m+1 ;b \rangle $ for some $m < M'$. If $M' \geq M_1(w,\epsilon)$ then $E(w,M')\leq \epsilon$. Therefore, for $n \geq \langle M_1(w,\epsilon); b \rangle$
\begin{align*}
\left(1- \frac{S(w,\epsilon)+S'(w,\epsilon)}{n-\langle T;b \rangle-\he+2} \right) \leq E(w,M') \leq \epsilon.	
\end{align*}

Observe that,
\begin{align*}
\frac{S(w,\epsilon)}{n-\langle T;b \rangle-\he+2} < \frac{\sum\limits_{0 \leq m < M_2(w,\epsilon)} \left( \langle m+1 ;b \rangle -\langle m;b \rangle -\he\right)	}{n-\langle T;b \rangle-\he+2}.
\end{align*}
Therefore, there exists a large enough number $M_3(w,\epsilon)$ such that if $n-\langle T;b \rangle \geq M_3(w,\epsilon)$, then,
\begin{align*}
\frac{S(w,\epsilon)}{n-\langle T;b \rangle-\he+2} < \epsilon.
\end{align*}

Let $N_T$ denote the following quantity.
\begin{align*}
N_T 
=
\left\lvert \frac{N(w,X_{\langle T;b \rangle+1}^n)}{n-\langle T;b
  \rangle-\he+2}-\frac{1}{j^{\lvert w \rvert}} \right\rvert.
\end{align*}

From the above observations, it follows that if $n \geq \max\{\langle M_1(w,\epsilon);b \rangle,\langle M_2(w,\epsilon);b \rangle,M_3(w,\epsilon)\}$, then,
\begin{align*}
N_T
&\leq \epsilon \\
&+ \frac{S'(w,\epsilon)}{n-\langle T;b \rangle-\he+2} \sum\limits_{m=M_2(w,\epsilon)}^{M'} \frac{\langle m+1 ;b \rangle -\langle m;b \rangle -\he}{S'(w,\epsilon)}	\left\lvert \frac{N(w,X_{\langle m;b \rangle +1}^{\langle m+1 ;b \rangle -2})}{\langle m+1 ;b \rangle -\langle m;b \rangle -\he} - \frac{1}{j^{\lvert w \rvert}} \right\rvert \\
&+ 2\epsilon  \\
&\leq \frac{S'(w,\epsilon)}{n-\langle T;b \rangle-\he+2} \epsilon + 3 \epsilon \\
&\leq 4\epsilon.
\end{align*}
Therefore in the case when $n \geq \max\{\langle M_1(w,\epsilon);b \rangle,\langle M_2(w,\epsilon);b \rangle,M_3(w,\epsilon)\}$ and $n=\langle M'; b \rangle$, we have,
\begin{align*}
N_T \leq 4\epsilon.
\end{align*}
Now, when $n > \langle M'; b \rangle$, then we also have to consider the $\lvert w \rvert$-length blocks in $X_{\langle M'; b \rangle-\lvert w \rvert+2}^n$. We have,
\begin{align*}
N_T 
&= \frac{\langle M'; b \rangle-\langle T;b \rangle-\he+2}{n-\langle T;b \rangle-\he+2} \frac{N(w,X_{\langle T;b \rangle+1}^{\langle M'; b \rangle})}{\langle M'; b \rangle-\langle T;b \rangle-\he+2}
 &+ \frac{n-\langle M'; b \rangle}{n-\langle T;b \rangle-\he+2} \frac{N(w,X_{\langle M'; b \rangle-\lvert w \rvert +2}^{n})}{n-\langle M'; b \rangle}. 
\end{align*}
Let $M_4(w,\epsilon)$ be the smallest integer such that,
\begin{align*}
\langle M_4(w,\epsilon); b \rangle \geq \left\lceil \frac{N_b(1/2)+N^j_1(w,\epsilon)+\he}{\epsilon} \right\rceil.
\end{align*}
Now, let
\begin{align*}
L'_{b,j}(w,\epsilon) = \max\left\{\langle M_1(w,\epsilon);b \rangle,\langle M_2(w,\epsilon);b \rangle,M_3(w,\epsilon),\langle M_4(w,\epsilon); b \rangle,(\he+2)/\epsilon+\lvert w \rvert\right\}	.
\end{align*}
Consider any $n \geq \langle T;b \rangle+ L'_{b,j}(w,\epsilon)$. We showed that
\begin{align*}
\left\lvert \frac{N(w,X_{\langle T;b \rangle+1}^{\langle M'; b \rangle})}{\langle M'; b \rangle-\langle T;b \rangle-\he+2} - \frac{1}{j^{\lvert w \rvert}} \right\rvert \leq 4\epsilon.
\end{align*}

Then, for $n$ ranging from $\langle M'; b \rangle+1$ to $\langle M'; b \rangle+\lvert w \rvert -2$, 
\begin{align*}
N_T \quad
&\leq \frac{\langle M'; b \rangle-\langle T;b \rangle-\he+2}{n-\langle T;b \rangle-\he+2} \left\lvert \frac{N(w,X_{\langle T;b \rangle+1}^{\langle M'; b \rangle})}{\langle M'; b \rangle-\langle T;b \rangle-\he+2} -\frac{1}{j^{\lvert w \rvert}} \right\rvert \\
&\quad+ \frac{n-\langle M'; b \rangle}{n-\langle T;b \rangle-\he+2}
\left(1+\frac{1}{j^{\lvert w \rvert}} \right)\\ 
&\leq 4\epsilon + \frac{n-\langle M'; b \rangle}{n-\langle T;b
  \rangle-\he+2} \left(1+\frac{1}{j^{\lvert w \rvert}} \right) \\ 
&\leq 4\epsilon + \frac{\he}{n-\langle T;b
  \rangle-\he+2}\left(1+\frac{1}{j^{\lvert w \rvert}} \right)\\ 
&\leq 6 \epsilon.
\end{align*}
The last inequality above follows because $n-\langle T;b \rangle \geq L'_{b,j}(w,\epsilon) \geq \lvert w \rvert/\epsilon +\he$.

For $n \geq \langle M'; b \rangle+\he$, we first consider the case when $n-\langle M'; b \rangle < N_b(1/2)+ N^j_{1}(w,\epsilon)+ \he$. The number of $\lvert w \rvert$-length blocks in the portion of $X$ between $\langle M'; b \rangle$ and $n$ is at most an $\epsilon$-fraction of the total number of $\lvert w \rvert$-length blocks up to $n$. This follows from the definition of $M_4(w,\epsilon)$ and the fact that $n-\langle T;b \rangle \geq \langle M_4(w,\epsilon); b \rangle$. Therefore, we get that
\begin{align*}
N_T  \quad
&\leq 
\frac{\langle M'; b \rangle-\langle T;b \rangle-\he+2}
     {n-\langle T;b \rangle-\he+2} 
\left\lvert 
\frac{N(w,X_{\langle T;b \rangle+1}^{\langle M'; b \rangle})}
     {\langle M'; b \rangle-\langle T;b \rangle-\he+2} -
\frac{1}{j^{\lvert w \rvert}} \right\rvert \\
&\quad+ \frac{\he}{n-\langle T;b \rangle-\he+2} 
\left(1+\frac{1}{j^{\lvert w \rvert}} \right)+ 2\epsilon \\
&\leq 4 \epsilon + 2\epsilon + 2\epsilon \\
&= 8\epsilon. 
\end{align*}
Finally we consider the case when $n-\langle M'; b \rangle \geq N_b(1/2)+ N^j_{1}(w,\epsilon)+ \he$. In this case, the probability of occurrence of $w$ in $X_{\langle M'; b \rangle+1}^{n}$ is $\epsilon$-close to $\frac{1}{j^{\lvert w \rvert}}$. This easily follows from the definition of $N^j_1(w,\epsilon)$. Therefore, we get estimates similar to those we used in the case when $n-\langle M'; b \rangle < N_b(1/2)+ N^j_{1}(w,\epsilon)+ \he$ yielding the following bound.
\begin{align*}
N_T \quad &= \frac{\langle M'; b \rangle-\langle T;b
  \rangle-\he+2}{n-\langle T;b \rangle-\he+2}
\left\lvert\frac{N(w,X_{\langle T;b \rangle+1}^{\langle M'; b
    \rangle})}{\langle M'; b \rangle-\langle T;b
  \rangle-\he+2}-\frac{1}{j^{\lvert w \rvert}} \right\rvert \\ &\quad+
\frac{n-\langle M'; b \rangle}{n-\langle T;b \rangle-\he+2}
\left\lvert\frac{N(w,X_{\langle M'; b \rangle-\lvert w \rvert
    +2}^{n})}{n-\langle M'; b \rangle}-\frac{1}{j^{\lvert w \rvert}}
\right\rvert \\ &\leq 4 \epsilon + \frac{\he}{n-\langle T;b
  \rangle-\he+2} \left(1+\frac{1}{j^{\lvert w \rvert}} \right)
\\ &\quad+ \frac{n-\langle M'; b \rangle-\he+2}{n-\he+2}
\left\lvert\frac{N(w,X_{\langle M'; b \rangle+1}^{n})}{n-\langle M'; b
  \rangle-\he+2}-\frac{1}{j^{\lvert w \rvert}} \right\rvert \\ &\leq
6\epsilon + \frac{n-\langle M'; b \rangle-\he+2}{n-\he+2} \epsilon
\\ &\leq 7\epsilon.
\end{align*}
Hence, we showed that $L'_{b,j}(w,\epsilon)$ is an integer which satisfies the properties claimed in the statement of the lemma. The proof of the lemma is thus complete.
\end{proof}

The following is an immediate corollary of the above lemma.

\begin{corollary}
\label{cor:lowdiscrepancycombinationnormality}
Let $b$ be any base and $j \leq b$. For any $k>0$ and $\epsilon>0$,
there exists an integer $L_{b,j}(k,\epsilon)$ satisfying the following
property. If $X$ is any infinite sequence in $\Sigma_j^\infty$ such
that,$X_{\langle m;b \rangle +1}X_{\langle m;b \rangle +2}\dots
X_{\langle m+1 ;b \rangle -2} \in \mathcal{G}_j^{\langle m+1 ;b
  \rangle -\langle m;b \rangle -2} $ for every $m> 0$ and if $T\geq 0$
is any non-negative integer, then for every $w \in \Sigma_j^{*}$ with
$\lvert w \rvert \leq k$ and all $n \geq \langle T;b
\rangle+L_{b,j}(k,\epsilon)$, the inequality in Lemma
\ref{lem:lowdiscrepancycombinationnormality} holds.
\end{corollary}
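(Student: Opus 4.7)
The plan is very short because this really is an immediate corollary: Lemma~\ref{lem:lowdiscrepancycombinationnormality} already gives, for each fixed $w \in \Sigma_j^*$ and $\epsilon > 0$, a constant $L'_{b,j}(w,\epsilon)$ that works uniformly over all admissible sequences $X$ and all starting block indices $T$. The only gap is that we need a single constant that works simultaneously for every $w$ of length at most $k$.

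My approach is to exploit finiteness. The set $W_k = \{w \in \Sigma_j^* : 1 \leq |w| \leq k\}$ has cardinality $\sum_{i=1}^{k} j^i$, which is finite, so I can define
\begin{equation*}
L_{b,j}(k,\epsilon) \;=\; \max_{w \in W_k} L'_{b,j}(w,\epsilon).
\end{equation*}
Then for any infinite sequence $X \in \Sigma_j^\infty$ satisfying the block hypothesis that $X_{\langle m;b\rangle+1}\cdots X_{\langle m+1;b\rangle-2} \in \mathcal{G}_j^{\langle m+1;b\rangle - \langle m;b\rangle - 2}$ for every $m > 0$, and for any $T \geq 0$ and any $w \in W_k$, the inequality $n \geq \langle T;b\rangle + L_{b,j}(k,\epsilon) \geq \langle T;b\rangle + L'_{b,j}(w,\epsilon)$ lets me invoke Lemma~\ref{lem:lowdiscrepancycombinationnormality} directly to conclude $\bigl| P(w, X_{\langle T;b\rangle+1}^{n}) - j^{-|w|} \bigr| \leq \epsilon$.

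There is no real obstacle here: since $L_{b,j}(k,\epsilon)$ is allowed to depend on $k$ (and on $b$, $j$, $\epsilon$) but not on $w$, $X$, or $T$, taking the maximum over a finite family preserves uniformity in precisely the variables we need. The only thing worth double-checking in writing this up carefully is that Lemma~\ref{lem:lowdiscrepancycombinationnormality} is indeed stated with the right uniformity in $X$ and $T$ (i.e., the constant $L'_{b,j}(w,\epsilon)$ it produces depends only on $w$ and $\epsilon$, which the lemma statement emphasizes explicitly), so that taking a maximum over $w$ yields a constant still independent of $X$ and $T$. Once that is noted, the corollary follows in a single line.
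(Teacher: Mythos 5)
Your proof is correct and matches the paper's intent exactly: the paper states this as an "immediate corollary" of Lemma~\ref{lem:lowdiscrepancycombinationnormality} without further argument, and the intended step is precisely your observation that one takes the maximum of $L'_{b,j}(w,\epsilon)$ over the finitely many $w \in \Sigma_j^*$ with $\lvert w \rvert \leq k$, which preserves the uniformity in $X$ and $T$ guaranteed by the lemma.
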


\section{Main construction}
\label{sec:mainconstruction}

The construction for proving Theorem \ref{thm:maintheorem} consists of
multiple stages. In the $k$\textsuperscript{th} stage, we \emph{fix}
digits in base $v(k)$ by fixing elements of the sequence $\langle u(m)
\rangle$ to be $v(k)$. Each stage consists of two substages, which have
several steps. In both the substages, we fix $u(m)=v(k)$ for
sufficiently large number of steps $m$. During the first substage, we choose $\xi_m$ from
$\sigma_{m}^*(\xi_{m-1})$ according to Criterion
\ref{item:criterion1} and during the second substage, we choose $\xi_m$ from
$\sigma_{m}(\xi_{m-1})$ according to Criterion \ref{item:criterion2}.

We ensure that $\langle u(m) \rangle$ is a good sequence of natural
numbers. During stage $1$, $u(1)$ is set to $2^{d_2}$, since
$r_1=2$. All the conditions in the definition of a good sequence are
trivially satisfied at this stage. Further checks are performed at the
end of the second substage of every stage $k$ to ensure that on
transitioning to stage $k+1$, where $u(m)$ is set to $v(k+1)$, none of
the conditions in Definition \ref{def:goodsequence} are being
violated.

The lengths of the substages are controlled carefully so that all the
requirements given in section \ref{sec:overview} are satisfied. We
describe the construction of the two substages of the
$k$\textsuperscript{th} stage in the subsections below.

\subsection{First substage of the $k$\textsuperscript{th} stage}
\label{subsec:firstsubstage}
For any $\epsilon > 0$, there exists a constant $\delta_k(\epsilon)$\label{text:deltakepsilondefinition} satisfying the following: If $\mu_1$ and $\mu_2$ are probability distributions over $\Sigma_{v(k)}^l$ for any $l \leq k$ such that
$\left\lvert \mu_1(w)-\mu_2(w) \right\rvert \leq \delta_{k}(\epsilon)$ for every $w \in \Sigma_{v(k)}^l$, then $\left\lvert H(\mu_1) - H(\mu_2) \right\rvert \leq \epsilon$. The existence of $\delta_k(\epsilon)$ follows from the uniform continuity of the Shannon entropy function (see \cite{Khinchin57}, \cite{CovTho91}).

In the first substage of stage $k$ we set $u(m)=v(k)$ for sufficiently
large number of $m$'s and choose $\xi_m$ from
$\sigma_{m}^*(\xi_{m-1})$ according to Criterion
\ref{item:criterion1}. As in section \ref{sec:overview}, let $P_k^1$
denote the index of the last step in the first substage of stage
$k$. Recall that $F_k^1=\langle P_k^1+1;v(k) \rangle$ denotes the
index of the final digit in $X(k)$ fixed during the first substage of
stage $k$. We make $P_k^1$ large enough so that the following
conditions are satisfied:
\begin{enumerate}
\label{text:firstsubstageconditions}	\item\label{item:firstsubstagecond1} $\lvert H_l^{v(k)}(X(k)_1^n) -q_{r_k} \rvert \leq 2^{-k}$ for every $l \leq k$ when $n=F_k^1$.
	\item\label{item:firstsubstagecond2} For every $m \geq P_k^1$,
          \begin{align}
\label{eq:firsttosecondsubstagetransition}
b_m-a_m \geq \frac{ L_{v(k),v(k)}(k,\delta_k(2^{-k})/2)+2k}{\min\{\delta_k(2^{-k}),2^{-k}\}/2}+k.	
\end{align}
\end{enumerate}
The constant $L_{v(k),v(k)}(k,\delta_k(2^{-k})/2)$ in condition \ref{item:firstsubstagecond2} is from Corollary \ref{cor:lowdiscrepancycombinationnormality}. Recall that $v^*(k) =
 r_k^{e_{r_k}}$. Condition \ref{item:firstsubstagecond1} is satisfied for large enough
 $P_k^1$ because the occurrence probability of any finite string $w$
 in alphabet $\{0,1,\dots ,v^*(k)-1\}$ converges to $v^*(k)^{-\lvert w
   \rvert}$ on choosing $\xi_m$ from $\sigma_{m}^*(\xi_{m-1})$
 according to Criterion \ref{item:criterion1} for sufficiently large
 number of $m$'s. This follows as a consequence of Lemma
 \ref{lem:lowdiscrepancycombinationnormality}.
  Since $b_m-a_m \geq (e^{\sqrt{m+1}}-e^{\sqrt{m}}+m^2)/\log(v(k))-1$
  and the right hand side of (\ref{eq:firsttosecondsubstagetransition})
  is a constant depending only on $k$, condition
  \ref{item:firstsubstagecond2} is satisfied for all sufficiently
  large $m$.

\subsection{Second substage of the $k$\textsuperscript{th} stage}
\label{subsec:secondsubstage}
In the second substage, we set $u(m)=v(k)$ in every step $m$ and choose $\xi_m$ from $\sigma_m(\xi_{m-1})$ according to Criterion \ref{item:criterion2}. In order to describe the construction of the second substage, we need the following technical lemmas.
\begin{lemma}
\label{lem:exponentialtermsstrongbound} Let $b$ be an arbitrary base and $\epsilon > 0$. Let $\langle c_i \rangle_{i=1}^{\infty}$ be any non-increasing sequence of real numbers in $[0,1]$ such that $c_1=1/4$ and $c_i \geq c_1/\sqrt[4]{i}$. Let $\delta$ be the constant from Lemma \ref{lem:ambound}. Then, there exists a large enough number $M(\epsilon,b)$ depending only on $\epsilon$ and $b$ satisfying the following. For $m\geq M(\epsilon,b)$ and any $l \leq \langle m+1; b \rangle - \langle m;b \rangle$,
\begin{align*}
(\langle m;b \rangle+l)^{-1}\left( \delta m \sum\nolimits_{i=1}^{m-1} (\langle i+1 \rangle-\langle i \rangle)^{1-c_i}+l\right) \leq \epsilon.	
\end{align*}
\end{lemma}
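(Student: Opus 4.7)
The plan is to separate the quantity being bounded into the two pieces
\[
\frac{\delta m\sum_{i=1}^{m-1}(\langle i+1\rangle-\langle i\rangle)^{1-c_i}}{\langle m;b\rangle+l}\qquad\text{and}\qquad\frac{l}{\langle m;b\rangle+l},
\]
and to show that each tends to $0$ as $m\to\infty$, uniformly in $l\in[0,\langle m+1;b\rangle-\langle m;b\rangle]$. Choosing $M(\epsilon,b)$ so that each piece is at most $\epsilon/2$ whenever $m\ge M(\epsilon,b)$ then gives the lemma.

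For the second, easier, piece, the worst case for $l$ is the upper endpoint of the allowed range, so
\[
\frac{l}{\langle m;b\rangle+l}\le 1-\frac{\langle m;b\rangle}{\langle m+1;b\rangle}.
\]
Using the definition $\langle m\rangle=\lceil e^{\sqrt m}+2u(1)m^3\rceil$ and the elementary estimate $\sqrt{m+1}-\sqrt m=O(1/\sqrt m)$, the ratio $\langle m+1;b\rangle/\langle m;b\rangle$ tends to $1$, so this piece becomes $\le\epsilon/2$ for $m$ sufficiently large in terms of $\epsilon$ and $b$ alone.

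The first piece is the substantive part, and here I will exploit the lower bound $c_i\ge 1/(4\sqrt[4]{i})$ to extract stretched-exponential decay from each summand. For $i$ larger than some absolute $i_0$, the polynomial correction in $\langle i+1\rangle-\langle i\rangle$ is dominated by the exponential, so $\langle i+1\rangle-\langle i\rangle\le e^{\sqrt i}$. Since $1-c_i\in[0,1]$ and $x\mapsto x^{1-c_i}$ is monotone on $[0,\infty)$,
\[
(\langle i+1\rangle-\langle i\rangle)^{1-c_i}\le e^{(1-c_i)\sqrt i}=e^{\sqrt i}\,e^{-c_i\sqrt i}\le e^{\sqrt i-i^{1/4}/4}.
\]
The map $i\mapsto e^{\sqrt i-i^{1/4}/4}$ is eventually increasing, so the tail of the sum is bounded crudely by $m\,e^{\sqrt m-m^{1/4}/4}$, while the finitely many small-$i$ terms contribute only an absolute constant $C_0(b)$. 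Dividing by $\langle m;b\rangle+l\ge\langle m;b\rangle\ge e^{\sqrt m}/(2\log b)$ (valid for $m$ large) and multiplying by $\delta m$, the first piece is of order
\[
\delta\log(b)\bigl(C_0\, m\,e^{-\sqrt m}+m^2 e^{-m^{1/4}/4}\bigr)\longrightarrow 0,
\]
as required.

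The only real obstacle is the bookkeeping for small $i$, where $c_i\ge 1/(4\sqrt[4]{i})$ gives essentially no usable decay; this is handled by isolating a finite block of indices whose total contribution is an absolute constant, dominated by $e^{\sqrt m}$ for large $m$. Once that is settled, the stretched-exponential factor $e^{-m^{1/4}/4}$ comfortably beats the polynomial $m^2$, closing the argument.
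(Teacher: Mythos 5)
Your proposal is correct, and its overall skeleton coincides with the paper's: the same decomposition into the sum term and the $l/(\langle m;b\rangle+l)$ term, with each shown to vanish uniformly in $l$ and in the admissible sequences $\langle c_i\rangle$. The one genuine difference is how you bound $\sum_{i=1}^{m-1}(\langle i+1\rangle-\langle i\rangle)^{1-c_i}$. The paper first uses monotonicity of $\langle c_i\rangle$ to replace every exponent $1-c_i$ by $1-c_m$ and then applies H\"older's inequality with exponents $1/(1-c_m)$ and $1/c_m$, arriving at $\delta m^2\langle m\rangle^{1-c_m}$ and hence $\delta\log(b)\,m^2e^{-\sqrt{m}\,c_m}\le \delta\log(b)\,m^2e^{-m^{1/4}/4}$. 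You instead bound each summand term by term, using $\langle i+1\rangle-\langle i\rangle\le e^{\sqrt{i}}$ for $i$ beyond a fixed threshold together with $c_i\sqrt{i}\ge i^{1/4}/4$, which yields the same stretched-exponential decay $m^2e^{-m^{1/4}/4}$ after dividing by $\langle m;b\rangle\gtrsim e^{\sqrt{m}}/\log b$. Your route is more elementary (no H\"older) at the cost of having to split off the finitely many small indices where $\langle i+1\rangle-\langle i\rangle\le e^{\sqrt{i}}$ may fail; their contribution is a constant (depending on $u(1)$, as do the paper's own constants here) that is killed by the $e^{-\sqrt m}$ factor. Both arguments deliver the uniformity in $\langle c_i\rangle$ that the construction needs, since each uses only the hypothesis $c_i\ge c_1/\sqrt[4]{i}$.
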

\begin{proof}
	We first consider the case when $l=0$. In this case, we have,
	\begin{align*}
\delta m \sum\limits_{i=1}^{m-1} (\langle i+1 \rangle-\langle i \rangle)^{1-c_i}
&\leq \delta m \sum\limits_{i=1}^{m-1} \left( \langle i+1 \rangle -\langle i \rangle \right)^{1-c_m}	\\
&\leq \delta m^2  \langle m \rangle ^{1-c_m}.
\end{align*}
The second inequality follows due to the H\"older's inequality with $p=1/(1-c_m)$ and $q=1/c_m$. Therefore,
\begin{align*}
\frac{1}{\langle m;b \rangle}\left( \delta m \sum\limits_{i=1}^{m-1} (\langle i+1 \rangle-\langle i \rangle)^{1-c_i}\right) &\leq \delta \log b\frac{m^2}{\langle m \rangle^{c_m}}\\
&\leq \delta \log b \frac{m^2}{\langle m \rangle^{c_m}}\\
&\leq \delta \log b  \frac{m^2}{e^{\sqrt{m}c_m}}.
\end{align*}
We have $c_m \geq \frac{c_1}{\sqrt[4]{m}}=\frac{1}{4\sqrt[4]{m}}$. Since $m^2=o(e^{\sqrt[4]{m}/4})$, the right hand side term above converges to $0$ for large enough $m$. The speed of convergence of this term to $0$ is independent of the sequence $\langle c_i \rangle$, and hence given any $\epsilon > 0$, there exists a large enough number $M'(\epsilon,b)$ such that for every $m \geq M'(\epsilon,b)$,

\begin{align}
\label{eq:weylaveragebound1}
\frac{1}{\langle m;b \rangle}\left( \delta m \sum\limits_{i=1}^{m-1} (\langle i+1 \rangle-\langle i \rangle)^{1-c_i}\right) &\leq 	\epsilon.
\end{align}
Now we consider the case when $l \neq 0$. Consider any $l \leq \langle m+1 ; b \rangle - \langle m ; b \rangle$,

\begin{align*}
\frac{\langle m+1 ; b \rangle - \langle m ; b \rangle}{\langle m ; b \rangle+l} &\leq \frac{\langle m+1 ; b \rangle - \langle m ; b \rangle}{\langle m ; b \rangle}\\
&\leq \frac{\frac{\langle m+1 \rangle}{\log b} - \frac{\langle m+1 \rangle}{\log b}+1}{\frac{\langle m \rangle}{\log b}}\\
&= \frac{\log(b)+\langle m+1 \rangle - \langle m \rangle}{\langle m \rangle}.
\end{align*}

If $m >b$, we get that,

\begin{align*}
\frac{\langle m+1 ; b \rangle - \langle m ; b \rangle}{\langle m ; b \rangle+l} 
&\leq \frac{\log(m)+ 1+ e^{\sqrt{m+1}}+2u(1)(m+1)^3 - e^{\sqrt{m}}-2u(1)m^3}{e^{\sqrt{m}}+2u(1)m^3}\\
&= \frac{\log(m)+ 1}{e^{\sqrt{m}}+2u(1)m^3} + \frac{ e^{\sqrt{m+1}}- e^{\sqrt{m}}}{e^{\sqrt{m}}+2u(1)m^3} + \frac{2u(1)(m+1)^3 - 2u(1)m^3}{e^{\sqrt{m}}+2u(1)m^3}\\
&= \frac{\log(m)+ 1}{e^{\sqrt{m}}+2u(1)m^3} + \frac{e^{\sqrt{m}}}{e^{\sqrt{m}}+2u(1)m^3} \cdot  (e^{\sqrt{m+1}-\sqrt{m}}- 1) \\
&+ \frac{2u(1)(m+1)^3 - 2u(1)m^3}{e^{\sqrt{m}}+2u(1)m^3}\\
&\leq \frac{\log(m)+ 1}{e^{\sqrt{m}}+2u(1)m^3} +   (e^{\sqrt{m+1}-\sqrt{m}}- 1) + \frac{2u(1)(m+1)^3 - 2u(1)m^3}{e^{\sqrt{m}}+2u(1)m^3}\\
&\leq \frac{\log(m)+ 1}{e^{\sqrt{m}}} +   (e^{\sqrt{m+1}-\sqrt{m}}- 1) + \frac{2u(1)((m+1)^3 - m^3)}{2u(1)m^3}\\
&\leq \frac{2\log(m)}{e^{\sqrt{m}}} +   (e^{\sqrt{m+1}-\sqrt{m}}- 1) + \frac{3m^2+3m+1}{m^3}. 
\end{align*}

Since,
\begin{align*}
 e^{\sqrt{m+1}-\sqrt{m}}- 1 &= e^{\frac{1}{\sqrt{m+1}+\sqrt{m}}}- 1
\end{align*}
this term is less than any given $\epsilon$ for $m \geq M'$ where $M'$ is the least number such that,
\begin{align*}
 \frac{1}{\sqrt{M'+1}+\sqrt{M'}} \leq \frac{\epsilon}{2}.
\end{align*}
This is easily verified by using the expansion for $e^x$. The first and last terms above also goes to $0$ as $m \to 0$ with a speed of convergence that is independent of the sequence $\langle c_i \rangle$. 

From these observations, we conclude that for any $\epsilon>0$, there exists a large enough number $M''(\epsilon,b)$ such that for any $m \geq M''(\epsilon,b)$ and $l \leq \langle m+1 ; b \rangle - \langle m ; b \rangle$, 
\begin{align}
\label{eq:weylaveragebound2}
\frac{l}{\langle m ; b \rangle+l} \leq \frac{\langle m+1 ; b \rangle - \langle m ; b \rangle}{\langle m ; b \rangle+l} < \epsilon.
\end{align}

Let $M(\epsilon,b)=\max\{M'(\epsilon/2,b),M''(\epsilon/2,b)\}$. From (\ref{eq:weylaveragebound1}) and (\ref{eq:weylaveragebound2}), we conclude that for $m \geq M(\epsilon,b)$ and any $l \leq \langle m+1 ; b \rangle - \langle m ; b \rangle$,
\begin{align*}
\frac{1}{\langle m;b \rangle+l}&\left( \delta m \sum\limits_{i=1}^{m-1} (\langle i+1 \rangle-\langle i \rangle)^{1-c_i}+l\right)\\
 &\leq \frac{1}{\langle m;b \rangle+l}\left( \delta m \sum\limits_{i=1}^{m-1} (\langle i+1 \rangle-\langle i \rangle)^{1-c_i}+\langle m+1 ; b \rangle - \langle m ; b \rangle\right)\\
 &\leq \frac{\epsilon}{2}+\frac{\epsilon}{2} \\
 &= \epsilon.
\end{align*}

\end{proof}

\begin{lemma}
\label{lem:smallweylaverageimpliesentropy1}
	Let $b$ be an arbitrary base, $k$ be any natural number and $\epsilon > 0$.  There exists a large enough integer $T(\epsilon,b,k)$ and a positive real number $\gamma(\epsilon,b,k)$ satisfying the following. Let $x$ is any real number in $[0,1]$ having base-$b$ expansion $X \in \Sigma_b^\infty$. If $ \lvert \frac{1}{n} \sum_{i=1}^{n} e(tb^{(j-1)}x) \rvert < 	\gamma(\epsilon,b,k)$ for every $t$ with $\lvert t \rvert \leq T(\epsilon,b,k)$, then, $\left\lvert H_l^{b}(X_1^n)-1\right\rvert < \epsilon$
for every $l \leq k$.
\end{lemma}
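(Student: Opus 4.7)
The plan is to combine the Koksma Fourier sandwich of Lemma \ref{lem:koksmaapproximation} with the observation that $l$-block occurrences in $X_1^n$ count visits of the shift orbit $\{b^{i-1}x \bmod 1\}$ to cylinders. Concretely, for every $w\in\Sigma_b^l$ with $l\le k$,
\[
N(w,X_1^n) \;=\; \sum_{i=1}^{n-l+1} \chi_w\bigl(b^{i-1}x \bmod 1\bigr),
\]
so Lemma \ref{lem:koksmaapproximation} applied with a parameter $\delta>0$ and summed over $i$ yields, after dividing by $n-l+1$,
\[
\Bigl|\,P(w,X_1^n) - b^{-l}\,\Bigr|
\;\le\; \delta \;+\; \sum_{t\in\Z\setminus\{0\}} \frac{2}{t^2\delta}\,
\Bigl|\,\tfrac{1}{n-l+1}\sum_{i=1}^{n-l+1} e(tb^{i-1}x)\,\Bigr|,
\]
reading the hypothesis as $|\tfrac{1}{n}\sum_{i=1}^{n} e(tb^{i-1}x)| < \gamma$ (the index in the displayed assumption is $i$, not a frozen $j$).

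The next step is to tune the three free parameters. I would split the $t$-sum at a cutoff $T$: the tail $|t|>T$ is bounded, via the trivial estimate $|\tfrac{1}{N}\sum e(\cdot)|\le 1$ and $\sum_{|t|>T}1/t^2 \le 2/T$, by $4/(T\delta)$; the head $0<|t|\le T$ is bounded under the hypothesis by $\gamma \cdot \sum_{0<|t|\le T} 2/(t^2\delta) \le (2\pi^2/3)\,\gamma/\delta$. Letting $\delta_k(\epsilon)$ be the uniform continuity modulus for the $l$-block Shannon entropy introduced at the start of Section \ref{subsec:firstsubstage}, I would now pick $\delta := \delta_k(\epsilon)/3$, then $T(\epsilon,b,k)$ so large that $4/(T\delta) \le \delta_k(\epsilon)/3$, and finally $\gamma(\epsilon,b,k)$ so small that $(2\pi^2/3)\gamma/\delta \le \delta_k(\epsilon)/3$. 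Each choice depends only on $\epsilon$, $b$, and $k$, which is the form demanded by the lemma.

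Once $|P(w,X_1^n) - b^{-l}| \le \delta_k(\epsilon)$ holds for every $w\in\Sigma_b^l$ and every $l\le k$, the defining property of $\delta_k(\epsilon)$ (uniform continuity of Shannon entropy, via concavity, see \cite{CovTho91}) gives $|H_l^b(X_1^n) - 1| < \epsilon$ for every $l\le k$, as required. The only mildly delicate point is the $l$-versus-$n$ mismatch in the Koksma averaging (the denominator is $n-l+1$ rather than $n$), which is harmless because $l\le k$ is fixed and contributes only an $O(k/n)$ perturbation that can be absorbed into the constants by, if necessary, enlarging $T(\epsilon,b,k)$ slightly so that the hypothesis is only invoked for $n$ above a threshold (implicit in $\gamma$ being strictly positive). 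No serious obstacle arises; the proof is a quantitative Weyl-criterion-to-block-frequency transfer followed by the entropy modulus.
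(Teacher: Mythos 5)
Your proposal is correct and follows essentially the same route as the paper: apply the Koksma Fourier sandwich of Lemma \ref{lem:koksmaapproximation} to the block counts, split the $t$-sum into a head controlled by the hypothesis and a tail controlled by $\sum 1/t^2$, tune $\delta$, $T$, and $\gamma$ accordingly, and finish with the uniform continuity of the Shannon entropy. Your version merely makes explicit two points the paper leaves implicit (the entropy modulus $\delta_k(\epsilon)$ and the $n$ versus $n-l+1$ denominator mismatch), which is fine.
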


\begin{proof}
	In order to prove the lemma, we use the well-known Fourier expansion of characteristic functions of cylinder sets given in Lemma \ref{lem:koksmaapproximation} (\cite{Koksma74}). We show that for every $w \in \Sigma_b^*$ and $\epsilon'>0$,
		\begin{align*}
 \left\lvert \frac{N(w,Y_1^{n})}{n-\lvert w \rvert +1} - \frac{1}{b^{\lvert w \rvert}} \right\rvert < \epsilon'
\end{align*}
if the Weyl averages are \emph{sufficiently small} for \emph{sufficiently many} values of $t$, where the parameters involved only depends on $b$, $\epsilon'$ and $w$. The lemma follows easily from this claim due to the continuity of the Shannon entropy function. It is enough to prove the assertion for $N(w,X_1^n)/n$, since this implies the required claim for $N(w,X_1^n)/(n-\lvert w \rvert +1)$. From Lemma \ref{lem:koksmaapproximation}, we get that,

\begin{align*}
 \left\lvert \frac{N(w,X_1^n)}{n} - \frac{1}{b^{\lvert w \rvert}} \right\rvert < \delta  + \sum\limits_{\substack{t =-\infty\\ t\neq 0}}^{\infty} \frac{4}{t^2 \delta} \left\lvert\frac{\sum\limits_{j=1}^{n}e(tb^{(j-1)}x)}{n} \right\rvert
\end{align*}

Fix $\delta=\epsilon/2$. Now, 
\begin{align*}
 \left\lvert \frac{N(w,X_1^n)}{n} - \frac{1}{b^{\lvert w \rvert}} \right\rvert < \frac{\epsilon}{2}  + \sum\limits_{\substack{t =-\infty\\ t\neq 0}}^{\infty} \frac{8}{t^2 \epsilon} \left\lvert\frac{\sum\limits_{j=1}^{n}e(tb^{(j-1)}x)}{n} \right\rvert
\end{align*}
Since $\sum\limits_{i=t}^{\infty} \frac{1}{i^2} \leq \frac{1}{t}$, if $t \geq T'(\epsilon)= 64/\epsilon^2$, we \emph{cut off} an $\epsilon/4$ tail from the above sum to obtain the following.
\begin{align*}
 \left\lvert \frac{N(w,X_1^n)}{n} - \frac{1}{b^{\lvert w \rvert}} \right\rvert < \frac{\epsilon}{2}  + \frac{\epsilon}{4}+ \sum\limits_{\substack{t =-T'(\epsilon)\\ t\neq 0}}^{T'(\epsilon)} \frac{8}{t^2 \epsilon} \left\lvert\frac{\sum\limits_{j=1}^{n}e(tb^{(j-1)}x)}{n} \right\rvert
\end{align*}
If the Weyl averages for every non-zero parameter $t$ between $-T'(\epsilon)$ and $T'(\epsilon)$ are less than $\gamma'(\epsilon)=\epsilon^2/32\cdot T'(\epsilon)$, then the last term is less than $\epsilon/4$ and we get,
\begin{align}
\label{eq:smallweylaveragesimpliesnormality}
  \left\lvert \frac{N(w,X_1^n)}{n} - \frac{1}{b^{\lvert w \rvert}} \right\rvert < \frac{\epsilon}{2}  + \frac{\epsilon}{4}+ \frac{\epsilon}{4} = \epsilon.
\end{align}
The proof of the required claim is thus complete. The constants $T(\epsilon,b,k)$ and $\gamma(\epsilon,b,k)$ are obtained from the constants $T'$ and $\gamma'$ using the continuity of the Shannon entropy function (\cite{Khinchin57}, \cite{CovTho91}) in a straightforward manner.
\end{proof}

As in section \ref{sec:overview}, let $P_k^2$ denote the denote the index of the last step in the second substage of stage $k$. We make $P_k^2$ large
enough so that the following conditions are satisfied at the end of
the substage:

\begin{enumerate}
\label{text:secondsubstageconditions}
\item\label{item:secondsubstagecond1} (Entropy rates in base $v(k)$
  are close to 1)
  $\lvert H_l^{v(k)}(X(k)_1^n) -1 \rvert \leq 2^{-(k+1)}$ for every $l
  \leq k$ when $n=F_k^2$.
\item\label{item:secondsubstagecond2} (Exponential
  averages for base $v(k)$ are small) For every $t$
  with $\lvert t \rvert \leq T(2^{-(k+1)},v(k),k)$,
\begin{align}
\label{eq:gammabound1}
\left\lvert (F_k^2)^{-1} \sum\nolimits_{i=1}^{F_k^2}
e(tv(k)^{(j-1)}\xi) \right\rvert < 	\gamma(2^{-(k+1)},v(k),k)/2. 
\end{align}
where $T$ and $\gamma$ are the constants from Lemma
\ref{lem:smallweylaverageimpliesentropy1}. 
\item\label{item:secondsubstagecond3} $P_k^2 \geq
  \max\{M(\gamma(2^{-(k+1)},v(k),k)/2,v(k)),T(2^{-(k+1)},v(k),k)
  \}$, where $M$ is the constant from Lemma
  \ref{lem:exponentialtermsstrongbound}. 
\item\label{item:secondsubstagecond4} (Entropy rates in non-equivalent
  bases are close to 1) If there exists $k'<k$ such that
  $v(k')=v(k+1)$, then $\lvert H_l^{v(k+1)}(X(k+1)_1^n) -1 \rvert \leq
  2^{-k}$ for every $l \leq k$ when $n=\langle P_k^2+1;v(k+1)
  \rangle$.
 \item\label{item:secondsubstagecond5} For every $m \geq P_k^2$,
	 \begin{align}
\label{eq:kthtok+1thstagetransition}
b_m-a_m \geq \frac{
  L_{v(k+1),v^*(k+1)}(k,\delta_{k+1}(2^{-k})/2)+2k}{\min\{\delta_{k}(2^{-k}),\delta_{k+1}(2^{-k}),2^{-k}\}/2}+k 
\end{align} 
\item\label{item:secondsubstagecond6} The sequence $\langle u'(m)
  \rangle_{m=1}^{\infty}$ defined such that $u'(m)=u(m)$ for every $m
  \leq P_k^2$ and $u'(m)=v(k+1)$ for $m \geq P_k^2+1$, is a good
  sequence.  
\end{enumerate}
The constant $L_{v(k+1),v^*(k+1)}(k,\delta_{k+1}(2^{-k})/2)$ in condition \ref{item:secondsubstagecond5} is from Corollary \ref{cor:lowdiscrepancycombinationnormality}. Condition \ref{item:secondsubstagecond1} is satisfied for large enough $P_k^2$ because the occurrence probability of any finite string $w \in \Sigma^*_{v(k)}$ converges to $v(k)^{-\lvert w \rvert}$ on choosing $\xi_m$ from $\sigma_{m}(\xi_{m-1})$ according to Criterion \ref{item:criterion2} for sufficiently large number of $m$'s, as a consequence of Lemma \ref{lem:lowdiscrepancycombinationnormality}. For any $t$, on extending the second substage by increasing $P_k^2$, the corresponding exponential averages in (\ref{eq:gammabound1}) converges to $0$ as a consequence of the Weyl Criterion for normality (see \cite{Weyl1916},\cite{KuipersNiederreiterUniform}) and Lemma \ref{lem:lowdiscrepancycombinationnormality}. Therefore, condition \ref{item:secondsubstagecond2} is satisfied for large enough values of $P_k^2$.  Since $b_m-a_m \geq (e^{\sqrt{m+1}}-e^{\sqrt{m}}+m^2)/\log(v(k))-1$ and the right hand side of (\ref{eq:kthtok+1thstagetransition}) is a constant depending only on $k$, condition \ref{item:secondsubstagecond5} is satisfied for all sufficiently large $m$. 

It is easily verified from the definition of a good sequence that for large enough $P_k^2$, on setting $u(P_k+1)=v(k+1)$ the sequence $\langle u(m) \rangle_{m=1}^{P_k^2+1}$ satisfies all the conditions in the definition of good sequences (Definition \ref{def:goodsequence}). On extending the sequence from this value of $P_k^2+1$ onwards, by setting $u(m)=v(k+1)$ for every $k \geq P_k^2+2$, none of the conditions in Definition \ref{def:goodsequence} are violated. Therefore, condition \ref{item:secondsubstagecond6} is satisfied for all large enough values of $P_k^2$. During stage $1$, all the conditions in the definition of a good sequence are trivially satisfied. Therefore, the validity of condition \ref{item:secondsubstagecond6} at the end of every second substage, inductively ensures that the constructed sequence $\langle u(m) \rangle$ is a good sequence.

Proving that condition \ref{item:secondsubstagecond4} holds for large enough values of $P_k^2$, requires an argument using Lemmas \ref{lem:exponentialtermsstrongbound} and \ref{lem:smallweylaverageimpliesentropy1}.

\begin{lemma}
\label{lem:entropiesinvk+1convergesto1}
Condition \ref{item:secondsubstagecond4} in the construction is true
for all large enough values of $P_k^2$.
\end{lemma}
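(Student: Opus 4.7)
The plan is to apply Lemma \ref{lem:smallweylaverageimpliesentropy1} to reduce the entropy condition to a Weyl-average estimate, and then bound that average using Lemma \ref{lem:ambound} together with Lemma \ref{lem:exponentialtermsstrongbound}. Let $b = v(k+1)$ and $n = \langle P_k^2+1; b\rangle$. By Lemma \ref{lem:smallweylaverageimpliesentropy1}, in order to prove $|H_l^{b}(X(k+1)_1^n) - 1| \leq 2^{-k}$ for every $l \leq k$, it suffices to show that for every nonzero integer $t$ with $|t| \leq T := T(2^{-k}, b, k)$, the Weyl average $n^{-1}|S_n(t)|$ is strictly less than $\gamma := \gamma(2^{-k}, b, k)$, where $S_n(t) = \sum_{j=1}^{n} e(t b^{j-1} \xi)$. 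Decompose this sum exactly as $S_n(t) = \sum_{i=0}^{P_k^2} S_i^{(b)}(t)$, with $S_i^{(b)}(t) = \sum_{j=\langle i; b\rangle+1}^{\langle i+1; b\rangle} e(t b^{j-1} \xi)$, and estimate the block contributions $S_i^{(b)}(t)$ in two regimes.

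Fix such a $t$. Since $v(k') = b$ for some $k' < k$ by hypothesis, there is some $h_0$ with $u(h_0) = b$. For every index $i$ with $u(i) \not\sim b$ and $i \geq \max(h_0, |t|)$, the quantity $|S_i^{(b)}(t)|^2$ appears as one of the summands defining $A_i(\xi)$ (with $h = h_0$). Hence Lemma \ref{lem:ambound} yields
\[
|S_i^{(b)}(t)| \leq \sqrt{\delta}\, i\, \Delta_i^{1 - \beta_i/2},
\]
where $\Delta_i = \langle i+1 \rangle - \langle i \rangle$. Condition \ref{item:secondsubstagecond6}, maintained inductively at the end of each stage, guarantees that $\langle u(m) \rangle$ is a good sequence, so $\beta_i \geq 1/(2\sqrt[4]{i})$; applying Lemma \ref{lem:exponentialtermsstrongbound} with $c_i = \beta_i/2$ and $\epsilon = \gamma/4$ shows that the aggregate contribution of these \emph{good} indices to $n^{-1}|S_n(t)|$ is at most $\gamma/4$, provided $P_k^2 \geq M(\gamma/4, b)$.

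The remaining indices fall into two finite collections: (i) the finitely many $i < \max(h_0, T)$, whose total block contribution is bounded by a constant $C(k)$ depending only on $k$; and (ii) the indices with $u(i) = b$, all of which (by construction) lie in stages contained in the set $K := \{k'' < k : v(k'') = b\}$, and hence satisfy $i \leq P_{k^*}^2$ for $k^* := \max K$. The trivial estimate $|S_i^{(b)}(t)| \leq \langle i+1; b\rangle - \langle i; b\rangle$ on (ii) gives cumulative contribution at most $\langle P_{k^*}^2 + 1; b\rangle$, a quantity fixed once stages $1, \ldots, k-1$ are completed. Dividing $C(k)$ and $\langle P_{k^*}^2 + 1; b\rangle$ by $n$ makes both tend to $0$ as $P_k^2 \to \infty$, so each can be forced below $\gamma/4$. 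Combining the three pieces yields $n^{-1}|S_n(t)| < \gamma$, completing the argument.

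The principal obstacle is the block contributions from steps with $u(i) = b$: for these the bound from Lemma \ref{lem:ambound} is unavailable, because the inner sum in $A_i(\xi)$ only ranges over bases $u(h) \not\sim u(i) = b$. The resolution exploits the structural fact that every such step lies in one of the stages of $K$, which are all fixed before $P_k^2$ is chosen; their cumulative base-$b$ length is therefore a constant in $P_k^2$ and becomes negligible relative to $n = \langle P_k^2+1; b\rangle$ once $P_k^2$ is taken large enough.
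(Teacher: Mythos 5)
Your proof is correct and follows essentially the same route as the paper's: reduce the entropy claim to a Weyl-average bound via Lemma \ref{lem:smallweylaverageimpliesentropy1}, control the per-step block sums with Lemma \ref{lem:ambound}, and aggregate them with Lemma \ref{lem:exponentialtermsstrongbound}. The only difference is bookkeeping: the paper trivially bounds the entire prefix up to the end of the first substage of stage $k$ (a quantity fixed before $P_k^2$ is chosen) and applies the $A_m$ bound only to the second-substage steps, whereas you apply it to all eligible steps with $u(i)\not\sim v(k+1)$ and reserve the trivial bound for the small indices and the $u(i)=v(k+1)$ steps --- both partitions succeed for the same reason, namely that the trivially bounded contribution is fixed before $P_k^2$ is chosen and so vanishes relative to $\langle P_k^2+1; v(k+1)\rangle$.
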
  
\begin{proof}
	Recall that $P_k^1$ denotes the final value of $m$ that is set during the first substage of stage $k$. We know that $\langle u(m) \rangle_{m=1}^{P_k^1}$ satisfies the properties in the definition of a good sequence. On extending it further by setting $u(m)=v(k)$, all the properties in Definition \ref{def:goodsequence} remains satisfied. Therefore, if $u(m)$ is set during the second substage of stage $k$, the corresponding $\beta_m$ is more than $\beta_1/\sqrt[4]{m}$. From Definition \ref{def:goodsequence}, we also get that $\beta_m=\beta_{P_k^1}$ for any $m$ set during the second substage of stage $k$. Define the sequence $\langle c_i \rangle_{i=1}^{\infty}$ such that $c_i=\beta'_i$ if $i \leq P_k^1$ and $c_i=\beta'_{P_k^1}$ for $i \geq P_k^1+1$. It is easy to see that $\langle c_i \rangle$ is a non-increasing sequence such that $c_1=1/4$ and $c_i \geq c_1/\sqrt[4]{i}$. 
	
	From the statement of condition \ref{item:secondsubstagecond4}, we have $v(k')=v(k+1)$ for some $k'<k$. No consecutive elements are equal in the sequence $\langle r_k \rangle$. Furthermore, every equivalence class of numbers have a unique representative in $\langle r_k \rangle$. Hence, we get that $v(k') \not\sim v(k)$. Consider any index $m$ of $\langle u(m) \rangle$ that is set during the second substage of stage $k$. Since $\xi_m$ was chosen from $\sigma_m(\xi_{m-1})$ using Criterion \ref{item:criterion2}, from Lemma \ref{lem:ambound}, we get that,
	
	\begin{align*}
		\sum\limits_{\substack{t=-m\\t \neq 0}}^{m} \sum\limits_{\substack{h=1 \\ u(h) \not\sim u(m)}}^{m}  \left\lvert \sum\limits_{j=\langle m; u(h) \rangle+1}^{\langle m+1 ; u(h) \rangle} e(u(h)^{j-1} t \xi) \right\rvert^2 \leq \delta m^2 (\langle m+1 \rangle - \langle m \rangle)^{2-\beta_m}
	\end{align*}
	for every $t$ with $\lvert t \rvert \leq m$. Since $k'<k$ and $v(k+1)=v(k') \not\sim v(k)$, for any non-zero $t$ with $\lvert t \rvert < m$,
	\begin{align*}
		 \left\lvert \sum\limits_{j=\langle m; v(k+1) \rangle+1}^{\langle m+1 ; v(k+1) \rangle} e(v(k+1)^{j-1} t \xi) \right\rvert &\leq \delta m (\langle m+1 \rangle - \langle m \rangle)^{1-\beta'_m} \\
		 &=\delta m (\langle m+1 \rangle - \langle m \rangle)^{1-c_m}.
	\end{align*}
	Therefore, for any value of $m$ that is set during the second substage and any $l \leq \langle m+1 ; v(k+1) \rangle-\langle m ; v(k+1) \rangle$, 
	\begin{align*}
		\frac{1}{\langle m ; v(k+1) \rangle+l}&\sum\limits_{j=1}^{\langle m ; v(k+1) \rangle+l} e(v(k+1)^{j-1} t \xi) \\
		&\leq \frac{\langle P_k^1+1; v(k+1) \rangle}{\langle m ; v(k+1) \rangle+l} + \frac{\delta m \sum\limits_{i=P_k^1+1}^{m-1} \left( \langle i+1 \rangle -\langle i\rangle \right)^{1-c_i} +l}{\langle m ; v(k+1) \rangle+l}
	\end{align*}
	As $m$ goes to $\infty$,  the first term above goes to $0$. As a consequence of Lemma \ref{lem:exponentialtermsstrongbound} we also get that the second term goes to $0$ as $m \to \infty$. Let $P_k^2$ be large enough so that  
	\begin{align*}
	 \left\lvert \frac{1}{\langle  P_k^2+1; v(k+1) \rangle} \sum_{i=1}^{\langle  P_k^2+1; v(k+1) \rangle} e(tv(k+1)^{(j-1)}x) \right\rvert < 	\gamma(2^{-k},v(k+1),k)	
	\end{align*}
	for every $t$ with $\lvert t \rvert \leq T(2^{-k},v(k+1),k)$. Then, it follows from Lemma \ref{lem:smallweylaverageimpliesentropy1} that,
	\begin{align*}
	\left\lvert H_l^{v(k+1)}(X(k+1)_1^{\langle P_k^2+1;v(k+1) \rangle}) -1 \right\rvert \leq 2^{-k}  	
	\end{align*}
	for every $l \leq k$. The proof of the lemma is thus complete.
\end{proof}

\section{Verification}
\label{sec:verification}
In this section we prove that all the requirements given in section
\ref{sec:overview} are satisfied by the construction in section
\ref{sec:mainconstruction}. The following proofs along with the
argument provided at the end of section \ref{sec:overview} complete
the proof of Theorem \ref{thm:maintheorem}.

%
%

\begin{lemma}
\label{lem:fksk1sk2aresatisfied}
For all $k \geq 1$, the requirements $\mathcal{F}_k$,
$\mathcal{S}_{k,1}$ and $\mathcal{S}_{k,2}$ are met by the
construction.
\end{lemma}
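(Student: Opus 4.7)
The plan is to observe that requirements $\mathcal{F}_k$, $\mathcal{S}_{k,1}$, and $\mathcal{S}_{k,2}$ are \emph{literally} the three conditions \ref{item:firstsubstagecond1}, \ref{item:secondsubstagecond1}, and \ref{item:secondsubstagecond4} that we have already imposed on the choices of $P_k^1$ and $P_k^2$ in section~\ref{sec:mainconstruction}. So the verification reduces to arguing that these three conditions are achievable by choosing the substage lengths sufficiently large, drawing on machinery that is already in place.

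For $\mathcal{F}_k$, I would argue as follows. In the first substage of stage~$k$ we have $u(m)=v(k)$ throughout and $\xi_m$ is picked from $\sigma_m^*(\xi_{m-1})$ via Criterion~\ref{item:criterion1}, so the digit-blocks of $X(k)$ fixed at positions $a_m+1,\dots,b_m-2$ lie in $\mathcal{G}_{v^*(k)}^{b_m-a_m-2}$. Applying Corollary~\ref{cor:lowdiscrepancycombinationnormality} with $b=v(k)$, $j=v^*(k)$ (and $T$ = index at start of stage $k$, whose contribution is absorbed because we fix $l\leq k$ and extend the substage), for every $l\leq k$ and every $w\in\Sigma_{v(k)}^l$ the occurrence probability $P(w,X(k)_1^n)$ tends to $v^*(k)^{-l}$ if $w\in\Sigma_{v^*(k)}^l$ and to $0$ otherwise. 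The limiting $l$-block distribution is therefore uniform on $\Sigma_{v^*(k)}^l$, whose Shannon entropy divided by $l\log v(k)$ is $\log v^*(k)/\log v(k)=e_{r_k}/d_{r_k}=q_{r_k}$. By the uniform continuity of the Shannon entropy (Khinchin, \cite{CovTho91}), for every sufficiently large $P_k^1$ the bound $|H_l^{v(k)}(X(k)_1^{F_k^1})-q_{r_k}|\leq 2^{-k}$ holds for all $l\leq k$ simultaneously (there are only finitely many $l$'s), which is exactly condition~\ref{item:firstsubstagecond1}, hence $\mathcal{F}_k$.

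For $\mathcal{S}_{k,1}$, the argument is structurally identical but now in the second substage $\xi_m$ is chosen from $\sigma_m(\xi_{m-1})$ via Criterion~\ref{item:criterion2}, so the fixed digit blocks lie in $\mathcal{G}_{v(k)}^{b_m-a_m-2}$. Corollary~\ref{cor:lowdiscrepancycombinationnormality} (applied with $j=b=v(k)$) now forces the $l$-block distribution to tend to uniform on the full alphabet $\Sigma_{v(k)}^l$, so $H_l^{v(k)}\to 1$ for every $l\leq k$. Again uniform continuity of entropy lets us satisfy $|H_l^{v(k)}(X(k)_1^{F_k^2})-1|\leq 2^{-(k+1)}$ for all $l\leq k$ once $P_k^2$ is large enough, which is condition~\ref{item:secondsubstagecond1} and hence $\mathcal{S}_{k,1}$.

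Finally, $\mathcal{S}_{k,2}$ coincides verbatim with condition~\ref{item:secondsubstagecond4}, and this condition has already been shown to be achievable by Lemma~\ref{lem:entropiesinvk+1convergesto1} (which is the only nontrivial step: it uses Lemma~\ref{lem:ambound} to bound exponential sums in base $v(k+1)\not\sim v(k)$, then Lemma~\ref{lem:exponentialtermsstrongbound} to average them over the initial segment, and finally Lemma~\ref{lem:smallweylaverageimpliesentropy1} to translate the resulting Weyl-type bounds into entropy bounds). The mildest obstacle in the present lemma is therefore only the bookkeeping that the ``for every $l\leq k$'' uniformity in the three requirements matches the ``$l\leq k$'' quantifier in the three construction conditions, which is immediate from how $\delta_k(\cdot)$ and the constants $L$, $T$, $\gamma$ were defined in terms of the same parameter $k$. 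Nothing new is needed beyond invoking the three conditions of the construction.
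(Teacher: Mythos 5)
Your proposal is correct and takes essentially the same route as the paper: the paper's own proof of this lemma is precisely the observation that $\mathcal{F}_k$, $\mathcal{S}_{k,1}$, $\mathcal{S}_{k,2}$ are conditions \ref{item:firstsubstagecond1}, \ref{item:secondsubstagecond1} and \ref{item:secondsubstagecond4} imposed on $P_k^1$ and $P_k^2$, with their achievability delegated to the discussion in section \ref{sec:mainconstruction} (Lemma \ref{lem:lowdiscrepancycombinationnormality} for the first two, Lemma \ref{lem:entropiesinvk+1convergesto1} for the third). Your write-up simply inlines that supporting discussion, including the correct entropy computation $\log v^*(k)/\log v(k)=q_{r_k}$ for the limiting uniform distribution on $\Sigma_{v^*(k)}^l$.
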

\begin{proof}
$\mathcal{F}_k$ follows from the validity of condition
  \ref{item:firstsubstagecond1} from section
  \ref{subsec:firstsubstage} at the end of the first substage of every
  stage $k$.  From the validity of condition
  \ref{item:secondsubstagecond1} from section
  \ref{subsec:secondsubstage} at the end of the second substage of
  every stage $k$, we get that $\lvert
  H_l^{v(k)}(X(k)_1^n) -1 \rvert \leq 2^{-(k+1)}$ for every $l \leq k$
  when $n=F_k^2$. Therefore, $\mathcal{S}_{k,1}$ is satisfied for
  every $k \geq 1$.  $\mathcal{S}_{k,2}$ follows directly from the
  validity of condition \ref{item:secondsubstagecond4} from section
  \ref{subsec:secondsubstage} at the end of the second substage of
  every stage $k$.
\end{proof}
\begin{lemma}
\label{lem:rkissatisfied}
For every $k > 1$, the requirement $\mathcal{R}_k$ is satisfied by the
construction.
\end{lemma}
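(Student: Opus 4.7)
The plan is to prove $\mathcal{R}_k$ by mirroring the strategy of Lemma \ref{lem:entropiesinvk+1convergesto1}, but forcing the Weyl-average bound to persist at every index $n$ inside stage $k$ rather than only at its right endpoint. Fix any $k' < k$ with $v(k') \not\sim v(k)$ and any $l \leq k'$; by Lemma \ref{lem:smallweylaverageimpliesentropy1} it suffices to prove
\[
\left\lvert \frac{1}{n}\sum_{j=1}^{n} e\!\left(t\, v(k')^{j-1}\xi\right)\right\rvert < \gamma_{k'}, \qquad \gamma_{k'} := \gamma(2^{-(k'+1)}, v(k'), k'),
\]
for every $t$ with $\lvert t\rvert \leq T(2^{-(k'+1)}, v(k'), k')$ and every $n$ in the range required by $\mathcal{R}_k$.

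For such an $n$, I would write $n = \langle m; v(k')\rangle + l'$ with $m \in (P_{k-1}^2, P_k^2]$, and split the exponential sum at $n_0 := \langle P_{k-1}^2+1; v(k')\rangle$ into a prefix $S_{\mathrm{pre}}$ over $[1,n_0]$ and a tail $S_{\mathrm{stage}}$ over $[n_0+1,n]$. For the prefix, the argument used in Lemma \ref{lem:entropiesinvk+1convergesto1} applies essentially verbatim with $v(k+1)$ replaced by $v(k')$ and stage $k$ there playing the role of stage $k-1$ here: choosing $P_{k-1}^2$ sufficiently large drives $\lvert S_{\mathrm{pre}}\rvert/n_0$ below $\gamma_{k'}/2$. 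This follows from a mild strengthening of condition \ref{item:secondsubstagecond3} in which we require $P_{k-1}^2$ to dominate the constants $M(\gamma_{k'}/2, v(k'))$ and $T(2^{-(k'+1)}, v(k'), k')$ simultaneously for every $k' < k$; since only finitely many values of $k'$ are involved, this is a routine extension. Using $n \geq n_0$ gives $\lvert S_{\mathrm{pre}}\rvert/n \leq \gamma_{k'}/2$ throughout the prescribed range of $n$.

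For the tail $S_{\mathrm{stage}}$, every index $m' \in (P_{k-1}^2, m]$ lies in stage $k$ and satisfies $u(m') = v(k) \not\sim v(k')$. Because $v(k')$ already appears as $u(h)$ for some $h \leq P_{k'}^2 < m'$, Lemma \ref{lem:ambound} yields
\[
\left\lvert \sum_{j=\langle m'; v(k')\rangle+1}^{\langle m'+1; v(k')\rangle} e\!\left(t\, v(k')^{j-1}\xi\right)\right\rvert \leq \sqrt{\delta}\, m' \left(\langle m'+1\rangle - \langle m'\rangle\right)^{1-\beta'_{m'}}
\]
for any $\lvert t\rvert \leq m'$ (which is automatic once $P_{k-1}^2 \geq T(2^{-(k'+1)}, v(k'), k')$). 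Summing these block bounds over $m' \in (P_{k-1}^2, m]$, adding a residual contribution of length at most $l'$, and dividing by $n = \langle m; v(k')\rangle + l'$, I would invoke Lemma \ref{lem:exponentialtermsstrongbound} applied to the base $v(k')$ with the non-increasing sequence $c_i = \beta'_i$ for $i \leq P_{k-1}^1$ and $c_i = \beta'_{P_{k-1}^1}$ afterwards (exactly as in the proof of Lemma \ref{lem:entropiesinvk+1convergesto1}) to conclude $\lvert S_{\mathrm{stage}}\rvert/n < \gamma_{k'}/2$.

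Combining the two halves yields the required Weyl-average bound, whence Lemma \ref{lem:smallweylaverageimpliesentropy1} delivers $\lvert H_l^{v(k')}(X(k')_1^n) - 1\rvert \leq 2^{-(k'+1)}$ and hence $\mathcal{R}_k$. The main obstacle is the uniform handling of all $k' < k$: each contributes its own thresholds $T(2^{-(k'+1)}, v(k'), k')$ and tolerance $\gamma_{k'}$, and the hypothesis on $P_{k-1}^2$ must absorb every such pair. Since the number of $k' < k$ is finite and all thresholds are constants depending only on $k$, the strengthened version of condition \ref{item:secondsubstagecond3} only increases $P_{k-1}^2$ by a bounded amount, and the argument of Section \ref{subsec:secondsubstage} shows such a choice is still realizable; no new analytic input beyond Lemmas \ref{lem:ambound}, \ref{lem:exponentialtermsstrongbound}, and \ref{lem:smallweylaverageimpliesentropy1} is required.
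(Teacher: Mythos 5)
Your tail estimate (the sum over blocks inside stage $k$) is sound and matches the paper's machinery exactly: Lemma \ref{lem:ambound} applies to every block $m'$ with $u(m')=v(k)\not\sim v(k')$, and Lemma \ref{lem:exponentialtermsstrongbound} normalizes the accumulated block bounds. The gap is in your prefix bound. You split at $n_0=\langle P_{k-1}^2+1;v(k')\rangle$ and assert that $\lvert S_{\mathrm{pre}}\rvert/n_0<\gamma_{k'}/2$ follows from the argument of Lemma \ref{lem:entropiesinvk+1convergesto1} together with ``a mild strengthening of condition \ref{item:secondsubstagecond3}.'' This does not work as stated. Condition \ref{item:secondsubstagecond3} is only a threshold condition ($P^2$ exceeding the constants $M$ and $T$); enlarging $P_{k-1}^2$ does not by itself make the Weyl averages in base $v(k')$ small at the end of stage $k-1$. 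The construction's only direct Weyl-average guarantee is condition \ref{item:secondsubstagecond2}, which at the end of stage $k-1$ controls base $v(k-1)$ (to tolerance $\gamma(2^{-k},v(k-1),k-1)$), not an arbitrary earlier base $v(k')$ to the much smaller tolerance $\gamma_{k'}=\gamma(2^{-(k'+1)},v(k'),k')$. Lemma \ref{lem:entropiesinvk+1convergesto1} likewise only covers the base of the \emph{next} stage. To justify your prefix bound you would have to propagate smallness of the base-$v(k')$ Weyl sums through every stage from $k'+1$ to $k-1$ — i.e., either set up an explicit induction on the Weyl averages (entropy bounds such as $\mathcal{R}_{k-1}$ cannot be inverted back into Weyl-average bounds), or add to the construction a new end-of-stage condition requiring small Weyl averages in \emph{all} previously seen bases, and then verify its realizability. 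Either route amounts to redoing the full argument, so the prefix claim is not a routine extension.

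The paper avoids this entirely by choosing a different split point: it anchors at $F_{k'}^2$, the end of stage $k'$ itself (after reducing WLOG to the case that $v(k')$ does not recur between $k'$ and $k$). There, condition \ref{item:secondsubstagecond2} hands you $\lvert S_{\mathrm{pre}}\rvert/F_{k'}^2<\gamma_{k'}/2$ with exactly the right constants, condition \ref{item:secondsubstagecond3} at stage $k'$ guarantees $m\geq P_{k'}^2\geq M(\gamma_{k'}/2,v(k'))$ and $\lvert t\rvert\leq T(2^{-(k'+1)},v(k'),k')\leq m$, and \emph{every} block in the range $(P_{k'}^2,P_k^2]$ is in a base non-equivalent to $v(k')$, so your tail argument applies uniformly to the whole stretch. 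If you move your split point back to $F_{k'}^2$, your proof closes with no new conditions on the construction.
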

\begin{proof}
	Let $k'$ be any stage number below $k$ such that $v(k') \not\sim v(k)$. Without loss of generality, let us assume that there does not exist any $k''$ between $k'$ and $k$ such that $v(k'')=v(k')$. Since the equivalence class of base $v(k')$ has a unique representative in the sequence $\langle r_k \rangle$, we also get that $v(k'') \not\sim v(k')$ for any $k''$ between $k'$ and $k$. Since condition \ref{item:secondsubstagecond2} from section \ref{subsec:secondsubstage} is valid at the end of the second substage of stage $k'$, we have,
	\begin{align}
	\label{eq:rkproofeq1}
\left\lvert \frac{1}{F_{k'}^2} \sum\limits_{i=1}^{F_{k'}^2} e(tv(k')^{(j-1)}\xi) \right\rvert < 	\frac{\gamma(2^{-(k'+1)},v(k'),k')}{2}
\end{align}
for every $t$ with $\lvert t \rvert \leq T(2^{-(k'+1)},v(k'),k')$. From the validity of condition \ref{item:secondsubstagecond3} from section \ref{subsec:secondsubstage} at the end of the second substage of stage $k'$, we have,
\begin{align*}
	P_{k'}^2 \geq \max\{M(\gamma(2^{-(k'+1)},v(k'),k')/2,v(k')),T(2^{-(k'+1)},v(k'),k')\}.
\end{align*}
Let $m$ denote any index such that $P_{k'}^2+1 \leq m \leq P_{k}^2$. From Lemma \ref{lem:ambound}, we get that,
\begin{align*}
		\sum\limits_{\substack{t=-m\\t \neq 0}}^{m} \sum\limits_{\substack{h=1 \\ u(h) \not\sim u(m)}}^{m}  \left\lvert \sum\limits_{j=\langle m; u(h) \rangle+1}^{\langle m+1 ; u(h) \rangle} e(u(h)^{j-1} t \xi) \right\rvert^2 \leq \delta m^2 (\langle m+1 \rangle - \langle m \rangle)^{2-\beta_m}
	\end{align*}
for every $t$ with $\lvert t \rvert \leq m$. Since for every $k'' \in [k'+1,k]$, $v(k'') \not\sim v(k')$, from the above we obtain,
	\begin{align*}
		 \left\lvert \sum\limits_{j=\langle m; v(k') \rangle+1}^{\langle m+1 ; v(k') \rangle} e(v(k')^{j-1} t \xi) \right\rvert \leq \delta m (\langle m+1 \rangle - \langle m \rangle)^{1-\beta'_m}.
	\end{align*}
Define the sequence $\langle c_i \rangle_{i=1}^{\infty}$ as $c_i=\beta'_i$. It is easily verified that $\langle c_i \rangle$ is a non-increasing sequence satisfying  $c_1=1/4$ and $c_i \geq c_1 / \sqrt[4]{i}$ for every $i \geq 1$. Consider any $m$ satisfying $P_{k'}^2+1 \leq m \leq P_{k}^2$ and $l \leq \langle m+1 ; v(k') \rangle-\langle m ; v(k') \rangle$. Now, for any $t$ with $\lvert t \rvert \leq T(2^{-(k'+1)},v(k'),k') \leq m$, we have,
\begin{align*}
	& \frac{1}{\langle m ; v(k') \rangle+l}\left\lvert \sum\limits_{j=1}^{\langle m ; v(k') \rangle+l} e(v(k')^{j-1} t \xi) \right\rvert\\
	 &= \frac{1}{\langle m ; v(k') \rangle+l} \left\lvert \sum\limits_{i=1}^{\langle P_{k'}^2; v(k') \rangle} e(tv(k')^{(j-1)}\xi) \right\rvert \\
	 &+ \frac{1}{\langle m ; v(k') \rangle+l}\left\lvert\sum\limits_{j=\langle P_{k'}^2; v(k') \rangle+1}^{\langle m ; v(k') \rangle+l} e(v(k')^{j-1} t \xi) \right\rvert\\
	 &= \frac{1}{\langle m ; v(k') \rangle+l} \left\lvert\sum\limits_{i=1}^{F_{k'}^2} e(tv(k')^{(j-1)}\xi) \right\rvert + \frac{1}{\langle m ; v(k') \rangle+l}\left\lvert \sum\limits_{j=\langle P_{k'}^2; v(k') \rangle+1}^{\langle m ; v(k') \rangle+l} e(v(k')^{j-1} t \xi) \right\rvert\\
	 &\leq \left\lvert \frac{1}{F_{k'}^2} \sum\limits_{i=1}^{F_{k'}^2} e(tv(k')^{(j-1)}\xi) \right\rvert + \frac{\sum\limits_{i=P_{k'}^2}^{ m-1} \left( \langle i+1 \rangle - \langle i \rangle \right)^{1-\beta'_i} +l}{\langle m ; v(k') \rangle+l}\\
	  &\leq \frac{\gamma(2^{-(k'+1)},v(k'),k')}{2} + \frac{\sum\limits_{i=1}^{ m-1} \left( \langle i+1 \rangle - \langle i \rangle \right)^{1-c_i} +l}{\langle m ; v(k') \rangle+l}\\
	  &\leq \frac{\gamma(2^{-(k'+1)},v(k'),k')}{2} + \frac{\gamma(2^{-(k'+1)},v(k'),k')}{2} \\
	  &= \gamma(2^{-(k'+1)},v(k'),k').
\end{align*}
The second last inequality above follows from Lemma \ref{lem:exponentialtermsstrongbound} since,
\begin{align*}
 m \geq P_{k'}^2 \geq M(\gamma(2^{-(k'+1)},v(k'),k')/2,v(k')).
 \end{align*}

Finally, from the above inequalities and Lemma \ref{lem:smallweylaverageimpliesentropy1} we obtain that, $\lvert H_l^{v(k')}(X(k')_1^n) -1 \rvert \leq 2^{-(k'+1)}$ for every $l \leq k'$ when $\langle P_{k-1}^2+1; v(k') \rangle+1 \leq n \leq \langle P_{k}^2+1; v(k') \rangle$.

\end{proof}

\begin{lemma}
\label{lem:tk1issatisfied}
For every $k \geq 1$, the requirement $\mathcal{T}_{k,1}$ is met by
the construction.
\end{lemma}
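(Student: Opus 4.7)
The plan is to prove $\mathcal{T}_{k,1}$ by case-splitting on $\tau := n - F_k^1$ at the threshold $L := L_{v(k),v(k)}(k,\delta_k(2^{-k})/2)$ from Corollary \ref{cor:lowdiscrepancycombinationnormality}. The intuition is that on the short range $\tau \leq L + 2k$ the $l$-block distribution barely moves from its value at $F_k^1$ (where $\mathcal{F}_k$ puts the entropy within $2^{-k}$ of $q_{r_k}$), while on the long range $\tau > L + 2k$ there are already enough Criterion \ref{item:criterion2} blocks of $\mathcal{G}_{v(k)}$ digits past $F_k^1$ to give a near-uniform tail, after which concavity of the normalised Shannon entropy closes the argument. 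The denominator $\min\{\delta_k(2^{-k}),2^{-k}\}/2$ baked into condition \ref{item:firstsubstagecond2} is what forces the various error terms to be at most $2^{-k}$ in both regimes.

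For the short range, I would first bound $|P(z,X(k)_1^n) - P(z,X(k)_1^{F_k^1})|$ for each $z \in \Sigma_{v(k)}^l$ by the elementary estimate $(2\tau + l)/(F_k^1 - l + 1)$, using that the length-$l$ occurrence count can grow by at most $\tau + l$. Condition \ref{item:firstsubstagecond2} then makes this at most $\delta_k(2^{-k})$, and the uniform continuity of Shannon entropy (the defining property of $\delta_k$) shifts the block entropy by at most $2^{-k}$; combined with $\mathcal{F}_k$ this yields $H_l^{v(k)}(X(k)_1^n) \geq q_{r_k} - 2 \cdot 2^{-k} = q_{r_k} - 2^{-(k-1)}$.

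For the long range, I would decompose the $l$-block distribution of $X(k)_1^n$ as the convex combination $\alpha_L \pi_L + \alpha_B \pi_B + \alpha_R \pi_R$, where $\pi_L$, $\pi_B$, $\pi_R$ are respectively the $l$-block distributions in $X(k)_1^{F_k^1}$, in the $l-1$ boundary blocks straddling position $F_k^1$, and in $X(k)_{F_k^1+1}^n$; thus $\alpha_L + \alpha_R = 1 - \alpha_B$ and $\alpha_B \leq l/F_k^1 \leq 2^{-k}$ by condition \ref{item:firstsubstagecond2}. Requirement $\mathcal{F}_k$ gives $H(\pi_L) \geq q_{r_k} - 2^{-k}$. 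To control $\pi_R$, I would introduce an auxiliary sequence $X'$ that agrees with $X(k)$ for indices $> F_k^1$ and whose blocks at indices $\leq F_k^1$ are chosen from $\mathcal{G}_{v(k)}$ (legitimate since Corollary \ref{cor:lowdiscrepancy} makes $\mathcal{G}_{v(k)}^N$ nonempty for every relevant $N$), so $X'$ satisfies the hypothesis of Corollary \ref{cor:lowdiscrepancycombinationnormality}. Invoking that corollary with $b = j = v(k)$, $\epsilon = \delta_k(2^{-k})/2$, and $T = P_k^1 + 1$ (so that $\langle T; v(k) \rangle + 1 = F_k^1 + 1$) gives $|\pi_R(z) - v(k)^{-l}| \leq \delta_k(2^{-k})/2$ for all $z \in \Sigma_{v(k)}^l$, hence $H(\pi_R) \geq 1 - 2^{-k}$ by the definition of $\delta_k$. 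Concavity of Shannon entropy together with $q_{r_k} \leq 1$ and the weight identity $\alpha_L + \alpha_R = 1 - \alpha_B$ then yields $H_l^{v(k)}(X(k)_1^n) \geq \alpha_L q_{r_k} + \alpha_R - 2^{-k} \geq (1-\alpha_B) q_{r_k} - 2^{-k} \geq q_{r_k} - 2 \cdot 2^{-k} = q_{r_k} - 2^{-(k-1)}$.

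The main obstacle I expect is invoking Corollary \ref{cor:lowdiscrepancycombinationnormality} cleanly for the tail: the actual prefix $X(k)_1^{F_k^1}$ fails the corollary's hypothesis because its blocks come from $\mathcal{G}_{v^*(k)}$ rather than $\mathcal{G}_{v(k)}$, so the surrogate $X'$ is essential, and one must verify that $\langle P_k^1+1; v(k) \rangle = F_k^1$ makes $T = P_k^1 + 1$ the correct ``start of measurement'' and that $X'$ and $X(k)$ agree on the entire measured window $[F_k^1+1, n]$. Once this indexing bookkeeping is nailed down, the continuity and concavity steps are routine.
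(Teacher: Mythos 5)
Your proposal is correct and follows essentially the same route as the paper: the same case split at the threshold $L_{v(k),v(k)}(k,\delta_k(2^{-k})/2)$, the same three-piece convex decomposition of the $l$-block distribution, uniform continuity of entropy via $\delta_k$ in the short range, and Lemma \ref{lem:lowdiscrepancycombinationnormality} plus concavity of Shannon entropy in the long range. Your surrogate sequence $X'$ addresses a hypothesis-matching point that the paper passes over silently (the corollary's conclusion only concerns blocks from $T$ onward, so either fix works), but this is a refinement of the same argument, not a different one.
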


\begin{proof}
Let $n \geq F_k^1$ denote the index of any digit that is fixed during the second substage of stage $k$ in the base $v(k)$ expansion of $\xi$. Now, for any $w \in \Sigma_{v(k)}^*$ let $\mathbb{P}^k(w,j_1,j_2)$ denote the fraction of $\lvert w \rvert$-length blocks containing $w$ among the digits in the base-$v(k)$ expansion of $\xi$ with indices in the range $j_1$ to $j_2$. Then,

\begin{align}
\label{eq:firstandsecondsubstageconvexcombination}
\mathbb{P}^{k}(w,1,n) &= \frac{F_k^1-\lvert w \rvert+1}{n-\lvert w \rvert+1} \mathbb{P}^k(w,1,F_k^1) + \frac{\lvert w \rvert-1}{n-\lvert w \rvert + 1} \mathbb{P}^k(w,F_k^1-\lvert w \rvert +2, F_k^1+\lvert w \rvert-1) \\
&+ \frac{n-F_k^1-\lvert w \rvert+1}{n-\lvert w \rvert+1} \mathbb{P}^k(w,F_k^1+1,n). \nonumber
\end{align}

For any string $w$ with length at most $k$ and any $n \geq F_k^1$, from (\ref{eq:firsttosecondsubstagetransition}) it follows that,

\begin{align*}
\frac{\lvert w \rvert-1}{n-\lvert w \rvert +1} \leq \frac{k}{n-k} \leq 	\frac{\min\{\delta_k(2^{-k}),2^{-k}\}}{2} \leq \frac{\delta_k(2^{-k})}{2}.
\end{align*}

For $n-F_k^1 \leq L_{v(k),v(k)}(k,\delta_k(2^{-k})/2)$, we have,
\begin{align*}
\frac{n-F_k^1-\lvert w \rvert+1}{n-\lvert w \rvert+1} &\leq \frac{L_{v(k),v(k)}(k,\delta_k(2^{-k})/2)}{n-\lvert w \rvert+1} \\
&\leq 	\frac{\min\{\delta_k(2^{-k}),2^{-k}\}}{2}\\
& \leq \frac{\delta_k(2^{-k})}{2}.
\end{align*}
Now, from the definition of $\delta_k(2^{-k})$ it follows that for any $l \leq k$,
\begin{align}
\label{eq:tk1issatisfiedproofeq1}
 \frac{1}{l \log v(k)} \left\lvert  H(\mathbb{P}^k_l(\cdot,1,n))-H(\mathbb{P}^k_l(\cdot,1,F_k^1)) \right\rvert &\leq \left\lvert  H(\mathbb{P}^k_l(\cdot,1,n))-H(\mathbb{P}^k_l(\cdot,1,F_k^1)) \right\rvert  \nonumber\\
 &\leq  \frac{1}{2^k} 
\end{align}
where $\mathbb{P}_l^k(\cdot,j_1,j_2)$ denotes the probability distribution over $\Sigma_{v(k)}^l$ such that for any $w \in \Sigma_{v(k)}^l$, $\mathbb{P}_l^k(w,j_1,j_2)$ is defined to be equal to $ \mathbb{P}^k(w,j_1,j_2)$. 

Since $\mathcal{F}_{k}$ is satisfied (as shown in Lemma \ref{lem:fksk1sk2aresatisfied}), we obtain,
\begin{align}
\label{eq:tk1issatisfiedproofeq2}
\left\lvert \frac{1}{l \log v(k)} H(\mathbb{P}^k_l(\cdot,1,F_k^1)) -q_{r_k} \right\rvert \leq \frac{1}{2^k}
\end{align}

for every $l \leq k$ when $n=F_k^1$. From (\ref{eq:tk1issatisfiedproofeq1}) and (\ref{eq:tk1issatisfiedproofeq2}) we get that among the first $L_{v(k),v(k)}(k,\delta_k(2^{-k})/2)$ digits fixed during the second substage, the $l$-length block entropy is inside $2^{-k}$ away from $q_{r_k}$ for every $l \leq k$.

Now, we consider the case when $n-F_k^1 > L_{v(k),v(k)}(k,\delta_k(2^{-k})/2)$. From Lemma \ref{lem:lowdiscrepancycombinationnormality}, it follows that when the aforementioned condition is satisfied, 

\begin{align*}
\left\lvert \mathbb{P}^k(w,F_k^1+1,n) - \frac{1}{v(k)^{\lvert w \rvert}} \right\rvert \leq \frac{\delta_k(2^{-k})	}{2}.
\end{align*}
for any $w \in \Sigma_{v(k)}^*$ with $\lvert w \rvert \leq k$. Now using the definition of $\delta_k(2^{-k})$ we obtain that,
\begin{align*}
\left\lvert \frac{1}{l \log v(k)} H(\mathbb{P}^k(\cdot,F_k^1+1,n)) - 1 \right\rvert \leq \frac{1}{2^k}. 
\end{align*}
for any $l \leq k$. Now, from the concavity of the Shannon entropy and (\ref{eq:firstandsecondsubstageconvexcombination}) we get that for any $l \leq k$,
\begin{align*}
\frac{1}{l\log v(k)} H(\mathbb{P}_l^k(\cdot,1,n)) &\geq \frac{F_k^1-l+1}{n-l+1} \times \frac{1}{l\log v(k)} H(\mathbb{P}^k(\cdot,1,F_k^1)) \\
&+ \frac{l-1}{n-l + 1} \times \frac{1}{l\log v(k)} H(\mathbb{P}^k(\cdot,F_k^1-l +2, F_k^1+l-1)) \\
&+ \frac{n-F_k^1-l+1}{n-l+1} \times \frac{1}{l\log v(k)}H(\mathbb{P}^k(\cdot,F_k^1+1,n))\\
&\geq \frac{F_k^1-l+1}{n-l+1} \times \frac{1}{l\log v(k)}H(\mathbb{P}^k(\cdot,1,F_k^1)) \\
&+ \frac{n-F_k^1-l+1}{n-l+1} \times \frac{1}{l\log v(k)}H(\mathbb{P}^k(\cdot,F_k^1+1,n)).
\end{align*}

Since $\mathcal{F}_{k}$ is satisfied (Lemma \ref{lem:fksk1sk2aresatisfied}), we know that,
\begin{align*}
\left\lvert \frac{1}{l \log v(k)} H(\mathbb{P}^k(\cdot,1,F_k^1)) -q_{r_k}  \right\rvert\leq \frac{1}{2^k}.	
\end{align*}
 Therefore it follows that for any $l \leq k$,
 \begin{align*}
 	 \frac{1}{l\log v(k)} H(\mathbb{P}^k(\cdot,1,n)) &\geq  \frac{F_k^1-l+1}{n-l+1} \times q_{r_k} + \frac{n-F_k^1-l+1}{n-l+1} \times 1 -\frac{1}{2^k}.	 
 \end{align*}

Setting,
\begin{align*}
\lambda_{l,k} = \frac{F_k^1-l+1}{n-l+1}	
\end{align*}
we get,

 \begin{align*}
 	 \frac{1}{l\log v(k)} H(\mathbb{P}^k(\cdot,1,n)) &\geq  \lambda_{l,k} \times q_{r_k} + (1-\lambda_{l,k}-\frac{l-1}{n-l+1}) \times 1 -\frac{1}{2^k} \\
 	 &= \lambda_{l,k} \times q_{r_k} + (1-\lambda_{l,k}) \times 1 -\frac{l-1}{n-l+1} -\frac{1}{2^k}
 \end{align*}
 
 Using the lower bound on the length of the first substage in (\ref{eq:firsttosecondsubstagetransition}), we get that,
 \begin{align*}
 \frac{l-1}{n-l+1} \leq \frac{k-1}{n-k+1} \leq \frac{k}{n-k} \leq \frac{1}{2^k}.
 \end{align*}

 Therefore,
\begin{align*}
	 \frac{1}{l\log v(k)} H(\mathbb{P}^k(\cdot,1,n)) &\geq \lambda_{l,k} \times q_{r_k} + (1-\lambda_{l,k}) \times 1 - \frac{1}{2^k}-\frac{1}{2^k}\\
	 &\geq q_{r_k} -  \frac{1}{2^{k-1}}.
\end{align*}

for every $l \leq k$ and $n \geq F_k^1 + L_{v(k),v(k)}(k,\delta_k(2^{-k})/2)$. Thus, we have shown that for every $l \leq k$ and $n \geq F_k^1$,
\begin{align*}
	 \frac{1}{l\log v(k)} H(\mathbb{P}^k(\cdot,1,n)) &\geq q_{r_k} -  \frac{1}{2^{k-1}}.
\end{align*}	

From the above condition we get that the requirement $\mathcal{T}_{k,1}$ is satisfied for every $k \geq 1$.	
\end{proof}

\begin{lemma}
\label{lem:tk2issatisfied}
For every $k \geq 1$, the requirement $\mathcal{T}_{k,2}$ is met by
the construction. 
\end{lemma}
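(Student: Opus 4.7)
The plan is to adapt the argument of Lemma \ref{lem:tk1issatisfied} almost verbatim, but with the roles of the ``low-entropy'' and ``high-entropy'' portions swapped. Fix $k \geq 1$ and suppose there is some $k' < k$ with $v(k') = v(k+1)$, which is precisely the hypothesis of $\mathcal{T}_{k,2}$ (and also the hypothesis under which $\mathcal{S}_{k,2}$ has been established in Lemma \ref{lem:fksk1sk2aresatisfied}). Write $T = \langle P_k^2+1; v(k+1) \rangle$ for the last index of $X(k+1)$ fixed before stage $k+1$ begins; at $n = T$, requirement $\mathcal{S}_{k,2}$ gives $\lvert H_l^{v(k+1)}(X(k+1)_1^T) - 1 \rvert \leq 2^{-k}$ for every $l \leq k$, and during the first substage of stage $k+1$ the new digits are drawn (by Criterion \ref{item:criterion1}) as concatenations of blocks from $\mathcal{G}_{v^*(k+1)}^{b_m - a_m - 2}$.

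For any $n$ with $I_{k+1}^1 \leq n \leq F_{k+1}^1$ and any $l \leq k$, I would decompose the empirical $l$-block distribution $\mathbb{P}_l^{k+1}(\cdot, 1, n)$, exactly as in (\ref{eq:firstandsecondsubstageconvexcombination}), as a convex combination of a prefix distribution $\mathbb{P}_l^{k+1}(\cdot, 1, T)$, a boundary distribution over the at most $l - 1$ straddling blocks, and a suffix distribution $\mathbb{P}_l^{k+1}(\cdot, T+1, n)$. Condition \ref{item:secondsubstagecond5} from Section \ref{subsec:secondsubstage}, specifically the threshold (\ref{eq:kthtok+1thstagetransition}), forces the boundary weight $(l-1)/(n-l+1)$ to stay below $\delta_{k+1}(2^{-k})/2$ for $l \leq k$. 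In the regime $n - T \leq L_{v(k+1), v^*(k+1)}(k, \delta_{k+1}(2^{-k})/2)$, the same threshold also bounds the suffix weight by $\delta_{k+1}(2^{-k})/2$, so the empirical distribution barely moves away from $\mathbb{P}_l^{k+1}(\cdot, 1, T)$; uniform continuity of Shannon entropy together with $\mathcal{S}_{k,2}$ then yields $H_l^{v(k+1)}(X(k+1)_1^n) \geq 1 - 2^{-(k-1)} \geq q_{r_{k+1}} - 2^{-(k-1)}$. In the complementary regime, Corollary \ref{cor:lowdiscrepancycombinationnormality} (applied with base $v(k+1)$ and alphabet size $v^*(k+1)$) guarantees that the suffix distribution lies within $\delta_{k+1}(2^{-k})/2$ of the uniform distribution on $\Sigma_{v^*(k+1)}^l$, whose normalized base-$v(k+1)$ entropy equals $\log v^*(k+1) / \log v(k+1) = e_{r_{k+1}}/d_{r_{k+1}} = q_{r_{k+1}}$; uniform continuity then gives the suffix entropy $\geq q_{r_{k+1}} - 2^{-k}$. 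Concavity of Shannon entropy bounds $H_l^{v(k+1)}(X(k+1)_1^n)$ from below by the convex combination of the prefix contribution (entropy $\geq 1 - 2^{-k}$) and the suffix contribution (entropy $\geq q_{r_{k+1}} - 2^{-k}$), minus the $2^{-k}$ slack from the boundary; since $1 \geq q_{r_{k+1}}$ the convex combination is at least $q_{r_{k+1}} - 2^{-k}$, which together with the boundary slack yields the desired lower bound $q_{r_{k+1}} - 2^{-(k-1)}$.

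The main obstacle is the arithmetic bookkeeping of three simultaneous error sources — the uniform continuity slack in the Shannon entropy, the approximation error from Corollary \ref{cor:lowdiscrepancycombinationnormality}, and the boundary-block weight — and checking that they all remain bounded by $2^{-(k-1)}$ at the same time. Condition \ref{item:secondsubstagecond5} with the precise threshold (\ref{eq:kthtok+1thstagetransition}) was calibrated exactly for this purpose, so the remaining work is a mechanical transcription of the estimates already carried out in the proof of Lemma \ref{lem:tk1issatisfied}, with $F_k^1$ replaced by $T$ and $\mathcal{F}_k$ replaced by $\mathcal{S}_{k,2}$.
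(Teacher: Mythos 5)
Your proposal is correct and follows essentially the same route as the paper's proof: the same convex decomposition of the empirical $l$-block distribution at the stage boundary, the same two-case split on $n - T$ relative to $L_{v(k+1),v^*(k+1)}(k,\delta_{k+1}(2^{-k})/2)$, with $\mathcal{S}_{k,2}$ supplying the prefix entropy near $1$, Corollary \ref{cor:lowdiscrepancycombinationnormality} supplying the suffix entropy near $q_{r_{k+1}}$, and concavity plus the threshold (\ref{eq:kthtok+1thstagetransition}) closing the arithmetic. The paper's argument is indeed the transcription of Lemma \ref{lem:tk1issatisfied} that you describe.
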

\begin{proof}
We assume that there exists $k'<k$ such that $v(k')=v(k+1)$. Otherwise, $\mathcal{T}_{k,2}$ is vacuously satisfied. Let $F_k^2=\langle P_k^2 ; v(k+1) \rangle$ denote the index of the final digit fixed during the second substage of stage $k$ in the base $v(k+1)$ expansion of $\xi$. Let $n \geq F_k^2$ denote the index of any digit that is fixed during the first substage of stage $k+1$ in the base $v(k+1)$ expansion of $\xi$. As in the analysis of the transition between the first and second substages, for every $w \in \Sigma_s^*$, let $\mathbb{P}^{k+1}(w,j_1,j_2)$ denote the fraction of $\lvert w \rvert$-length blocks containing $w$ among the digits in the base-$v(k+1)$ expansion of $\xi$ with indices in the range $j_1$ to $j_2$. Then,

\begin{align}
\label{eq:kthandk+1thstageconvexcombination}
\mathbb{P}^{k+1}(w,1,n) &= \frac{F_k^2-\lvert w \rvert+1}{n-\lvert w \rvert+1} \mathbb{P}^{k+1}(w,1,F_k^2) \nonumber \\
&+ \frac{\lvert w \rvert-1}{n-\lvert w \rvert + 1} \mathbb{P}^{k+1}(w,F_k^2-\lvert w \rvert +2, F_k^2+\lvert w \rvert-1) \nonumber \\
&+ \frac{n-F_k^2-\lvert w \rvert+1}{n-\lvert w \rvert+1} \mathbb{P}^{k+1}(w,F_k^2+1,n).
\end{align}

For any string $w$ with length at most $k$ and any $n \geq F_k^2$, from (\ref{eq:kthtok+1thstagetransition}) it follows that,

\begin{align*}
\frac{\lvert w \rvert-1}{n-\lvert w \rvert +1} \leq \frac{k}{n-k} \leq 	\frac{\min\{\delta_{k+1}(2^{-k}),\delta_{k+1}(2^{-k}),2^{-k}\}}{2} \leq \frac{\delta_{k+1}(2^{-k})}{2}.
\end{align*}

For $n-F_k^2 \leq L_{v(k+1),v^*(k+1)}(k,\delta_{k+1}(2^{-k})/2)$, using inequality (\ref{eq:kthtok+1thstagetransition}) we get,

\begin{align*}
	\frac{n-F_k^2-\lvert w \rvert+1}{n-\lvert w \rvert+1} &\leq \frac{L_{v(k+1),v^*(k+1)}(k,\delta_{k+1}(2^{-k})/2)}{n-k}\\
	&\leq \frac{\min\{\delta_{k+1}(2^{-k}),\delta_{k+1}(2^{-k}),2^{-k}\}}{2}\\
	 &\leq \frac{\delta_{k+1}(2^{-k})}{2}.
\end{align*}

Now, from the definition of $\delta_{k+1}(2^{-k})$ it follows that for any $l \leq k$,
\begin{align}
\label{eq:tk2issatisfiedproofeq1}
 \frac{1}{l \log v(k+1)} \left\lvert  H(\mathbb{P}^{k+1}_l(\cdot,1,n))-H(\mathbb{P}^{k+1}_l(\cdot,1,F_k^2)) \right\rvert &\leq \left\lvert  H(\mathbb{P}^{k+1}_l(\cdot,1,n))-H(\mathbb{P}^{k+1}_l(\cdot,1,F_k^2)) \right\rvert  \nonumber\\
 &\leq  \frac{1}{2^k} 
\end{align}
where $\mathbb{P}_l^{k+1}(\cdot,j_1,j_2)$ denotes the probability distribution over $\Sigma_{v(k+1)}^l$ such that for any $w \in \Sigma_{v(k+1)}^l$, $\mathbb{P}_l^{k+1}(w,j_1,j_2)$ is defined to be equal to $ \mathbb{P}^{k+1}(w,j_1,j_2)$. 

Since there exists $k'<k$ such that $v(k')=v(k+1)$, $\mathcal{S}_{k,2}$ is satisfied (as shown in Lemma \ref{lem:fksk1sk2aresatisfied}). Therefore, we obtain,
\begin{align}
\label{eq:tk2issatisfiedproofeq2}
\left\lvert \frac{1}{l \log v(k+1)} H(\mathbb{P}^{k+1}_l(\cdot,1,F_k^2)) -q_{r_k} \right\rvert \leq \frac{1}{2^k}
\end{align}

for every $l \leq k$ when $n=F_k^2$. 

From (\ref{eq:tk2issatisfiedproofeq1}) and (\ref{eq:tk2issatisfiedproofeq2}) we get that among the first $L_{v(k+1),v^*(k+1)}(k,\delta_{k+1}(2^{-k})/2)$ digits fixed during the $(k+1)$\textsuperscript{th} stage, the $l$-length block entropy is inside $B_{2^{-k}}(1)$ for every $l \leq k$. 

Now, we consider the case when $n-F_k^2 > L_{v(k+1),v^*(k+1)}(k,\delta_{k+1}(2^{-k})/2)$. From Lemma \ref{lem:lowdiscrepancycombinationnormality}, it follows that when the aforementioned condition is satisfied, 

\begin{align*}
\left\lvert \mathbb{P}^{k+1}(w,F_k^2+1,n) - \frac{1}{v^*(k+1)^{\lvert w \rvert}} \right\rvert \leq \frac{\delta_{k+1}(2^{-k})	}{2}.
\end{align*}
for any $w \in \Sigma_{v^*(k+1)}^*$ with $\lvert w \rvert \leq k$. Now using the definition of $\delta_{k+1}(2^{-k})$ we obtain that,
\begin{align*}
&\left\lvert \frac{1}{l\log v(k+1)} H(\mathbb{P}_l^{k+1}(\cdot,F_k^2+1,n)) - q_{r_{k+1}} \right\rvert \\
&= \left\lvert \frac{1}{l\log v(k+1)} H(\mathbb{P}_l^{k+1}(\cdot,F_k^2+1,n)) - \frac{e_{r_{k+1}}}{d_{r_{k+1}}} \right\rvert\\ 
&=\left\lvert \frac{1}{l\log v(k+1)} H(\mathbb{P}_l^{k+1}(\cdot,F_k^2+1,n)) - \frac{\log v^*(k+1)}{\log v(k+1)} \right\rvert \\
&\leq \frac{1}{2^k}. 
\end{align*}
for any $l \leq k$. Now, from the concavity of the Shannon entropy and (\ref{eq:kthandk+1thstageconvexcombination}) we get that for any $l \leq k$,
\begin{align*}
\frac{1}{l\log v(k+1)} H(\mathbb{P}_l^{k+1}(\cdot,1,n)) &\geq \frac{F_k^2-l+1}{n-l+1}  \frac{1}{l\log v(k+1)} H(\mathbb{P}_l^{k+1}(\cdot,1,F_k^2)) \\
&+ \frac{l-1}{n-l + 1}\frac{1}{l\log v(k+1)} H(\mathbb{P}_l^{k+1}(\cdot,F_k^2-l +2, F_k^2+l-1)) \\
&+ \frac{n-F_k^2-l+1}{n-l+1} \frac{1}{l\log v(k+1)}H(\mathbb{P}_l^{k+1}(\cdot,F_k^2+1,n))\\
&\geq \frac{F_k^2-l+1}{n-l+1}  \frac{1}{l\log v(k+1)}H(\mathbb{P}_l^{k+1}(\cdot,1,F_k^2))\\
& + \frac{n-F_k^2-l+1}{n-l+1}  \frac{1}{l\log v(k+1)}H(\mathbb{P}_l^{k+1}(\cdot,F_k^2+1,n)).
\end{align*}

Since $\mathcal{S}_{k,2}$ is satisfied, we obtain,
\begin{align*}
\left\lvert \frac{1}{l \log v(k+1)} H(\mathbb{P}_l^{k+1}(\cdot,1,F_k^2)) -1  \right\rvert\leq \frac{1}{2^k}.	
\end{align*}
 Therefore it follows that for any $l \leq k$,
 \begin{align*}
 	 \frac{1}{l \log v(k+1)} H(\mathbb{P}_l^{k+1}(\cdot,1,n)) &\geq  \frac{F_k^2-l+1}{n-l+1} \times 1 + \frac{n-F_k^2-l+1}{n-l+1} \times q_{r_{k+1}} -\frac{1}{2^k}.	 
 \end{align*}

Setting,
\begin{align*}
\lambda_{l,k} = \frac{F_k^2-l+1}{n-l+1}	
\end{align*}
we get,

 \begin{align*}
 	 \frac{1}{l \log v(k+1)} H(\mathbb{P}_l^{k+1}(\cdot,1,n)) &\geq  \lambda_{l,k} \times 1 + (1-\lambda_{l,k}-\frac{l-1}{n-l+1}) \times q_{r_{k+1}} -\frac{1}{2^k} \\
 	 &= \lambda_{l,k} \times 1 + (1-\lambda_{l,k}) \times q_{r_{k+1}} -\frac{l-1}{n-l+1} -\frac{1}{2^k}
 \end{align*}
 
 Using the lower bound on the length of the first substage in (\ref{eq:kthtok+1thstagetransition}), we get that,
 \begin{align*}
 \frac{l-1}{n-l+1} \leq \frac{k-1}{n-k+1} \leq \frac{k}{n-k} \leq \frac{1}{2^k}.
 \end{align*}

 Therefore,
\begin{align*}
	 \frac{1}{l\log v(k+1)} H(\mathbb{P}_l^{k+1}(\cdot,1,n)) &\geq \lambda_{l,k} \times 1 + (1-\lambda_{l,k}) \times q_{r_{k+1}} - \frac{1}{2^k}-\frac{1}{2^k}\\
	 &\geq q_{r_{k+1}} -  \frac{1}{2^{k-1}}.
\end{align*}

The above holds for every $l \leq k$ and $n \geq F_k^2 + L_{v(k+1),v^*(k+1)}(k,\delta_{k+1}(2^{-k})/2)$. Thus, we have shown that for every $l \leq k$ and $n \geq F_k^2$,
\begin{align*}
	 \frac{1}{l\log v(k+1)} H(\mathbb{P}_l^{k+1}(\cdot,1,n)) &\geq q_{r_{k+1}} -  \frac{1}{2^{k-1}}.
\end{align*}
From the above condition we get that the requirement $\mathcal{T}_{k,2}$ is satisfied for every $k \geq 1$.	
\end{proof}

\section{Discussion and Open Questions}
It is open whether our main results are true in the setting of
finite-state strong dimension (\cite{athreya2007effective},
\cite{bourke2005entropy}). In particular: \textit{Does there exist an
  absolutely strong dimensioned number with finite-state strong
  dimension strictly between $0$ and $1$?}. The strong dimension of
$\xi$ is $1$. It is unclear how to modify our construction to bound
the limit superior of the block entropies away from $1$. This is
because while we control the block entropies in base $v(k)$ during
stage $k$, the block entropies in all bases with $k'<k$ and
$v(k')\not\sim v(k)$ are converging to $1$ (since the exponential
averages in these bases are \emph{getting smaller} after every
individual step within stage $k$). This behavior seems to be an
essential feature of constructions based on Schmidt's method
\cite{Schmidt62}. Hence, the question regarding \emph{absolute strong
dimension}, if such a number exists, may require new construction
techniques that are capable of stabilizing block entropies in
non-equivalent bases simultaneously around values strictly between $0$
and $1$.

\section{Table of notations and terminology}
\label{sec:tableofnotationsandterminology}

\begin{longtable}{| >{\centering\arraybackslash} m{4cm} | >{\centering\arraybackslash} m{6cm} | >{\centering\arraybackslash} m{6cm} |}
\hline
\textbf{Notation / Terminology} & \textbf{Meaning} & \textbf{Page of definition/description}\\
\hline 

$a_m$ & \emph{Starting index} of the range of digits fixed during step number $m$ in base $u(m)$ & Page \pageref{text:amdefinition} \\
\hline

$\alpha(r,s)$& Constant corresponding to bases $r$ and $s$ from Lemma \ref{lem:sinbound} & Page \pageref{lem:sinbound} \\
\hline

$A_m$ & The function involving exponential sums which is \emph{minimized} during step number $m$ of the construction & Page \pageref{eq:amxdefinition} \\
\hline

$b_m$ & \emph{Ending index} of the range of digits fixed during step number $m$ in base $u(m)$ & Page \pageref{text:bmdefinition} \\
\hline

$\beta_m$ & Constants defined for every $m$ using a given good sequence of natural numbers in Definition \ref{def:goodsequence} & Page \pageref{def:goodsequence}\\
\hline

$\beta'_m$ & Constants defined for every $m$ using a given good sequence of natural numbers as $\beta'_m=\beta_m/2$ & Page \pageref{text:betaidefinition}\\
\hline

$B_d(x)$ & The open neighborhood of radius $d$ around $x$ & Page \pageref{text:bdxdefinition} \\
\hline

$C_b$ & Constant from Theorem \ref{thm:lowdiscrepancytheorem} & Page \pageref{thm:lowdiscrepancytheorem}\\
\hline

$\chi_w $ & The characteristic function of string $w$ & Page \pageref{text:chiwdefinition}\\
\hline

\emph{``Criterion"}/\emph{``Criteria"}& The rules according to which $\xi_m$ is chosen in each individual step of the construction. They are defined in section \ref{subsec:schmidtsmethod}  & Page \pageref{text:criteria} \\
\hline

$d_b$ & The denominator of $q_b$ in the reduced form &  Page \pageref{text:ebdbdefinition}\\
\hline

$\delta$ & Constant which depends only on $u(1)$ from Lemma \ref{lem:ambound} & Page \pageref{lem:ambound}\\
\hline

$\delta_{k}(\epsilon)$ & Constant defined in section \ref{subsec:firstsubstage} & Page \pageref{text:deltakepsilondefinition} \\
\hline

$D^b_n(x)$ & The discrepancy function & Page \pageref{text:dbndefinition}\\
\hline

$\dim^b_{FS}(\xi)$ & The base-$b$ finite-state dimension of $\xi$ (Definition \ref{def:finitestatedimension}) & Page \pageref{def:finitestatedimension} \\
\hline

$e_b$  & The numerator of $q_b$ in the reduced form &  Page \pageref{text:ebdbdefinition}\\
\hline

$e(x)$ & The exponential function $e(x)=e(2\pi i x)$ & Page \pageref{text:exdefinition}\\
\hline

$F_k^1$ & The final digit in $X(k)$ fixed during the first substage of stage $k$ & Page \pageref{text:fk1definition} \\
\hline

$F_k^2$ & The final digit in $X(k)$ fixed during the second substage of stage $k$ & Page \pageref{text:fk2definition} \\
\hline

$\mathcal{F}_k$ & One of the end of substage requirements & Page \pageref{text:fkdefinition} \\
\hline

$\gamma(\epsilon,b,k)$ & Constant defined for every $\epsilon>0$, base $b$ and $k \in \N$ from Lemma \ref{lem:smallweylaverageimpliesentropy1}& Page \pageref{lem:smallweylaverageimpliesentropy1} \\
\hline

\emph{``Good sequence of natural numbers"} & See Definition \ref{def:goodsequence} & Page \pageref{def:goodsequence} \\
\hline

$H_l^b(w)$ & The $l$-length block entropy over $w$ (Definition \ref{def:finitestatedimension}) & Page \pageref{def:finitestatedimension}\\
\hline

$I_b(w)$ & Interval representing all numbers having base-$b$ expansion starting with $w$ & Page \pageref{text:ibwdefinition}\\
\hline

$I_k^1$ & The initial digit in $X(k)$ fixed during the first substage of stage $k$ & Page \pageref{text:ik1definition} \\
\hline

$I_k^2$ & The initial digit in $X(k)$ fixed during the second substage of stage $k$ & Page \pageref{text:ik2definition} \\
\hline

$L_{b,j}(k,\epsilon)$ &  Constant defined for every base $b$, $j \leq b$, $k \in \N$ and $\epsilon>0$ from Corollary \ref{cor:lowdiscrepancycombinationnormality} & Page \pageref{cor:lowdiscrepancycombinationnormality} \\
\hline

$L'_{b,j}(w,\epsilon)$& Constant defined for every base $b$, $j \leq b$, $w \in \Sigma_j^*$ and $\epsilon>0$ from Lemma \ref{lem:lowdiscrepancycombinationnormality} & Page \pageref{lem:lowdiscrepancycombinationnormality} \\
\hline

$\langle m \rangle$ & Quantity defined in section \ref{subsec:schmidtsmethod} & Page \pageref{text:anglemdefinition} \\
\hline

$\langle m; r \rangle$ & Quantity defined in section \ref{subsec:schmidtsmethod} & Page \pageref{text:anglemrdefinition} \\
\hline

$M(\epsilon,b)$ & Constant defined for every base $b$ and $\epsilon>0$ from Lemma \ref{lem:exponentialtermsstrongbound} & Page \pageref{lem:exponentialtermsstrongbound} \\
\hline

$N_b(\epsilon)$ & Constant from Lemma \ref{lem:lowdiscrepancyfinitestrings} & Page \pageref{lem:lowdiscrepancyfinitestrings} \\
\hline

$N(z,w)$ & The occurrence count of string $z$ in $w$ (Definition \ref{def:occurrencecountandprobability}) & Page \pageref{def:occurrencecountandprobability} \\
\hline

$p$ & Function from $p: \N \to \N$ used to identify the \emph{appropriate} sub-alphabets to be used during the first substages. The role of the sequence in the construction is described in section \ref{subsec:schmidtsmethod}. The exact function used in the proof of Theorem \ref{thm:maintheorem} is given in section \ref{sec:overview} & See Pages \pageref{text:pdefinition1} and \pageref{text:pdefinition2} \\
\hline

$P_k^1$ & The final step number in the first substage of stage $k$ & Page \pageref{text:pk1definition} \\
\hline

$P_k^2$ & The final step number in the second substage of stage $k$ & Page \pageref{text:pk2definition} \\
\hline

$P(z,w)$ & The occurrence probability of string $z$ in $w$ (Definition \ref{def:occurrencecountandprobability}) & Page \pageref{def:occurrencecountandprobability} \\
\hline

$\langle q_b \rangle_{b=1}^{\infty}$ & Sequence of rational dimensions from Theorem \ref{thm:maintheorem} & Page \pageref{thm:maintheorem} \\
\hline

$\langle r_k \rangle_{k=1}^{\infty}$ & Sequence of bases satisfying certain desirable properties chosen as the basis for the construction & Page \pageref{text:rkdefinition} \\
\hline

$R^b(x,n,\alpha_1,\alpha_2)$ & A measure of discrepancy with respect to the interval $(\alpha_1,\alpha_2)$ & Page \pageref{text:rbdefinition} \\
\hline

\emph{``Requirements"} & The six requirements $\mathcal{F}_k$, $\mathcal{S}_{k,1}$,  $\mathcal{S}_{k,2}$, $\mathcal{R}_k$,  $\mathcal{T}_{k,1}$,  $\mathcal{T}_{k,2}$, which if satisfied implies the proof of Theorem \ref{thm:maintheorem}. All the requirements are described in section \ref{sec:overview}. The proofs that the requirements are met are given in section \ref{sec:verification} & See Pages \pageref{text:requirementsbegin} (statements) and \pageref{sec:verification} (verification) \\
\hline

$\mathcal{R}_k$ & Requirement regarding the stability of non-equivalent base entropies & Page \pageref{text:requirementrkdefinition} \\
\hline

$\sigma_m(\lambda)$ & Set of numbers defined in section \ref{subsec:schmidtsmethod} & Page \pageref{text:sigmamlambdadefinition}\\
\hline

$\sigma^*_m(\lambda)$ & Set of numbers defined in section \ref{subsec:schmidtsmethod} & Page \pageref{text:sigmastarmlambdadefinition}\\
\hline

$\mathcal{S}_{k,1}$ & One of the end of substage requirements & Page \pageref{text:sk1definition} \\
\hline

$\mathcal{S}_{k,2}$ & One of the end of substage requirements & Page \pageref{text:sk2definition} \\
\hline

$\Sigma$ & Finite alphabet & Page \pageref{text:sigmadefinition}\\
\hline

$\Sigma_b$ & Base-$b$ alphabet & Page \pageref{text:sigmabdefinition} \\
\hline

\emph{``Stages"}& Each stage in our construction consists of consecutive stretches of \emph{steps}. An overview of the stages is given in section \ref{sec:overview} and the full details are given section \ref{sec:mainconstruction}  & See Pages \pageref{sec:overview} and \pageref{sec:mainconstruction} \\
\hline

\emph{``Steps"}& Denotes an \emph{individual step} from Schmidt's construction method from \cite{Schmidt62}. The details of an individual step are described in section \ref{subsec:schmidtsmethod}& Page \pageref{subsec:schmidtsmethod}\\
\hline

\emph{``Substages"}& Each stage in our construction is divided into two substages -the \emph{first substage} and the \emph{second substage}. An overview of the substages are given in section \ref{sec:overview}. See sections \ref{subsec:firstsubstage} and \ref{subsec:secondsubstage} for the full details of the substages. & Page \pageref{subsec:secondsubstage} \\
\hline

\emph{``Substage conditions"}& The conditions governing the lengths of the substages. The conditions for the first and second substage are given in section \ref{subsec:firstsubstage} and \ref{subsec:secondsubstage} respectively & See Pages  \pageref{text:firstsubstageconditions} and \pageref{text:secondsubstageconditions}\\
\hline

$T(\epsilon,b,k)$ & Constant defined for every $\epsilon>0$, base $b$ and $k \in \N$ from Lemma \ref{lem:smallweylaverageimpliesentropy1}& Page \pageref{lem:smallweylaverageimpliesentropy1} \\
\hline

$\mathcal{T}_{k,1}$ & One of the transition requirements & Page \pageref{text:tk1definition} \\
\hline

$\mathcal{T}_{k,2}$ & One of the transition requirements & Page \pageref{text:tk2definition} \\
\hline

$\langle u(m) \rangle_{m=1}^{\infty}$ & The sequence representing the bases in which digits are fixed at each individual step of the construction. The construction of the exact sequence used in Proof of Theorem \ref{thm:maintheorem} is  given in Section \ref{sec:mainconstruction} & Page \pageref{text:umdefinition} \\
\hline

$v(k)$ & The base in which digits are fixed during the $k$\textsuperscript{th} stage of the construction & Page \pageref{text:vkdefinition} \\
\hline

$v^*(k)$ & The \emph{sub-base} of $v(k)$ from whose elements are used to fix digits during the first substage of stage $k$ & Page \pageref{text:vstarkdefinition} \\
\hline

$v_b(w)$ & The rational number with base-$b$ expansion $w0^\infty$ & Page \pageref{text:vbwdefinition}\\
\hline

$w_1^n$ & Prefix containing the first $n$ bits of a finite string $w$ & Page \pageref{text:w1ndefinition}\\
\hline

$\xi_m$ & The rational number obtained at the end of step $m$ after fixing digits appropriately & Page \pageref{text:ximdefinition} \\
\hline

$X(k)$ & The infinite string representing the base-$v(k)$ expansion of the limit number $\xi$ & Page \pageref{text:xkdefinition} \\
\hline

$X_1^n$ & Prefix containing the first $n$ bits of an infinite string $X$ & Page \pageref{text:x1ndefinition}\\
\hline

\end{longtable}

\bibliographystyle{plainurl}
\bibliography{fair001,main,random}

\end{document}